\definecolor{shadecolor}{gray}{1.00}
\definecolor{ddarkgray}{gray}{0.75}
\definecolor{darkgray}{gray}{0.30}
\definecolor{light-gray}{gray}{0.87}
\newtheorem*{conjecture*}{Conjecture}
\newcommand{\etc}{\emph{etc}}
\newcommand{\ie}{\emph{i.e.}\xspace}
\newcommand{\Ie}{\emph{I.e.}\xspace}
\newcommand{\eg}{\emph{e.g.}\xspace}
\newcommand{\Wlog}{\emph{w.l.o.g.}\xspace}
\newcommand{\cf}{\textit{cf.}\xspace}
\newcommand{\wrt}{\emph{wrt.}\xspace}
\newcommand{\Iff}{\emph{iff}\xspace}
\newcommand{\yada}{Detailed proof is in the supplementary
  material.\xspace}
\theoremstyle{plain}
{\unskip\nobreak\hskip 1em plus 1fil\nobreak$\square$
\parfillskip=0pt%
\endtrivlist}
\newcommand{\racerd}{\tname{RacerD}}
\newcommand{\racerdx}{\tname{RacerDX}}
\newcommand{\chord}{\tname{Chord}}
\newcommand{\dotminus}{\mathbin{\text{\@dotminus}}}
\newcommand{\@dotminus}{%
  \ooalign{\hidewidth\raise1ex\hbox{.}\hidewidth\cr$\m@th-$\cr}%
}
\newcommand{\pow}[1]{\wp({#1})}
\newcommand{\set}[1]{\left\{{#1}\right\}}
\newcommand{\defeq}{=_{\textrm{\scriptsize{def}}}}
\newcommand{\partialfn}{\rightharpoonup}
\newcommand{\finpartialfn}{\partialfn_{\textrm{\tiny fin}}}
\newcommand{\wb}{\mathsf{WB}(\e_{1 \dots n})}
\newcommand{\wcup}{\cup_1}
\newcommand{\nil}{\mathsf{nil}}
\newcommand{\dom}[1]{\mathrm{dom}\left(#1\right)}
\newcommand{\sem}[1]{\llbracket#1\rrbracket}
\newcommand{\loc}{{\sf Loc}}
\newcommand{\stacks}{{\sf Stack}}
\newcommand{\heaps}{{\sf Heap}}
\newcommand{\progstates}{\mathsf{State}}
\newcommand{\locks}{{\sf Locks}}
\newcommand{\Addr}{\mathsf{Addr}}
\newcommand{\removelatexerror}{\let\@latex@error\@gobble}
\definecolor{pblue}{rgb}{0.13,0.13,1}
\definecolor{pgreen}{rgb}{0,0.5,0}
\definecolor{pred}{rgb}{0.9,0,0}
\definecolor{pgrey}{rgb}{0.46,0.45,0.48}
\definecolor{ckeyword}{HTML}{7F0055}
\definecolor{ccomment}{HTML}{3F7F5F}
\definecolor{cnumber}{HTML}{2A0099}
\lstdefinelanguage{Java}{
  keywords={new, return, public, this, if, else, then, private,
    push, pop, lock, unlock, while,
    synchronized, final, class, new, protected, extends, implements, import,
    volatile, null, static},
  ndkeywords={bool, int, void},
  showspaces=false,
  showtabs=false,
  breaklines=true,
  showstringspaces=false,
  breakatwhitespace=true,
  lineskip=-0.6pt,
  morecomment=[l]{//}, 
  morecomment=[s]{/*}{*/}, 
  basewidth={0.54em, 0.4em},%
  basicstyle=\footnotesize\ttfamily,
  keywordstyle={\color{ckeyword}\ttfamily\bfseries},
  ndkeywordstyle={\color{pblue}\ttfamily\bfseries},
  commentstyle={\color{ccomment}\itshape},
  stringstyle=\color{green},
  moredelim=[il][\textcolor{pgrey}]{$$},
  numberstyle=\scriptsize\sf,
  xleftmargin=1em,
  moredelim=[is][\textcolor{pgrey}]{\%\%}{\%\%}
}
\lstdefinestyle{number}{
  numbers=left,
  numberstyle=\scriptsize\em,
  xleftmargin=1em
}
\newcommand{\jcode}[1]{\lstinline[language=Java,basicstyle=\small\ttfamily]{#1}}
\newcommand{\code}[1]{\jcode{#1}}
\newcommand{\many}[1]{\overline{#1}}
\newcommand{\Method}{\mathsf{Method}}
\newcommand{\Exp}{\mathsf{Exp}}
\newcommand{\Var}{\mathsf{Var}}
\newcommand{\Stmt}{\mathsf{Stmt}}
\newcommand{\Stmtr}{\mathsf{Stmt}^{\!*}}
\newcommand{\CStmt}{\mathsf{CStmt}}
\newcommand{\Program}{\mathsf{Program}}
\newcommand{\FieldName}{\mathsf{Field}}
\newcommand{\MethodName}{\mathsf{Method}}
\newcommand{\angled}[1]{\left\langle{#1}\right\rangle}
\newcommand{\astate}{\varsigma}
\newcommand{\mbody}{\mathsf{mbody}}
\newcommand{\Nat}{\mathbb{N}}
\newcommand{\eqdef}{\triangleq}
\newcommand{\e}{\mathsf{e}}
\newcommand{\f}{\mathsf{f}}
\newcommand{\g}{\mathsf{g}}
\newcommand{\fs}{\bar{\f}}
\newcommand{\gs}{\bar{\g}}
\newcommand{\m}{\mathsf{m}}
\newcommand{\body}[1]{\mbody(#1)}
\newcommand{\argm}{\mathtt{arg}}
\newcommand{\arity}[1]{\#(#1)}
\newcommand{\lval}[2]{\lfloor #1 \rfloor_{#2}}
\newcommand{\eval}[2]{\llbracket #1 \rrbracket_{#2}}
\newcommand{\locn}[1]{\mathsf{locn}({#1})}
\newcommand{\prefix}{\preccurlyeq}
\newcommand{\properprefix}{\prec}
\newcommand{\rootvar}[1]{\mathsf{root}(#1)}
\newcommand{\trunc}[1]{\mathsf{trunc}(#1)}
\newcommand{\pathimg}[3]{{#1}\!\!\downarrow^{#2}_{#3}}
\newcommand{\fix}[1]{\mathsf{fix}~{#1}}
\newcommand{\WF}{\mathcal{F}}
\newcommand{\Path}{\mathsf{Path}}
\newcommand{\tname}[1]{\textsc{#1}\xspace}
\newcommand{\pname}[1]{\textsf{#1}\xspace}
\newcommand{\cmd}[1]{\mathtt{#1}}
\newcommand{\cmdskip}{\cmd{skip}}
\newcommand{\cmdalloc}{{\cmd{new}}()}
\newcommand{\cmdif}[3]{\cmd{{if}}\,{#1}\;{{\cmd{then}}}\;{#2}\; {{\cmd{else}}}\;{#3}}
\newcommand{\cmdwhile}[2]{{{\cmd{while}}}\,{#1}\; {{\cmd{do}}}\;{#2}}
\newcommand{\cmdpush}[1]{{\cmd{push}}(#1)}
\newcommand{\cmdpop}[1]{{\cmd{pop}}(#1)}
\newcommand{\cmdlock}{{\cmd{lock}}()}
\newcommand{\cmdunlock}{{\cmd{unlock}}()}
\newcommand{\cons}{\dblcolon}
\newcommand{\snoc}{{\dblcolon}}
\newcommand{\incrlock}[1]{\mathsf{incr}(#1)}
\newcommand{\decrlock}[1]{\mathsf{decr}(#1)}
\newcommand{\addlock}[2]{\mathsf{add}(#1,#2)}
\newcommand{\accessessymb}{\mathsf{accs}}
\newcommand{\wobblysymb}{\mathsf{wobbly}}
\newcommand{\lockingsymb}{\mathsf{locking}}
\newcommand{\accesses}[2]{\accessessymb\; #1 \; #2}
\newcommand{\wobbly}[2]{\wobblysymb\; #1\; #2}
\newcommand{\locking}[2]{\lockingsymb\; #1\; #2}
\newcommand{\asem}[1]{\sem{#1}^{\#}}
\newcommand{\exec}{\mathsf{exec}{\text{\xspace}}}
\newcommand{\pre}{\sqsubseteq}
\newcommand{\tr}{\tau}
\newcommand{\Traces}{\mathcal{T}}
\newcommand{\TTraces}{\pow{\Traces}}
\newcommand{\Abs}{\mathcal{D}}
\newcommand{\Acc}{\mathcal{A}}
\newcommand{\Wob}{\mathcal{W}}
\newcommand{\Lock}{\mathcal{L}}
\newcommand{\ab}[1]{\widehat{#1}}
\newcommand{\lub}{\sqcup}
\newcommand{\Lub}{\bigsqcup}
\newcommand{\bota}{\bot_{\Abs}}
\newcommand{\last}[2]{\mathsf{last}(#1)}
\newcommand{\llast}[1]{\mathsf{last}(#1)}
\newcommand{\interl}[3]{\mathsf{interl}~#1~#2~\angled{#3}}
\newcommand{\append}{+\!\!\!+~}
\newenvironment{alemma}[1]
  {\inneralemma}
  {\endinneralemma}
\newenvironment{atheorem}[1]
  {\inneratheorem}
  {\endinneratheorem}
\newtheoremstyle{proofstyle} 
    {\topsep}                    
    {\topsep}                    
    {}                                
    {}                                
    {\itshape}                  
    {.}                              
    {.5em}                       
    {}  
\theoremstyle{proofstyle}
\newenvironment{refproof}[1]
  {\innerproof}
  {\endinnerproof}
\newcommand{\refp}[1]{the \lowercase{\emph{\ref{#1}}}}
\newcommand*{\QED}{\hfill\ensuremath{\Box}}%
\newcommand{\subst}[2]{{#1}^{#2}}
\newcommand{\fml}{\mathsf{fml}}
\newcommand{\fst}{\mathsf{fst}}
\newcommand{\ifext}[2]{#1}
\setlist[itemize]{leftmargin=*}
\setlist[enumerate]{leftmargin=*}
\begin{document}

\hypersetup{colorlinks,
  linkcolor=ACMDarkBlue,
  citecolor=ACMPurple,
  urlcolor=ACMDarkBlue,
  filecolor=ACMDarkBlue}

\setlength{\pdfpageheight}{\paperheight}
\setlength{\pdfpagewidth}{\paperwidth}

\title[A True Positives Theorem for a Static Race Detector]
{A True Positives Theorem for a Static Race Detector\\Extended Version}

\author{Nikos Gorogiannis}
\affiliation{%
  \institution{Facebook}
  \country{UK}
}
\affiliation{%
  \institution{Middlesex University London}
  \country{UK}
}
\email{nikosgorogiannis@fb.com}

 \author{Peter W. O'Hearn}
\affiliation{%
  \institution{Facebook}
  \country{UK}
}
\affiliation{%
  \institution{University College London}
  \country{UK}
}
\email{peteroh@fb.com}

\author{Ilya Sergey} \authornote{Work done while employed as a
  part-time contractor at Facebook.}
\affiliation{%
  \institution{Yale-NUS College}
    \country{Singapore}
}
\affiliation{%
  \institution{National University of Singapore}
    \country{Singapore}
}
\email{ilya.sergey@yale-nus.edu.sg}

\begin{abstract}

\racerd  is a static race detector that has been proven to be effective in engineering practice: it has seen thousands of data races fixed by developers before reaching production, and has supported the migration of Facebook's Android app rendering infrastructure from a single-threaded to a multi-threaded architecture.
We prove a True Positives Theorem stating that, under certain assumptions, 
an idealized theoretical version of the analysis {\em never reports a false positive\/}. We also provide an empirical evaluation of an implementation of this analysis, versus the original \racerd.

The theorem was motivated in the first case by the desire to understand the observation from production that \racerd was providing remarkably accurate signal to developers,
and then the theorem guided further analyzer design decisions.
Technically, our result can be seen as saying that the analysis computes an under-approximation of an over-approximation, which is the reverse of the more usual (over of under) situation in static analysis.
Until now, static analyzers that are effective in practice but unsound have
often been regarded as ad hoc; in contrast, we suggest
that, in the future, theorems of this variety might
be generally useful in understanding, justifying and designing effective
static analyses for bug catching.

%

\end{abstract}

%
%

\begin{CCSXML}
<ccs2012>
<concept_id>10003752.10010124.10010138.10010143</concept_id>
<concept>
<concept_desc>Theory of computation~Program analysis</concept_desc>
<concept_significance>500</concept_significance>
</concept>
<concept>
<concept_id>10011007.10011006.10011008.10011024.10011034</concept_id>
<concept_desc>Software and its engineering~Concurrent programming structures</concept_desc>
<concept_significance>500</concept_significance>
</concept>
</ccs2012>
\end{CCSXML}

\ccsdesc[500]{Theory of computation~Program analysis}
\ccsdesc[500]{Software and its engineering~Concurrent programming structures}%
%

\keywords{Concurrency, Static Analysis, Race Freedom, Abstract Interpretation}

\maketitle


\section{Context for the True Positives Theorem}
\label{sec:intro}

The purpose of this paper is to state and prove a theorem that
has come about by
reacting to surprising properties we observed of a   static program analysis that has been in production at Facebook for over a year.

The \racerd program analyzer searches for data races in Java programs, and it has had significantly more reported industrial impact than any other concurrency analysis that we are aware of.
It was released as open source in October of 2017, and the OOPSLA'18
paper by \citet{Blackshear-al:OOPSLA18} describes its design, and gives more details about its deployment. They report, for example, that over 2,500 concurrent data races found by \racerd have been fixed by Facebook developers, and that it has been used to support the conversion of Facebook's Android app rendering infrastructure from a single-threaded to a multi-threaded architecture.

\racerd's designers did not establish the formal properties of the analyzer, but it has been shown to be effective in practice. We wanted to understand this point from a theoretical point of view.  \racerd is not sound in the sense of computing an over-approximation of some abstraction of executions. Over-approximations support a theorem: if the analyzer says there are no bugs, then there are none. \Ie, there are no false negatives (when the program has no bugs), which is often considered as a ``soundness'' theorem.
\racerd favours reducing false positives over false negatives.  A design goal was to ``detect actionable races that developers find useful and respond to'' but ``no need to (provably) find them all''. In fact, it is very easy to generate artificial false negatives, but \citet{Blackshear-al:OOPSLA18} say that few have been reported in over a year of   \racerd in production.

One can react to this by saying that \racerd is simply an {\em ad hoc}\/, if effective, tool.  But, the tool does not heuristically filter out bug reports in order to reduce false positives. It first does   computations with an abstract domain, and then issues data race reports if any potential races are found according to the abstract domain. Its architecture is  like that of a principled analyzer based on abstract interpretation, \emph{even though it does not satisfy the usual soundness theorem}.
This suggests that saying \racerd is ad hoc because it does not satisfy a standard soundness theorem is somehow missing something: It would be better if a demonstrably-effective analyzer with a principled design came with a theoretical explanation, even if partial, for its effectiveness.  That is the research problem we set ourselves to solve in this work.

A natural question is if it is possible to actually modify \racerd so as to make it sound, without losing its effectiveness in generating \emph{signal}---actionable and useful data race reports that developers are keen to fix.
\racerd's initial design elided standard analysis techniques such as alias and escape analysis, on the grounds that this was consistent with the goal of reducing false positives.
To try to get closer to soundness our colleague Sam Blackshear, one \racerd's authors,  implemented an escape analysis to find race bugs due to locally declared references escaping their defining scope,  \ie, to reduce the false negatives.
The escape analysis led to too many false positives; it contradicted the goal of high signal, and was not put into production.
One of the current authors, Gorogiannis,  tried another tack to reduce false negatives: a simple alias analysis, to find races between distinct syntactic expressions which may denote the same lvalue.  Again the attempt caused too many false positives to make it to production.

Next we wondered: might there be a {\em different}\/ theorem getting to the heart of why \racerd works?
Because it is attempting to reduce false positives, a natural thing to try would be an \emph{under-approximation} theorem. This would imply that every report is a true positive. This theorem is false for the analyzer, because it is possible to artificially generate false positives.
One of the main reasons for this is that conditionals and loops are treated in a path-insensitive manner (\ie, join corresponds to taking the union of potential racy accesses across the different branches).

It seems plausible to modify \racerd to be under-approximate in two ways, but each of these has practical problems.  In the first way, one considers sets (disjunctions) of abstract states, and uses conjunction for interpreting \code{if} statements and disjunction at join points: this is like in symbolic execution for testing \cite{CadarS13}.  This would cause scaling challenges because of the path explosion problem. \racerd runs quickly, in minutes, on (modifications to) millions of lines of code, and its speed is important for delivering timely signal.
To make a much slower analysis would not be in the spirit of \racerd, and would not explain why the existing analyzer is effective.
The other way would be to use a meet (like intersection) while sticking with one abstract state per program point.  The problem here is that this prunes very many reports, so many that it would miss a great many bugs and (we reasoned)
would not be worth deploying in Continuous Integration.

These considerations led us to the following hypothesis, which we would like to validate:

\begin{conjecture*}[True Positives (TP) Theorem]
Under certain assumptions, the analyzer reports \emph{no false positives}.
\end{conjecture*}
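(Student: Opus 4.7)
The plan is to validate the conjecture by recasting the idealized \racerd analysis as an under-approximation of an over-approximation, following the hint in the abstract. First I would formalise a concrete interleaving semantics for a core concurrent Java-like language, defining a data race as a pair of concurrent unsynchronized accesses to the same memory location with at least one being a write. I would then introduce an intermediate ``collecting'' semantics that, at every program point, gathers the set of pairs \emph{(access, lockset-held-at-access)} realised along some reachable execution, and cast the idealized analyzer as an abstract interpretation over this collecting semantics.

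The next step is to pin down the \emph{certain assumptions} of the conjecture. I expect these to rule out exactly the known sources of spurious reports: (i) an absence-of-aliasing assumption so that syntactically distinct access paths denote distinct locations, (ii) a no-escape condition on locally allocated objects, and (iii) a path-consistency assumption ensuring that whenever an access to a given path occurs on any branch reaching a join point, it is protected by the same lockset, so that the union taken at join points does not invent phantom \emph{(access, lockset)} combinations. Formulating (iii) cleanly as a semantic condition that the analysis domain is expressive enough to certify is likely where the greatest care is needed.

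With these in place the proof splits in two. An over-approximation lemma shows that the abstract state at each program point contains every concrete \emph{(access, lockset)} pair realised there, proved by induction on the length of concrete traces using monotonicity of the transfer functions. An under-approximation lemma shows the converse: every \emph{(access, lockset)} pair in the abstract state at a point is realised by some concrete execution reaching that point, proved by structural induction on the control-flow graph, and it is here that assumptions (i)--(iii) are invoked to rule out phantoms at joins. The theorem then follows: if the analyzer flags a pair of accesses from distinct threads whose locksets are disjoint, the two witness traces can be interleaved by a standard scheduling argument to exhibit a concrete race, so the report is a true positive.

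The main obstacle will be the under-approximation lemma at join points. Path-insensitive join is by nature an over-approximating operation, so essentially all the mathematical content of the theorem lives in carving out assumptions under which join nevertheless remains realisable. The delicate part is to find a condition that is weak enough to cover the realistic code on which \racerd already performs well in production, yet strong enough to force the inductive witness construction through; I would expect the statement of (iii) to be iterated against the empirical evaluation so that the theoretical assumptions match the programs where the tool actually delivers high-signal reports.
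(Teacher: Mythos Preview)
Your proposal has the right high-level shape (under-of-over, completeness plus witness reconstruction), but it misses the two load-bearing ideas of the paper and misidentifies where the difficulty lies.

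First, your assumption~(iii) about ``path-consistency at joins'' is not how the paper handles the realisability problem. The paper's move is more radical and much cleaner: it takes \emph{non-deterministic control} (assumption~\ref{asm:control}) as part of the idealized language, so that every branch of every conditional and every loop body is genuinely reachable. With that, the join $\lub$ cannot manufacture phantom $(access,lockset)$ pairs, because each disjunct already comes from a real trace. Coupled with \emph{balanced locking} (assumption~\ref{asm:lock}), the lock component is identical on both branches, so the $\max$ in the join is exact. This is why the paper gets an \emph{equality} $\asem{C}\bota = \alpha(\sem{C}\angled{S,h,0})$ (Theorem~\ref{thm:completeness}), not merely an inclusion. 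Your plan to carve out a semantic condition (iii) under which a path-insensitive join is nevertheless realisable is both harder and unnecessary; the paper simply bakes realisability into the over-approximated concrete semantics.

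Second, and more seriously, you are missing the central device of \racerdx: \emph{stability} (non-wobbliness). The analyzer does not report every access it discovers; it tracks a set $W$ of paths that have been read or written, and \emph{suppresses} any race report on $\pi$ whenever some proper prefix of $\pi$ lies in $W$. Your assumption~(i) (``syntactically distinct access paths denote distinct locations'') tries to handle aliasing by fiat, but the paper instead \emph{constructs} a specific initial state in which the racy path is acyclic and disconnected, and then uses stability to prove this disconnectedness is \emph{preserved} along the trace (Lemmas~\ref{lem:stability-means-preservation-nocalls}--\ref{lem:call-stack}). That is where the real work is: showing that no store executed by either thread can redirect a prefix of $\pi$, so that at the moment of access both threads still resolve $\pi$ to the same address. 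Your ``standard scheduling argument'' skips exactly this step; interleaving is easy (one thread runs to just before its access with $L_1=0$, then the other), but proving $\lval{\pi_1}{s_1',h''}=\lval{\pi_2}{s_2',h''}$ after arbitrary heap mutation is the heart of Section~\ref{sec:nfp-reconstr} and depends entirely on the wobbliness filter you have not introduced.
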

An initial version of the assumptions was described as follows.
Consider an idealized language, IL, in which programs have only
\emph{non-deterministic} choice in conditionals and loops, and where
there no is recursion. Then the analyzer should only report true
positives for that language.

The absence of booleans in conditionals in IL reflects an analyzer
assumption that the code it will apply (well) to will use
coarse-grained locking and not (much) fine-grained synchronization,
where one would (say) race or not dependent on boolean conditions. In
Facebook's Android code we do find fine-grained concurrency. The
\pname{Litho} concurrent UI library\footnote{Available at
  \url{https://fblitho.com}.} has implementations of ownership
transfer and double-checked locking, both of which rely on boolean
conditions for concurrency control.
This kind of code can lead to false positives for \racerd, but we did not regard that as a mistake in \racerd. For, the Android engineers advised us to concentrate on coarse grained locking, which is used in the vast majority of product code at Facebook.
For instance, we rarely observed code calling into \pname{Litho} which selects a lock conditionally based on the value of a mutable field.  Thus, the TP theorem based on assumptions reflected in IL could, if true, be a way of explaining why the analyzer reports few false positives, even though it uses join for \code{if}-statements.

The no-recursion condition is there because we want to say that the analyzer gets the races right except for when divergence makes a piece of code impossible to reach. If a data race detector reports a bug on a memory access that comes after a divergent statement (\code{diverge; acc}) then this would be a false positive, but we would not blame   data race reporting, but rather failure to recognize the divergence which makes the memory access unreachable.  The no-recursion requirement is just a form of {\em separation of concerns}\/ to help theory focus on explaining the data race reporting aspect. Note that if non-deterministic choices are the only booleans in \code{while}-loops, then such loops do not necessarily loop forever; that is why we forbid recursion and not loops here. We give a full description of the assumptions for the TP theorem in Section \ref{sec:language}.

One can see our True Positives Theorem as establishing an
\emph{under-approximation} of an \emph{over-approximation} in a
suitably positioned \emph{context}. Start with a program without
recursion (context). Replace boolean values in conditionals and loop
statements by non-determinism (over-approximation). Finally, report
only true positives in that over-approximation (\ie, under-approximate
the potential bugs in that over-approximation).
The more usual position in program analysis is to go for the reverse
decomposition, an over-approximation of an under-approximation, to
account for soundness relative to assumptions. The ``under'' comes
about from ignored behaviors -- \eg, if an analyzer does not deal well
with a particular feature, such as reflection, prune those behaviours
involving reflection for stating the soundness property-- and the
``over'' comes from a usual sound-for-bug-prevention analysis, but
only \wrt this under-approximate model. In contrast, our notion of
\emph{under-of-over} seems like a good way to think about a static
analysis for bug catching.

We explained above how we arrived at the statement of the TP Theorem by considering properties of a particular analyzer, paired with considerations of its effectiveness. We don't claim that to get an effective race detector you {\em must}\/ end up in the position of the TP theorem.
It might be possible to obtain a demonstrably useful-in-practice data race detector that satisfies a standard soundness (over-approximation) theorem.
It might also be possible to find one satisfying a standard (unconditional) under-approximation theorem. In fact, there exist a number of dynamic data race detectors that
are very promising, and at least one~\cite{TSAN} that is widely deployed in practice. Both of these directions are deserving of further research.
The work here is not in conflict with these valuable research directions, but would simply complement them by providing new insights on designing \emph{static} race analyzers.

Now, the True Positives Theorem was not actually true of \racerd when we formulated it, and apparently it is still not. But, subsequent to our discovery of the theorem, the authors of \racerd took it as a guiding principle.
%
For example, they implemented a ``deep ownership'' assumption~\cite{Clarke-Drossopoulou:OOPSLA02}: if an access path, say \code{x.f}, is owned (\ie, accessible through just one object instance), then so are all extensions, \eg~\code{x.f.g}.
The analyzer would never report a race for an owned access.
Somewhat surprisingly, even though the deep ownership assumption goes against soundness for bug prevention (over-approximation), it is compatible with the goal of reducing false positives and commonly holds in practice.




We have proven a version of the TP Theorem for \racerdx, a modified version of \racerd analyzer, which is based on IL and is not too far removed from the original.
%
%
In this paper we formulate and prove the TP theorem for an idealized theoretical language, and we carry out an empirical evaluation of an implementation of \racerdx in relation to the in-production analyzer (\racerd), \wrt change in the the number of reports produced.
The distance between \racerd and \racerdx is such that the latter, which has a precise theorem attached to it, makes on the order of 10-57\% fewer data race reports on our evaluation suite.
To switch \racerd for \racerdx in production would require confidence  concerning amount of true bugs in their difference,
or other factors (\eg, simplicity of maintenance) which require engineering judgement; in general, replacing well-performing in-production software requires strong reasons.   But, our experiments show how the basic design of \racerd is not far from an analyzer satisfying a precise theorem and, as we now explain, the extent of their difference is not so important for the broader significance of our results.
%




\paragraph{Broader Significance}
Our starting point was a desire to understand theoretically why the specific analyzer \racerd is effective in practice, and
we believe that our results go some way towards achieving this, but it appears that they have broader significance.
They exemplify a way of studying program analysis tools designed for catching bugs rather than ensuring their absence.

The first point we emphasize is the following:
\begin{itemize}
\item [] {\em Unsound (and incomplete) static analyses can be principled, satisfying meaningful theorems that help to understand their behaviour and guide their design.}\/
\end{itemize}
The concept `sound' seems to have been taken as almost a synonym for `principled' in some branches of the research community, and colleagues we have presented our results to have often reacted with surprise that such a theorem is  possible.
Representative are the remarks of one of the POPL referees: `It is remarkable that the authors manage to prove any theorem at all about an unsound analysis' and the referees collectively who said `proving a theorem for an unsound static analysis is a first'.
Note that our our analysis is neither sound for showing the absence of
bugs nor for bug finding (every report is a true positive: the static
analysis often refers to this as `complete', where in symbolic as well
as concrete testing it is sometimes what is meant by `sound').
The potential for theoretical insights on unsound and incomplete analyses is perhaps less widely appreciated than it could be. Such analyses are not necessarily \emph{ad hoc} (although they can be).

Just to avoid misunderstanding, note that we are not talking about the
common situation of where an analyzer is unsound generally but
designed to be sound under assumptions. As described above, this
situation can be understood as soundness (over-approximation) \wrt a
different model than the usual concrete semantics, typically an
under-approximation of the concrete semantics. By striving for
minimizing false positives instead of false negatives, our analyzer is
purposely designed to be unsound, and this difference is reflected in
the {\em shape}\/ of our theorem being under-of-over rather than the
converse.

Much more significant than saying that {\em there exists}\/ an unsound (and incomplete) analysis with a theorem, would be if we could make the above claim for an analyzer that was useful in the sense of helping people.

Here, the fact that \racerd (the analyzer deployed in production, without a theorem) and \racerdx (its close cousin, with a theorem) are not the same  at first glance seems problematic. However, our
experiments suggest strongly that, if \racerdx rather than \racerd had been
put into production originally, it would have found thousands of bugs,
far outstripping the (publicly reported) impact of all previous static race detectors.
We infer:
\begin{itemize}
\item [] {\em One can have an unsound (and incomplete) but effective static analysis, which has significant industrial impact, and
which is supported by
a meaningful theorem; in our case the TP theorem.}
\end{itemize}
The discussion of this second point is admittedly based on counter-factual reasoning (what if we `had' deployed \racerdx instead of \racerd?), but the possibility of it being a false argument seems to be vanishingly small. The point is so powerful, and accepted by the Infer static analysis team members even outside the authors,\footnote{\racerd and \racerdx are implemented using the Infer.AI abstract interpretation framework; see \url{https://fbinfer.com}.} that the TP theorem is guiding the construction of new analysers at Facebook.

Thus, it seems that our results could be more broadly significant than the initial but perhaps worthwhile goal of `understand
why \racerd is effective'. The fact that \racerd and \racerdx don't
coincide is not so important for the larger point, though it would
have made for a neat story if they had. Such
neat stories have so far been rather rare when developing
analysers under industrial rather than scientific constraints.
Future  analysers, including ones being developed at Facebook, will possibly adhere more to the `neat' story.

\paragraph{Paper Outline}
In the remainder of the paper we give an overview of the intuition and
reasoning principles for identifying and reporting concurrent data
races in \racerd (Section~\ref{sec:overview}). We then describe an
idealized language (IL), whose set of features matches the common
conventions followed in production Java code, as well as IL's
over-approximating semantics (Section~\ref{sec:language}).
We then provide a formal definition of \racerdx, a modified \racerd analysis,
tailored for reporting no false positives, in the framework of abstract
interpretation (Section~\ref{sec:analysis}).
Our formal development culminates with a proof of the True Positives
Theorem, coming in two parts: in Section~\ref{sec:nfp-comp} we prove
completeness of \racerdx \wrt to its abstraction and in
Section~\ref{sec:nfp-reconstr} we show how to reconstruct provably
racy executions from the analysis results.
In Section~\ref{sec:evaluation}, we discuss the implementation of the
theoretical analyzer described in Section~\ref{sec:analysis}.
We then present an evaluation of \racerdx which compares it with
\racerd on a set of real-world Java projects.
We discuss related work in Section~\ref{sec:related} and elaborate on the
common formal guarantees considered for static concurrency analyses,
positioning \racerdx amongst existing tools and approaches.


\section{Overview}
\label{sec:overview}


A textbook definition of a data race is somewhat low-level: a race is
caused by potentially concurrent operations on a \emph{shared memory
  location}, of which at least one is a
write~\cite{Herlihy-Shavit:08}.

Data races in object-based languages with a language-provided
synchronization mechanism (\eg, Objective C or Java) can be described
more conveniently, for the sake of being understood and fixed by the
programmer, in terms of the program's \emph{syntax} (rather than
memory), via \emph{access paths}, which would serve as runtime race
``witnesses'', by referring to the dynamic semantics of concurrent
executions.\footnote{We will provide a more rigorous definition in
  Section~\ref{sec:language}.}
Given two programs (\eg, calls to methods of the same instance of a
Java class), $C_1$ and $C_2$, there is a \emph{data race} between
$C_1$ and $C_2$ if one can construct their concurrent execution trace
$\tr$, and identify two \emph{access paths}, $\pi_1$ and $\pi_2$
(represented as field-dereferencing chains $x.\f_1.\ldots,\f_n$, where
$x$ is a variable or \code{this}), in $C_1$ and $C_2$,
respectively, such that:
\begin{itemize}[leftmargin=1.5em]
\item[(a)] at some point of $\tr$ both $\pi_1$ and $\pi_2$ are involved
  into two concurrent operations, at least one of which is a write,
  while both $\pi_1$ and $\pi_2$ point to the same shared memory
  location;

\item[(b)] the sets of \emph{locks} held by $C_1$ and $C_2$ at
  that execution point respectively, are disjoint.
\end{itemize}

This ``definition'' of a data race provides a lot of freedom for
substantiating its components. Of paramount importance are ($i$)
the considered set of pairs of programs that can race, ($ii$) the
assumptions about the \emph{initial state} of $C_1$ and $C_2$,
and ($iii$) the notion of the \emph{dynamic semantics}, employed
for constructing concurrent execution traces.
A choice of those determines what is considered to be a race and
reflects some assumptions about program executions. For instance,
accounting for the ``worst possible configuration'' (\eg, arbitrary
initial state and uncontrolled aliasing) would make the problem of
sound (\ie, over-approximating) reasoning about data races
non-tractable in practice, or render its results too imprecise to be
useful.

\subsection{Race Detection in \racerd}

\begin{wrapfigure}[12]{r}{0.38\textwidth}
\begin{center}
\vspace{-14pt}
{\footnotesize{
\begin{lstlisting}[basicstyle=\scriptsize\ttfamily,numbers=left,xleftmargin=3em]
@ThreadSafe
public class Dodo {
  private Dodo dee;

  public void zap(Dodo d) {
    synchronized (this) {
      System.out.println(d.dee);
    }
  }
  public void zup(Dodo d) {
    d.dee = new Dodo();
  }
}
\end{lstlisting}
}}
\end{center}
\vspace{-10pt}
\caption{A racy Java class.}
\label{fig:racy}
\end{wrapfigure}
As a specific set of design choices \wrt identifying data races in
terms of access paths, let us consider
\racerd~\cite{Blackshear-al:OOPSLA18}---a tool for static
compositional race detection by Facebook---and its take on a toy example in
Figure~\ref{fig:racy}.
The Java class \code{Dodo} has one \code{private} field \code{dee} and
two \code{public} methods, both manipulating \code{dee}'s state.
To instantiate the definition of a data race above, \racerd considers
all pairs of public method calls of the \emph{same} class instance as
concurrently running programs $C_1$ and $C_2$ from the definition
above~($i$).
To instantiate the initial state, it assumes that method parameters
of the same type (\eg, \code{Dodo}) may alias, thus, maximizing a
possibility of a race ($ii$). Finally, it assumes there is only one
reentrant lock in the entire program (\eg, \code{Dodo}'s
\code{this}), and over-approximates the dynamic semantics by taking
\emph{all} branches of the conditionals and entering every loop~($iii$).

With this setup, \racerd reports the class \code{Dodo} (annotated with
\code{javax.annotation.concurrent}'s \code{@ThreadSafe}) as racy, due
to concurrent unsynchronized accesses to the field \code{dee} in two
paths: a read from \code{d.dee} in the method \code{zap()} and the
write to \code{d.dee} in \code{zup()}---both detected based on the
assumption that the two \code{d}s can be run-time aliases, since
they share the same type. This assumption makes it easy to reconstruct
a concurrent execution of the two culprit methods, invoked with the
same object, which exhibit a race, thus making this report a
\emph{true positive} of the analysis.

\subsection{Fighting False Positives}
By its nature, \racerd is a bug detector, hence it sacrifices the
traditional concept of \emph{soundness} (the property customary for static
analyses stating that \emph{all} behaviors of interest are
detected~\cite{Cousot-Cousot:POPL79}) and, thus, may suffer \emph{false
  negatives} (\ie, miss races).\footnote{Although there were very few reports of races missed
  in Facebook's production code~\cite{Blackshear-al:OOPSLA18}.}
Instead, for the purposes of correctly detecting bugs, and minimizing
the time programmers spend investigating the reports, \racerd
emphasizes \emph{completeness}~\cite{Ranzato:VMCAI13} over soundness,
targeting what we are going to refer to as the \emph{True Positives
  Conjecture}---that is, if it reports a data race, one should be able
to find a witness execution and access paths exhibiting the
concurrency bug, given the assumptions made.

\begin{figure}[t]
\setlength{\belowcaptionskip}{-10pt}
\centering
\begin{tabular}{cc}
\begin{minipage}{0.45\linewidth}
\centering
{\footnotesize{
\begin{lstlisting}[basicstyle=\scriptsize\ttfamily,numbers=left,xleftmargin=2em]
class Bloop {
  public int f = 1;
}

class Burble {

  public void meps(Bloop b) {
    synchronized (this) {
      System.out.println(b.f);
    }
  }

  public void reps(Bloop b) {
    b.f = 42;
  }

  public void beps(Bloop b) {
    b = new Bloop();
    b.f = 239;
  }
}
\end{lstlisting}
}}
\vspace{8.5pt}
\caption{A Java class with a false race.}
\label{fig:burble}
\end{minipage}
&
\begin{minipage}{0.50\linewidth}
\centering
{\footnotesize{
\begin{lstlisting}[basicstyle=\scriptsize\ttfamily,numbers=left,firstnumber=22,   xleftmargin=2.5em,]
class Wurble {
  Wurble x = new Wurble();
  Bloop g = new Bloop();

  public void qwop(Wurble w) {
    zwup(w.x);
  }

  public void gwap(Wurble w) {
    synchronized (this) {
      System.out.println(w.x.g);
    }
  }

  private void zwup(Wurble w) {
    synchronized (this) {
      System.out.println(w.x.g);
    }
    w = new Wurble();
    w.g.f = 21;
  }
}
\end{lstlisting}
}}
\caption{A class with a false interprocedural race.}
\label{fig:wurble}
\end{minipage}
\end{tabular}
\end{figure}

As a next example, consider the Java class \code{Burble} in
Figure~\ref{fig:burble}. For the same reason as with \code{Dodo}, the
methods \code{meps} and \code{reps} race with each other when run with
aliased arguments. It is less obvious, however, whether \code{meps}
races with \code{beps}---and in fact they do not! The reason for that
is that \code{beps} reassigns a freshly allocated instance of
\code{Bloop} to the formal \code{b} before assigning to the field of
the latter, thus, effectively \emph{avoiding} a race with a concurrent
access to \code{b.f} in \code{meps}.

This phenomenon of ``destabilizing'' an access path in a potentially
racy program can be manifested both intra-procedurally (as in
\code{Burble}) and inter-procedurally. To wit, in another example in
Figure~\ref{fig:wurble}, the class \code{Wurble} demonstrates a
similar instance of a \emph{false race}, with the \code{private}
method \code{zwup()} ``destabilizing'' the path \code{w.g} by
assigning a newly allocated \code{Wurble} instance to \code{w}, thus,
ensuring that \code{qwop()} and \code{gwap()} avoid a race
with each other.

A sound static analyzer would typically be expected to report races
in both of these examples, corresponding to a loss of
precision. However, having a non-negligible number of \emph{false
  positives} is not something a practical bug detector can afford.

To avoid this loss of signal effectiveness, \racerd employes an
\emph{ownership tracking domain}~\cite{Naik-al:PLDI06,
  Flanagan-Freund:PLDI09}, used to record the variables and paths that
have been assigned a newly allocated object, thus remedying the
situation shown above.

Upon closer examination, we found that \racerd's abstract domain,
including that of ownership, was not enough to allow us to prove that
an access path resolves to the same address, before and after
execution. To wit, knowing that an access path \code{x.f.g} is
\emph{not} owned, does not guarantee that the lvalue it corresponds to
stays the same during execution. The reason we wanted this latter
property is that it is one of the simplest ways to exhibit a race:
once we have set up an initial state where a path resolves to a
certain address, and have shown that execution of $C_1~||~C_2$ does not
modify that address (\ie, the path to address is \emph{stable}), we
are in the position to uncoditionally say that if both programs access
that address, they will race. The answer we came up with is that of
\emph{stability}; its negation, instability (or \emph{wobbliness}),
over-approximates ownership.
%

Thus we pose the question: can we state the reasonable (\ie,
non-trivial) conditions under which we can in confidence state (\ie,
formally prove) that \emph{all} of \racerd's reported races are
\emph{true positives}?

We refer to this desirable result as the \textbf{True Positives
  Theorem} (TPT) for a static race detector, and in this paper we
deliver such a theorem for a version of \racerd (called \racerdx, using stability),
formulating a set of assumptions under which it holds, and assessing
their practical implications and impact on signal.

\subsection{A True Positives Theorem for \racerdx}
\label{sub:tpt}

Our main result enabling the TPT proof is defining an
\emph{over-approximating} concrete semantics and stating the
sufficient conditions, both reflecting the behavior of production
code, under which \racerdx, a modified version of \racerd, reports
\emph{no false data races}.
The \racerdx True Positives Theorem, which substantiates this
statement, builds on the following three pieces of the formal
development that together form the central theoretical contribution of
this work.

\subsubsection{An Over-Approximating Concrete Semantics and Practical
  Assumptions}
\racerdx is a flow- and path-insensitive static analysis, which goes in all
branches of conditional expressions and loops.
To account for this design choice while stating a formal completeness
result, we adopt a novel non-deterministic (single-threaded)
trace-collecting semantics that treats branch and loop guards
as non-deterministically valued variables and explores all
execution sequences (Section~\ref{sec:language}). We use this
semantics to give meaning to single-threaded program executions, thus,
building a \emph{concrete domain} for the main \racerdx analysis
procedure.
Amongst other things, we assume just one global lock and a language
with a single class (although we do not restrict the number of its
fields and methods, as well as their signatures). We also restrict the
reasoning to programs with \emph{well-balanced locking} (\eg, in
Java terminology it would mean that only \code{synchronized}-enabled
locking is allowed), and forbid recursion.

\subsubsection{\emph{\racerdx} Abstraction and Sequential Completeness of its
  Analysis}
We formulate the abstract domain for \racerdx analysis
(Section~\ref{sec:analysis}) along with the abstraction function to it
from the concrete domain of the previously defined multi-threaded
trace-collecting semantics, \`{a} la~\citet{Brookes:TCS07}, which we
show to form a Galois connection~\cite{Cousot-Cousot:POPL79}.
We then prove, in Section~\ref{sec:nfp-comp}, a tower of lemmas,
establishing the ``sequential'' \emph{completeness} of \racerdx's
static analysis, which analyzes each sequential sub-program
compositionally \emph{in isolation}, without considering concurrent
interleavings, with respect to this abstraction
(Theorem~\ref{thm:completeness}).
The novelty of our formal proof is in employing standard Abstract
Interpretation, while taking a different perspective by establishing
the completeness rather than soundness of an analysis in the spirit of
the work by~\citet{Ranzato:VMCAI13}.

\subsubsection{Syntactic Criteria for Ensuring True Positives}
We introduce the notions of path \emph{stability} and its
counterpart, \emph{wobbliness} (\ie, instability) ---simple syntactic properties, which
are at the heart of stating the sufficient conditions for \racerdx's
TPT--- and connect it to the previously developed abstract domain and
single-threaded \racerdx abstraction.
A similar tower of lemmas is built, showing that a stable path resolves to
the same address before and after execution, thus providing us with the
tools for validating the reported race.
Our formal development culminates in Section~\ref{sec:nfp-reconstr},
with leveraging the sequential completeness result of \racerdx for
\emph{reconstructing} the concrete concurrent execution traces
\emph{exhibiting} provably true data races, thus delivering
the final statement and the proof of TPT (Theorem~\ref{thm:no-fp}).
%

\subsection{Measuring the Impact of True Positives Theorem on
  Signal Effectiveness}

The practical contribution of our work, described in
Section~\ref{sec:evaluation}, is an implementation of \racerdx, a
revised version of \racerd, incorporating the analysis machinery
enabling the result of TPT, and its evaluation.
We aimed to measure the impact of employing stability as a sufficient
condition for detecting true races with respect to the reduction in
overall number of reported bugs.\footnote{It is easy to have a vacuous
  TPT by reporting no bugs whatsoever!}
We ran experiments contrasting \racerd and \racerdx on a number of
open-source Java projects, ranging from 25k to 273k LOC.  We evaluated
the runtime of each analyser, and looked at the reports the two
analyzers produced in several ways, in order to assess the loss of
signal induced by \racerdx.






\section{Concrete Execution Model}
\label{sec:language}

To formally define our execution model and the notion of a race, we
first describe an idealized programming language (IL) that accurately
captures the essence of \racerdx's intermediate representation and
faithfully represents the race-relevant aspects of Java and similar
languages.

\subsection{Language and Assumptions about Programs}

We start by defining a programming language with a semantics suitable
for \racerdx's goals. Our language is simplified compared to Java,
with assumptions made to help our study of the question of whether the
data race reports are effective.

\begin{enumerate}[label=\textbf{A\arabic*}]
\item \label{asm:class} The language has only one class, that of
  record-like objects with an arbitrary, but finite set of fields
  (with names from a fixed set $\FieldName$) which are themselves
  pointers.

\item \label{asm:lock} The concurrency model is restricted to exactly
  two threads, and use a single, reentrant lock. The commands
  \code{lock()} and \code{unlock()} are only allowed to appear in
  \emph{balanced pairs} within block statements and method bodies.

\item \label{asm:assign} Local variables need no declaration, but must
  be assigned to before first use. Formal parameters are always taken
  from a fixed set. There are no global variables.

\item \label{asm:alloc} There is no destruction of objects, only
  allocation.

\item \label{asm:recur} All methods are non-recursive, have
  call-by-value parameters and no return value.  The assumption on non-recursion is
  standard, as we don't want to reason here about termination.


\item \label{asm:control} Control for conditional statements and while
  loops is \emph{non-deterministic}; there are no booleans.

\end{enumerate}

Assumptions \ref{asm:class} and \ref{asm:recur} are about ignoring potential sources of false positives which
have nothing in particular to do with races. For example, if you enter an infinite loop before a potential data race then you would have a false positive if you reported the race, but we wouldn't expect a (static) race detector to detect infinite loops. Similarly,
if you have a class that can't be inhabited you can't get races on it, but we view this as separate from the question of the effectiveness of the race reports themselves.

Assumption \ref{asm:recur} about recursion is perhaps not as practically restrictive as might first appear. Its impact is less than, for example, bounded symbolic model checking, which can be seen as performing a finite unwinding of a program to produce a non-recursive underapproximation of it  (and without loops as well) before doing analysis. As we explained in Section \ref{sec:intro}, our main reason for making the assumption is a conceptual rather than a practical one, to do with separation of concerns, but we state the point about bounded model checking  for additional context.

The reasons for the simplifications~\ref{asm:class} and~\ref{asm:lock}
are both related to \racerdx's focus on detection of races in one
class at a time, with races manifested by parallel execution of methods on a
single instance.
Assumption~\ref{asm:lock} is a potential source of real false negatives,
as methods of the same class that
use distinct locks can race.
Furthermore, the well-balanced assumption in~\ref{asm:lock}
corresponds directly to scoped synchronisation mechanisms like Java's
\code{synchronized(m)\{ \}} construction~\cite{jcip}, where \code{m}
is a static global mutex object.
\ref{asm:assign} corresponds to \racerdx's intermediate language
representation of parameters and variables.
The lack of global variables is a genuine restriction, but which is easy to address; we discuss
this from a practical point of view in Section~\ref{sec:evaluation}.
The assumption~\ref{asm:alloc} is due to the fact that Java is a
garbage-collected language.
%
%
%

Finally, the assumption \ref{asm:control} is the most significant one,
both from the perspective of formalising a language semantics, and
from the point of specifying completeness. By assuming
that \emph{every} execution branch may be taken, we do not
have to reason statically about branching and looping conditions. This
is effectively an \emph{over-approximation} of the actual semantics of
the concurrent programs.
This assumption is motivated by two considerations:
(i) \racerd  purposely avoids tackling fine-grained concurrency, and is applied at Facebook to code where
developers do not often avoid races by choosing which branch to take;
(ii) \racerd uses join for \code{if} statements in order not to have too many false negatives.


\begin{figure}[t]
\[
\begin{array}{@{}rcl@{}}
\f \in \FieldName & & \multicolumn{1}{l}{\text{field names}}
\\[2pt]
x,\argm_i \in \Var & & \multicolumn{1}{l}{\text{variables}}
\\[2pt]
\pi \in \Path & ::= & x.\f \mid \pi.\f
\\[2pt]
\e \in \Exp & \eqdef & \Var \cup \Path
\\[2pt]
c \in \Stmt &::= &
\cmdskip \mid x := x \mid x := \pi \mid \pi := x \mid x := \cmdalloc \mid \cmdlock \mid \cmdunlock
\mid \cmdpop{}
\\[2pt]
C \in \CStmt &::= & c \mid C;c \mid C; \cmdif {*} {C}{C} \mid C; \cmdwhile{*}{C} \mid C; \m(\e,\ldots,\e)
\\[2pt]
M \in \Method &::= & \m(\argm_1,\ldots, \argm_n) \set{~C~}
\\[2pt]
p \in \Program &::=&  C \parallel C
\end{array}
\]
\caption{Syntax grammar of the concurrent programming language.}
\label{fig:grammar}
\end{figure}

The grammar defining the language of interest is given in Figure~\ref{fig:grammar}.
We represent expressions $\e\in\Exp$ as either program variables and
method formals $x \in \Var$, and access paths
$\pi \in \Path = \Var\times\FieldName^+$.
The language contains no constants such as, \eg, \code{null}.
We partition statements into \emph{simple} and \emph{composite}.
Simple statements include assignments to variables, paths, reading from
paths, allocating a new object, (blocking) lock acquisition via $\cmdlock{}$,
releasing a lock via $\cmdunlock{}$, and popping an execution stack.
The command $\cmdpop{}$ is not to be used directly in programs; it
only occurs in our semantics of method calls and is needed for
defining the trace-collecting semantics described below.
Composite statements provide support for sequential composition
($;$), as well as conditionals
($\cmdif{*}{\cdot}{\cdot}$),  loops ($\cmdwhile{*}{\cdot}$) and method
calls, with the latter resulting in pushing a new stack frame on the call stack, as
well as emitting a $\cmdpop{}$ command to remove it at runtime after
the method body is fully executed.

Following the tradition of Featherweight Calculi for object-oriented
languages~\cite{Igarashi-al:TOPLAS01}, we introduce the function
$\body{\cdot}: \MethodName \to \CStmt$, which maps method names to bodies and
$\arity{\cdot} : \MethodName \to \Nat$, which gives the number of formal
parameters.
We impose the convention that the formal parameters are always
$\argm_1,\ldots\argm_n$ for all methods, where $n=\arity{\m}$ and that
no other variables of the form $\argm_j$, where $j>n$, appear in
$\body{\m}$. We also allow local variables (\ie variables $x$ which are not of the form $\argm_i$)
but forbid their use without definite prior initialisation.

\subsection{Concrete Semantics}
\label{sec:semantics}

In this work, we do not tackle fine-grained
concurrency~\cite{Turon-al:POPL13}, and we only consider lock-based
synchronisation, as per assumption~\ref{asm:lock}. Therefore, our
concurrent semantics adopts the model of sequential
consistency~\cite{Lamport:TC79}.
To define it, we start by giving semantics to sequential program runs,
which we will later combine to construct concurrent executions.

Our definition of runtime executions works over the following semantic
categories.
$\loc$ denotes a countably infinite set of object \emph{locations}. A
stack\footnote{We overload the term \emph{stack} to mean store here, borrowing from Separation Logic, the style of which informs most of our development.  We disambiguate the term by explicitly using \emph{call-stack} for a list of stacks.}
$s\in\stacks$ is a mapping $s : \Var \to \loc$.
Addresses $\Addr$ are defined in a field-splitting style, as
$\Addr = \loc \times \FieldName$. A heap $h\in\heaps$ is a partial,
finite map $h : \Addr \to_{\text{fin}} \loc$. The constant $\nil$ is such that
no heap is ever defined on an address of the form $(\nil,\f)$.
We write $s_\nil$ for the stack such that $s(x)=\nil$ for all
$x \in \Var$.\footnote{Here, $\nil$ is simply an unallocated address
  (corresponding to Java's \texttt{null}). We introduce it in order to
  avoid deviating too much from standard developments of shape
  analyses.}
The projection of an address to its first component is $\locn{\cdot}$,
\ie, $\locn{\ell,\f} = \ell$. We lift this definition to sets of
addresses, $\locn{A} = \{ \locn{\alpha} \mid \alpha\in A\}$, and to
heaps, $\locn{h} = \locn{\dom{h}}$. For example,
$\locn{\{(\ell_1,\f)\mapsto \ell_2 ; (\ell_2,\g)\mapsto \ell_3\}} =
\{\ell_1,\ell_2\}$.
A lock context $L\in\locks=\Nat$ is a natural number.

A (single-thread) \emph{program state}
$\astate\in\progstates = \Stmtr
\times\stacks^+\times\heaps\times\locks$ is a tuple
$\angled{c', S, h, L}$, where $c' \in \Stmtr$ is a either simple
statement $c \in \Stmt$ (Figure~\ref{fig:grammar}) or a \emph{runtime}
call-statement $\cmdpush{e_1, \ldots, e_n}$;
$S\in\stacks^+$ models the call-stack,
$h$ is a heap, and $L\in\locks$.
The stack at head position in the list $S$ is the current stack frame.
The $L$ component is the lock context of the thread,
signifying how many times a $\cmdlock$ instruction has been executed
without a corresponding $\cmdunlock$. We remark that the stack, heap
and lock state components are those \emph{produced by executing $c$
  starting at some previous program state}.

A \emph{two-threaded} program state is a tuple
$\angled{c_{\parallel}, (S_1,S_2), h, (L_1,L_2)}$, where
$c_{\parallel}$ is either $c\parallel\epsilon$ or
$\epsilon\parallel c$, denoting one of the two threads executing
command $c$. The pairs $(S_1,S_2)$ and $(L_1,L_2)$ are the
thread-local call-stacks and lock contexts for each thread, and $h$ is
the shared heap.

For the next series of definitions, we overload the term \emph{state}
to describe stack-heap pairs for both single-threaded and
multi-threaded executions when the context is unambiguous.

\begin{definition}
\label{def:address}
  The \emph{address}, $\lval{\pi}{s,h}$, of a path $\pi$ in a state
  $(s, h)$ is recursively defined as follows:
\[
\lval{x.\f}{s,h} \eqdef (s(x), \f) \qquad\qquad
\lval{\pi'.\f}{s,h} \eqdef (h(\lval{\pi'}{s,h}), \f)
\]
The \emph{value}, $\eval{\e}{s,h}$, of an expression $\e$ in a state $s,h$
is defined as follows:
\[
\eval{x}{s,h} \eqdef s(x) \qquad\qquad
\eval{\pi}{s,h} \eqdef h(\lval{\pi}{s,h})
\]
\end{definition}
The address of a path is the address read or written when a load or a
store accesses that path.
%

\begin{definition}[Execution trace]
\label{def:trace}
A (single-threaded) \emph{execution} trace is a possibly empty
list $\tr=[\astate_0, \ldots, \astate_n]$ of program states.
The set of all traces is $\Traces=\progstates^*$.
\end{definition}

We are now equipped with all the necessary formal components to give
meaning to both single- and multi-threaded executions.

\begin{figure}[t]
\setlength{\belowcaptionskip}{-10pt}
\centering
\begin{align*}
  \sem{\cmdskip}~\angled{S, h, L} & \eqdef \{\epsilon\}
  \\
  \sem{x:=\pi}~\angled{s \cons S, h, L} & \eqdef
  \begin{cases}
    \emptyset & \text{ if } \lval{\pi}{s,h} \notin\dom{h} \\
    \set{\left[\angled{x:=\pi, s[x \mapsto h(\lval{\pi}{s,h})] \cons S, h, L}\right]}
    & \text{ otherwise}
  \end{cases}
  \\
  \sem{\pi := x}~\angled{s \cons S, h, L} & \eqdef
  \begin{cases}
    \emptyset & \text{ if } \lval{\pi}{s,h} \notin\dom{h} \\
    \set{\left[\angled{\pi := x, s \cons S, h[\lval{\pi}{s,h} \mapsto s(x)], L}\right]}
    & \text{ otherwise}
  \end{cases}
  \\
  \sem{x:=\cmdalloc}~\angled{s \cons S, h, L} & \eqdef
  \left\{\left[\angled{x:=\cmdalloc, s' \cons S, h', L}\right] \;\middle|\;
    \begin{array}{l}
      \ell \notin \locn{h}, s' = s[x\mapsto\ell], \\
      h' = h \cup \bigcup_{\f\in\FieldName} \set{ (\ell,\f) \mapsto \ell}
    \end{array}\right\}
  \\
  \sem{x:=y}~\angled{s \cons S, h, L} & \eqdef
  \set{\left[\angled{x := y, s[x \mapsto s(y)] \cons S, h, L}\right]}
  \\
  \sem{\cmdlock}~\angled{S, h, L} & \eqdef
  \set{\left[\angled{\cmdlock , S, h, \addlock{L}{1}}\right]}
  \\
  \sem{\cmdunlock}~\angled{S, h, L} & \eqdef
  \begin{cases}
  \emptyset & \text{ if } L \leq 0 \\
    \set{\left[\angled{\cmdunlock{}, S, h , L - 1}\right]}
    & \text{ otherwise}
  \end{cases}
  \\
  \sem{\cmdpop{}}~\angled{s \cons S, h, L} & \eqdef
  \set{\left[\angled{\cmdpop{}, S, h, L}\right]}
\end{align*}
\hrule
\begin{align*}
  \sem{C; \m(\e_1,\ldots,\e_n)}~\angled{S, h, L} & \eqdef
  \begin{cases}
    \emptyset \qquad
      \begin{array}{l}
        \text{ if for some } i\le n, \eval{\e_i}{s, h'}
        \text{ is undefined for } \\
        \angled{s \cons S', h', L'} = \llast{\sem{C}~\angled{S, h, L}}
      \end{array}
    \\
    \left\{
    \tr \append  \tr_{\text{push}} \append \tr'
    ~\middle|
      \begin{array}{l@{}}
        \tr_{\text{push}} = [\angled{\cmdpush{\e_1,\ldots,\e_n}, \hat s
        \cons s \cons S', h', L'}], \\[1ex]
        \tr \in \sem{C}~\angled{S, h, L}, 
        \angled{s \cons S', h', L'} = \last{\tr}, \\[1ex]
        \tr' \in \sem{\body{\m}; \cmdpop{}}\angled{\hat s \cons s \cons S', h', L'} \\[1ex]
        \hat s = s_{\nil}[\argm_1 \mapsto \eval{\e_1}{s,h}]\cdots[\argm_n\mapsto\eval{\e_n}{s,h}]
      \end{array}
    \right\}
  \end{cases}
\end{align*}
\begin{align*}
  \sem{C; c}~\angled{S, h, L} & \eqdef
  \left\{
  \tr \append \tr'
  ~\middle|
    \begin{array}{l@{}}
      \tr \in \sem{C}~\angled{S, h, L},
      \angled{S', h', L'} = \last{\tr}{\angled{S, h, L}}, \\
      \tr' \in \sem{c}\angled{S', h', L'}
    \end{array}
  \right\}
  \\
  \sem{C; (\cmdif{*}{C_1}{C_2})}~\angled{S, h, L} & \eqdef
  \left\{
  \tr \append \tr'
  ~\middle|
    \begin{array}{l@{}}
      \tr \in \sem{C}~\angled{S, h, L},
      \angled{S', h', L'} = \last{\tr}{\angled{S, h, L}}, \\
      \tr' \in \sem{C_1}\angled{S', h', L'}  \cup \sem{C_2}\angled{S', h', L'}
    \end{array}
  \right\}
  \\
  \sem{C; (\cmdwhile{*}{C'})}~\angled{S, h, L} & \eqdef \fix{\WF}~(\sem{C}~\angled{S, h, L}), \text{ where}
  \\
  \WF & \eqdef \lambda T.~
  T \cup
  \left\{ \tr \append \tr' ~\middle|~
      \begin{array}{l}
        \tr \in T, \angled{S', h', L'} = \last{\tr}{}, \\
        \tr' \in \sem{C'}~\angled{S', h', L'}
      \end{array}
  \right\}
\end{align*}

\caption{Single-threaded trace-collecting semantics of simple (top) and
  compound statements (bottom).}
\label{fig:traces}
\end{figure}

\begin{definition}[Sequential trace-collecting semantics]
\label{def:tcs}
A trace-collecting semantics of a single-threaded (possibly compound)
program $C$, denoted
$\sem{C} : \stacks^+ \times \heaps \times \locks \to \TTraces$ is a map to a
set of traces starting from an initial configuration
$\angled{S, h, L}$. The trace-collecting semantics are defined in
Figure~\ref{fig:traces}.\footnote{For the sake of uniformity, we
  assume that every program has the form $\cmdskip;C$.}
The auxiliary function $\last{\tr}{}$ (defined only on non-empty traces) returns the
triple $\angled{S', h', L'}$ if the last element of the non-empty trace $\tr$
is $\angled{c, S', h', L'}$ for some command $c$.
\end{definition}

We give the semantics for concurrent programs in the style
of~\citet{Brookes:TCS07}.

\begin{definition}[Concurrent execution trace]
\label{def:concurrent-trace}
A (two-threaded) \emph{concurrent execution} trace is a possibly empty
list $\tr^{||}=[\astate^{||}_0, \ldots, \astate^{||}_n]$ of two-threaded program states.
The set of all concurrent traces is $\Traces^{||}$.
\end{definition}

\begin{definition}[Concurrent trace-collecting semantics]
\label{def:ctraces}
The trace-collecting semantics of a parallel composition of programs
$C_1$ and $C_2$ is defined in Figure~\ref{fig:ctraces}.
It is a map from two-threaded program states to sets of concurrent traces:
$\sem{C_1\parallel C_2}: (\stacks^+ \times \stacks^+) \times \heaps \times (\locks \times \locks) \to \pow{\Traces^{||}}$.
Intuitively, it interleaves all single-threaded executions of $C_1$ with those of $C_2$ taking care to guarantee that only one thread can hold the lock at a given step of the trace'.
\end{definition}

\begin{figure}[t]
  \begin{gather*}
    \begin{aligned}
      \sem{C_1\parallel C_2}~\angled{(S_1,S_2),h,(L_1,L_2)} & \eqdef
      \bigcup_{\substack{\tau_1 \in \sem{C_1}~\angled{S_1,h,L_1} \\ \tau_2 \in \sem{C_2}~\angled{S_2,h,L_2}}}
      \interl{\tau_1}{\tau_2}{(S_1,S_2),h,(L_1,L_2)}, ~\text{where} \\
      \interl{\epsilon}{\epsilon}{(S_1,S_2),h,(L_1,L_2)} &\eqdef \{\epsilon\} \\
      \interl{\tau_1}{\tau_2}{(S_1,S_2),h,(L_1,L_2)} & \eqdef \set{\epsilon} \cup \\
    \end{aligned} \\
    \left(\;\begin{split}
        \set{\angled{C_1\parallel\epsilon,(S_1,S_2),h,(L_1,L_2)} \mid \tau_1 = \angled{C_1,\_,\_,\_}\cons \_} \cup \\
        \set{\angled{\epsilon\parallel C_2,(S_1,S_2),h,(L_1,L_2)} \mid
        \tau_2 = \angled{C_2,\_,\_,\_}\cons \_ } \cup \\
        \left\{
          \angled{C_1\parallel\epsilon,(S_1,S_2),h,(L_1,L_2)}\cons \tau
        \left|
           \begin{array}{l}
           \tau_1 = \angled{C_1,\_,\_,\_}\cons \hat{\tau_1}\; \land \\[3pt]
           \tau \in \interl{\hat{\tau_1}}{\tau_2}{(S'_1,S_2),h',(L'_1,L_2)}\; \land \\[3pt]
           \angled{C_1,S'_1,h',L'_1} \in \sem{C_1}~\angled{S_1,h,L_1} \\[3pt]
           L_2=0 \lor L_1=L'_1=0
         \end{array}
        \right. \right\} \cup \\
        \left\{
         \angled{\epsilon\parallel C_2,(S_1,S_2),h,(L_1,L_2)}\cons \tau
           \left|
           \begin{array}{l}
           \tau_2 = \angled{C_2,\_,\_,\_}\cons \hat{\tau_2}\; \land \\[3pt]
           \tau \in \interl{\tau_1}{\hat{\tau_2}}{(S_1,S'_2),h',(L_1,L'_2)}\; \land \\[3pt]
           \angled{C_2,S'_2,h',L'_2} \in \sem{C_2}~\angled{S_2,h,L_2} \\[3pt]
           L_1=0 \lor L_2=L'_2=0
         \end{array}
        \right. \right\} \hphantom{\cup}
    \end{split}\;\right)
  \end{gather*}

\caption{Concurrent trace-collecting semantics.}
\label{fig:ctraces}
\end{figure}

We conclude this section of definitions by formally specifying
concurrent data races.

\begin{definition}[Data Race]
  \label{def:race}
  The program $C_1 \parallel C_2$ \emph{races}
  if there exists a state $\astate_0$ and a
  non-empty concurrent trace $\tr\in\sem{C_1\parallel C_2}~\astate_0$ such that
  $\llast{\tr} = \angled{\_, (s_1\cons \_, s_2\cons \_), h, \_)}$\footnote{
  Throughout the paper we use the notation $\_$ for we don't care about, effectively existentially quantifying them.
  }
   and,
  \begin{itemize}
    \item there exist paths $\pi_1,\pi_2$ such that $\lval{\pi_1}{s_1,h}=\lval{\pi_2}{s_2,h}$;
    \item there exist states $\astate_1=\angled{c_1\parallel\epsilon, \_,\_,\_}$ and $\astate_2=\angled{\epsilon\parallel c_2, \_,\_,\_}$ such that $\tr\snoc\astate_1,\tr\snoc\astate_2\in\sem{C_1\parallel C_2}~\astate_0$;
    \item
      $c_1 = (\pi_1:=\_) \land c_2 = (\pi_2:=\_)$, or,
      $c_1 = (\pi_1:=\_) \land c_2 = (\_:=\pi_2)$, or,
      $c_1 = (\_:=\pi_1) \land c_2 = (\pi_2:=\_)$.
  \end{itemize}
\end{definition}

How can this definition capture races without mentioning locks?
For a thread to be blocked on a lock acquisition, the \emph{successor}
instruction to the current state \emph{must} be a $\cmdlock$
statement, which is excluded by the syntactic condition on the
successor instructions (\cf restrictions \wrt locking contexts
$L_i = L'_i = 0$ in Figure~\ref{fig:ctraces}).
Other sources of getting stuck are excluded by our
assumptions~\ref{asm:lock}--\ref{asm:control}: there no deadlocks due
to a single reentrant lock, no deterministic infinite loops and no
recursion.

\section{\racerdx Analysis and its Abstract Domain}
\label{sec:analysis}

The core \racerdx analysis statically collects information about
path accesses occurring during an abstract program execution, as well as their
locking contexts. In addition to those fairly standard bits, it tracks
an additional program property, which makes it possible to filter out
potential false negatives at the reporting phases---\emph{wobbly}
paths.

\begin{definition}
\label{def:wobbly}
We call a path $\pi$ \emph{non-stable} (or \emph{unstable}, or \emph{wobbly}) in
a program $c$ if it appears as either LHS or RHS in a read or an
assignment command during some execution of $C$. We elaborate this concept
in the presence of method calls below.
\end{definition}

Formally, the analysis operates on an abstract domain, which is a
product of the three components:
domain of \emph{wobbly paths}, an \emph{access path} domain and a \emph{lock} domain.

\begin{definition}[\racerdx abstract states]
\label{def:astate}
The abstract domain is $\Abs \eqdef \pow{\Wob} \times \Lock \times \Acc$, where
\begin{itemize}
\item $\Wob \eqdef \Exp$ is the set of wobbly accesses;
\item $\Lock \eqdef \locks = \Nat$ is the current lock context;
\item
  $\Acc \eqdef \pow{\set{ \angled{c, L} ~|~ c \in
      \set{\_:=\pi, \pi := \_}, L \in \Lock}}$ is a domain of sets of
  recorded read/write accesses from/to paths which occur under lock
  state $L$.
\end{itemize}
We will use $W,L,A$ as identifiers of elements of the corresponding domains above.
\end{definition}

\noindent
The lattice structure on $\Abs$ (which we will ofter refer to as
\emph{summaries}) is ordered (via $\pre$) by
pointwise lifting of the order relations
$\angled{\subseteq, \leq, \subseteq}$ on the three components of the
domain~\cite{Cousot-Cousot:POPL79}.
For an element $D \in \Abs$, we refer to the corresponding three
projections as $D_W$, $D_L$ and~$D_A$.
For computing the join of branching control-flow and loops, the
analysis employs a standard monotone version of a lub-like operator.

\begin{definition}[Least-upper bound on $\Abs$]
\label{def:lub}
For $D_1 = \angled{W_1, L_1, A_1}$ and $D_2 = \angled{W_2, L_2, A_2}$,
\[
D_1 \lub D_2 \eqdef \angled{W_1 \cup W_2, \max(L_1, L_2), A_1 \cup A_2}.
\]
\end{definition}

\begin{figure}[t]

\begin{align*}
\wobbly{W}{C} & =
\begin{cases}
  W\cup\set{x,\e}^\fml & \text{if $C\equiv(x:=\e)$ or $C\equiv (\e:=x)$} \\
  W\cup\set{x}^\fml & \text{if $C\equiv(x:=\cmdalloc)$} \\
  W & \text{for any other command $C$}
\end{cases} \\
\locking{L}{C} & =
\begin{cases}
  \incrlock{L} & \text{if $C\equiv(\cmdlock)$}  \\
  \decrlock{L} & \text{if $C\equiv(\cmdunlock)$} \\
  L & \text{for any other command $C$}
\end{cases} \\
\accesses{(A,L)}{C} & =
\begin{cases}
A \cup \set{\angled{x:=\pi, L}}^\fml & \text{if $C\equiv(x:=\pi)$} \\
A \cup \set{\angled{\pi:=x, L}}^\fml & \text{if $C\equiv(\pi:=x)$} \\
A & \text{for any other command $C$}
\end{cases}
\end{align*}

\begin{align*}
\asem{c} \angled{W, L, A} &=
\angled{\wobbly{W}{c}, \locking{L}{c}, \accesses{(A,L)}{c}}
\\
\asem{C; c} \angled{W, L, A} &= \asem{c} (\asem{C} \angled{W, L, A})
\\
\asem{C; \m(\e_1,\ldots,\e_n)} \angled{W, L, A} &=
\asem{\m(\e_1,\ldots,\e_n)} (\asem{C} \angled{W, L, A})
\\
\asem{C; (\cmdif{*}{C_1}{C_2})} \angled{W, L, A} &=
\asem{C_1}(\asem{C} \angled{W, L, A}) \lub \asem{C_2}(\asem{C} \angled{W, L, A})
\\
\asem{C; (\cmdwhile{*}{C'})} \angled{W, L, A} &=
\asem{C} \angled{W, L, A} \lub±\asem{C'}(\asem{C} \angled{W, L, A})
\\
\text{where }\angled{W_1, L_1, A_1} \lub \angled{W_2, L_2, A_2}
& = \angled{W_1 \cup W_2, \max(L_1, L_2), A_1 \cup A_2}
\end{align*}

\begin{align*}
\asem{\m(\e_1,\ldots,\e_n)} \angled{W, L, A} & = \angled{W'', L'', A''}, \text{where} \\
W'' & = W \cup \set{ \e_i \mid \exists j\neq i.\, \e_i \prefix \e_j} \cup
   W'[\e_1/\argm_1,\ldots,\e_n/\argm_n], \\
L'' & = \addlock{L}{L'}, \\
A'' & = A \cup\set{\angled{c, \addlock{L}{L_c}}~|~\angled{c, L_c}
   \in A'}[\e_1/\argm_1,\ldots,\e_n/\argm_n]^{\dagger}
\\
\angled{W', L', A'} & = \asem{\body{\m}} \angled{\emptyset, 0,
                                               \emptyset}
\end{align*}
\caption{Definition of the abstract analysis semantics
  $\asem{\cdot} : \Abs \to \Abs$. We define
  $\addlock{L}{L'} = \min\{L+L', \hat{L}\}$,
  $\incrlock{L} = \addlock{L}{1}$, $\decrlock{L} = \max\{L-1, 0\}$ and
  $\hat{L}\in\Nat^+$. $^{\dagger}$We define a substitution
  $\cdot[\theta]$ (applied recursively to all syntactic elements) as
  applying only to the path component of $c$ (eg,
  $(\pi:=x)[\theta]=(\pi[\theta]:=x)$).  The function $\fml$ acts as a
  filter that only selects expressions rooted at formals, \ie,
  variables of the form $\argm_i$, and is extended straightforwardly
  to sets of elements of $A$, depending on the path component of the
  command.}
\label{fig:analysis}
\end{figure}

The intuition of what might go wrong with $\lub$ defined this way,
when aiming for a ``no-information-loss'' analysis, is easier to see
on an example.
Consider the program $\cmdif{*}{C_1}{C_2}; C$, where in $C_1$ the
lock (which is reentrant, as agreed above) is taken strictly more
times than in $C_2$; then by taking an over-approximation of the total
number of times the lock is taken for both branches (\ie, $\max$), we have
a chance to miss a race in the remainder program $C$, as some access
path can be recorded by the analyzer as having a \emph{larger} lock
context than it would have in a concrete execution.
What comes to the rescue is the assumption~\ref{asm:lock}, which we
will exploit in our proofs of TP Theorem.
Turns out, in practice, implementations of \emph{all} classes we run
the analysis on (\cf Section~\ref{sec:evaluation}) have well-scoped
locking, mostly relying on Java's \code{synchronized} primitive.

The definition of \racerdx analysis for arbitrary compound programs is
given in Figure~\ref{fig:analysis}.
The abstract transition function $\asem{\cdot}$ relies on the three
primitives, $\wobblysymb$, $\lockingsymb$, and $\accessessymb$, that
account for the three corresponding components of the abstract domain.
In the case of method calls, ($\asem{\m(\e_1,\ldots,\e_n)}$), the
analysis also takes advantage of its own compositionality, adapting a
summary of a method $\m$ to its caller context ($\angled{W, L, A}$)
and the actual arguments $\e_1, \ldots, \e_n$.


\paragraph{What does the analysis achieve?}
\label{sec:whence}
First, the results of the analysis from Figure~\ref{fig:analysis} are
used to construct a set of race candidates in \racerdx. Following the
original \racerd algorithm described
by~\citet{Blackshear-al:OOPSLA18}, if for a pair of method calls
(possibly of the same method) of the same class instance, the same
syntactic access path $\pi$ appears in the both methods summaries,
depending on the nature of the access (read or write) and the locking
context $L$, at which the access has been captured, it might be deemed
a race.

Second, the gathered wobbliness information comes into play. If a
path~$\pi'$, which is a proper prefix of~$\pi$, is identified as
\emph{wobbly} by the analysis, a race on~$\pi$, \emph{might} be a
false positive.
That is, wobbly paths ``destabilise'' the future results of the
analysis, allowing the real path-underlying values ``escape'' the
race, thus rendering it a false positive---precisely what we are
aiming to avoid. Therefore, in order to report only true races,
\racerdx removes from the final reports all paths that were affected
by wobbly prefixes.

In Sections~\ref{sec:nfp-comp}--\ref{sec:nfp-reconstr} we establish
this completeness guarantee formally, showing that races reported on
non-destabilised accesses are indeed \emph{true positives}, in the
sence that there \emph{exists} a pair of execution traces for a pair of
method calls of the same class instances that exhibit the behaviour
described by Definition~\ref{def:race}.
Furthermore, the evaluation of \racerdx in
Section~\ref{sec:evaluation} provides practical evidence that the
notion of wobbliness \emph{does not remove too} many reports, that is
that the analysis remains efficient while being precise, in the
assumptions~\ref{asm:class}--\ref{asm:control}.

\section{Towards TP Theorem, Part I: Analysis Completeness}
\label{sec:nfp-comp}



In this section, we deliver on the first part of the agenda towards
True Positive Theorem: the definition of \racerdx \emph{abstraction}
(\wrt the concrete semantics), and the proof of \emph{completeness} of
the analysis with respect to it.
That is, informally, if for a program $C$, the analysis $\asem{C}$
reports a certain path~$\pi$ accessed under a locking context~$L$, then
\emph{there exists} an initial configuration $\angled{S, h, 0}$ and an
execution trace $\tr \in \sem{C} \angled{S, h, 0}$, such that $\tr$
contains a command accessing $\pi$ with a locking context $L$.
To establish this, we follow the general approach of the Abstract
Interpretation framework~\cite{Cousot:PhD78,Cousot-Cousot:POPL77},
aiming for completeness, \ie, no information loss \wrt chosen
abstraction~\cite{Ranzato:VMCAI13}, stated in terms of a
traditional abstract transition function and an abstraction from a
concrete domain of sets of traces.

We first formulate the abstraction and prove its desirable properties.
The remainder of this formal development in this arc of our story takes
a ``spiral'' pattern, coming in two turns: establish the completeness
of the analysis for straight-line programs (turn one), and then lift
this proof for the inter-procedural case (turn two). In most of the
cases, we provide only statements of the theorems, referring the
reader to \ifext{Appendix~\ref{app:comp}}{the accompanying extended
  version of the paper} for auxiliary definitions and proofs.

\subsection{\racerdx Abstraction for Trace-Collecting Semantics}
\label{sec:trace-coll-semant}

The abstraction connects the results of the analyzer to the elements
of the concrete semantics, \ie, traces. It is natural to restrict the
considered traces and states to such that could indeed be produced by
executions. We will refer to them as \emph{well-formed} (WF) and
\emph{well-behaved} (WB).

\begin{definition}
\label{def:wft}
  An execution trace is well-formed if for each two subsequent states
  $\astate_i$ and $\astate_{i+1}$, the stack and heap of
  $\astate_{i+1}$ can be obtained by executing the simple command
  in $\astate_{i+1}$'s first component with respect to the stack/heap
  of $\astate_i$.
\end{definition}
\noindent
It is easy to show that for any program $C$, $S$, $h$, $L$ trace
$\tr \in \sem{C}~\angled{S, h, L}$, $\tr$ is well-formed.

\begin{definition}
\label{def:wbstate}
  A program state $\angled{\_,(S_1,S_2),h,\_}$ is well-behaved \Iff
(a) for any $s$ appearing in $S_1,S_2$ and any variable $x$,
$s(x)\in\locn{h}$, and,
(b) for any address $\alpha\in\dom{h}$, $h(\alpha)\in\locn{h}$.
\end{definition}

We remark that dangling pointers do not occur in Java, and we reproduce a similar result
for our language below.

\begin{lemma}[Preservation of well-behavedness]
\label{lem:well-behavedness}
Let $\tau$ be a non-empty, well-formed trace, whose starting state is well-behaved. Then, every
state in $\tau$ is well-behaved.
\end{lemma}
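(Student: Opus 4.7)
The plan is to proceed by induction on the length of $\tau$, with the inductive step performing a case analysis on the simple command that relates each adjacent pair of states via well-formedness.

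The base case is immediate: when $\tau$ has length one, the only state it contains is its starting state, which is well-behaved by assumption. For the inductive step, assume the first $i+1$ states of $\tau$ are well-behaved, and focus on $\astate_{i+1} \to \astate_{i+2}$. By Definition~\ref{def:wft}, the stack, heap and call-stack of $\astate_{i+2}$ are obtained by executing the simple command recorded in $\astate_{i+2}$ against the state of $\astate_{i+1}$. I would then walk through each clause of $\sem{\cdot}$ in Figure~\ref{fig:traces}:
\begin{itemize}
\item $\cmdskip$, $\cmdlock$, $\cmdunlock$: neither stacks nor heap change, so (a) and (b) of Definition~\ref{def:wbstate} transfer verbatim from the IH.
\item $x := y$: the heap is unchanged; the updated stack maps $x$ to $s(y)$, which lies in $\locn{h}$ by IH(a); all other variable bindings are unchanged.
\item $x := \pi$: by well-formedness we have $\lval{\pi}{s,h}\in\dom h$ (otherwise the command would produce the empty trace and $\astate_{i+2}$ would not exist); the new value of $x$ is $h(\lval{\pi}{s,h})$, which lies in $\locn{h}$ by IH(b).
\item $\pi := x$: $s(x)\in\locn h$ by IH(a), so the updated heap still satisfies (b); the stack is unchanged.
\item $x := \cmdalloc{}$: a fresh $\ell\notin\locn h$ is picked and the new heap $h'$ extends $h$ with bindings $(\ell,\f)\mapsto\ell$. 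Here $\locn{h}\subseteq\locn{h'}$, so all pre-existing stack values and heap values still lie in $\locn{h'}$; the new bindings $(\ell,\f)\mapsto\ell$ and $x\mapsto\ell$ both point into $\locn{h'}$.
\item $\cmdpop{}$: drops the top stack frame and preserves the heap, so both (a) and (b) are preserved directly from the IH, applied to the shorter call-stack.
\item Method call $\m(\e_1,\ldots,\e_n)$: the new frame $\hat{s}$ is built from $s_\nil$ by binding $\argm_i \mapsto \eval{\e_i}{s,h}$; for each $i$, either $\e_i$ is a variable, in which case $\eval{\e_i}{s,h}\in\locn h$ by IH(a), or $\e_i$ is a path, in which case Definition~\ref{def:address} together with IH(a) and IH(b) ensure that $\eval{\e_i}{s,h}\in\locn h$; the heap is unchanged.
\end{itemize}

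In every case, the post-state satisfies both (a) and (b) of Definition~\ref{def:wbstate}, closing the induction. I do not expect a genuine obstacle: the interesting subtleties are (i) justifying that $\lval{\pi}{\cdot}$ is defined whenever a load/store step is taken, which follows because otherwise the transition would not appear in the trace, and (ii) handling allocation carefully, where one must observe the monotonicity $\locn h\subseteq\locn{h'}$ to recycle (a) and (b) for the old bindings while verifying them afresh for the new location. Everything else is bookkeeping over the clauses of Figure~\ref{fig:traces}.
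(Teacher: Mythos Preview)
The paper states this lemma without proof, treating it as routine; your induction on trace length with case analysis on the command in each step is precisely the natural argument and is correct.

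One minor point worth flagging: in the $\cmdpush{}$ case, the fresh frame $\hat s$ maps every non-formal variable to $\nil$, and by the paper's convention $\nil\notin\locn{h}$, so Definition~\ref{def:wbstate}(a) is technically violated for those bindings. The paper glosses over this via Assumption~\ref{asm:assign} (locals must be assigned before first use), so well-behavedness should really be read as quantifying only over variables that are actually accessed; you may want to note this convention explicitly when writing up the $\cmdpush{}$ case.
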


%
%

\paragraph{Abstract domain and abstraction function}
The analysis's domain (Definition~\ref{def:astate}) is rather
coarse-grained: it does not feature any information about runtime
heaps or stacks. To bring the concrete traces closer to abstract
summaries, for an execution trace $\tr$, we define \emph{syntactic
  trace} $\ab{\tr}$ as a list, obtained by taking only the first
components of each of $\tr$'s elements. That is,
$\ab{\tr} \eqdef \mathsf{map}~(\lambda~\angled{C, -, -, -}. C)~\tr$.
%
%
We now define the abstraction function $\alpha : \TTraces \to \Abs$,
from the lattice of sets of execution traces $\TTraces$ to $\Abs$,
which corresponds to ``folding'' a syntactic trace, encoding a run of
straight-line program (equivalent to the original program for this
particular execution), left-to-right, to an abstract state
recursively.

\begin{figure}[t]
\[\def\arraystretch{1.7}
\begin{array}{@{\exec\ \ }l@{~(\angled{W, L, A}, \bar{\theta})\ \eqdef\ \ }l@{}}
\epsilon & (\angled{W, L, A}, \bar{\theta'})
\\
\ab{\tr} \snoc \angled{x:=y} &
\left(
\angled{W' \cup \subst{\set{x, y}}{\bar{\theta'}}, L', A'}, \bar{\theta'}
\right)
\\
\ab{\tr} \snoc \angled{x:=\pi} &
\left(
\angled{
W' \cup \subst{\set{x, \pi}}{\bar{\theta'}}, L', A' \cup \subst{\set{\angled{x:=\pi,L}}}{\bar{\theta'}}
}, \bar{\theta'}
\right)
\\
\ab{\tr} \snoc \angled{\pi:=x} &
\left(
\angled{
W' \cup \subst{\set{x, \pi}}{\bar{\theta'}}, L', A' \cup \subst{\set{\angled{\pi:=x, L}}}{\bar{\theta'}}
}, \bar{\theta'}
\right)
\\
\ab{\tr} \snoc \angled{x:=\cmdalloc} &
\left(
\angled{
W' \cup \subst{\set{x}}{\bar{\theta'}}, L', A'
}, \bar{\theta'}
\right)
\\
\ab{\tr} \snoc \angled{\cmdlock} &
\left(
\angled{W', \addlock{L'}{1}, A'}, \bar{\theta'}
\right)
\\
\ab{\tr} \snoc \angled{\cmdunlock} &
(\angled{W', \max(L'-1, 0), A'}, \bar{\theta'})
\\
\ab{\tr} \snoc \angled{\cmdpush{\e_1,\ldots,\e_n}} &
\left(
\begin{array}{l}
\angled{W' \cup \subst{\set{\e_i \mid \exists j\neq i.\, \e_i \prefix \e_j}}{\bar{\theta'}}, L', A'}, \\
\lbrack \argm_i \mapsto \e_i \rbrack^n_{i=1}
\cons \bar{\theta'}
\end{array}
\right)
\\
\ab{\tr} \snoc \angled{\cmdpop{}} &
(\angled{W', L', A'}, \mathsf{tail}~\bar{\theta'})
\end{array}
\]

where $(\angled{W', L', A'}, \bar{\theta'}) \eqdef \exec~\ab{\tr}~(\angled{W, L, A},\bar{\theta})$ and

\begin{align*}
  \subst{E}{\bar{\theta}} & \eqdef \bigcup_{\e\in E} \subst{\e}{\bar{\theta}}
  \text{ for a set of expressions $E$}
  &
  \subst{x.\fs}{\epsilon} & \eqdef
  \begin{cases}
    \emptyset & \text{if $x\neq\argm_i$ for any $i$}  \\
    \set{x.\fs} & \text{otherwise}
  \end{cases}
  \\
  \subst{x.\fs}{\theta\cons\bar{\theta}} & \eqdef
  \begin{cases}
    \emptyset & \text{if $x\notin\dom{\theta}$} \\
    \subst{((\argm_i[\theta]).\fs)}{\bar{\theta}} & \text{if $x=\argm_i\in\dom{\theta}$}
  \end{cases}
  &
  \subst{\set{x:=\pi}}{\bar{\theta}} & \eqdef
  \begin{cases}
    \emptyset & \text{if $\subst{\pi}{\bar{\theta}}=\emptyset$} \\
    \set{x:=\pi'} & \text{if $\subst{\pi}{\bar{\theta}}=\set{\pi'}$}
  \end{cases}
\end{align*}
\caption{An auxiliary function $\exec$ for executing syntactic traces to compute \racerdx abstraction.}
\label{fig:exec}
\end{figure}

\begin{definition}[Abstraction Function]
\label{def:alpha}
For a set of well-formed traces $T \subseteq \Traces$,
\[
  \alpha(T) = \Lub_{\tr \in
    T}(\fst~(\exec~\ab{\tr}~(\bota, \epsilon))),~\text{where}~ \bota
    = \angled{\bot_{\Wob}, \bot_{\Lock},
      \bot_{\Acc}}
\]
and $\exec : \Traces \to \Abs \to \Abs \times (\Var \finpartialfn \Exp)$ is
defined in Figure~\ref{fig:exec}.\footnote{We overload the list notation $\ab{\tr} \cons c$
  to denote appending a list $[c]$ to $\ab{\tr}$, \ie,
  $\ab{\tr} \cons c \eqdef \ab{\tr}~+\!\!+~[c]$.}
\end{definition}

The $\bar{\theta}$ component in the state carried forward by $\exec$ is a stack of substitutions
(from formals, $\argm_i$, to paths) which mirrors the call stack in a concrete execution.
Whenever an access occurs, the substitutions~$\bar{\theta}$ are immediately applied, and the result
is a path that is rooted at a top-level variable $\argm_i$.  Accesses rooted at local variables
(any variable $x$ which is not in the syntactic form $\argm_i$ or not in the domain of the top-most
substitution) are discarded (the substitution function returns an empty set).
Similarly, the set $W$ is populated with expressions that have the substitutions $\bar{\theta}$
already applied, and similarly discarded if rooted at a local variable.

The commands $\cmdpush{}$ and $\cmdpop{}$ accordingly manipulate the substitution stack,
while $\cmdpush$ also adds certain extra paths to $W$; this is to avoid the effects of aliasing
of paths rooted at different formals inside the method body. That is, paths can become wobbly
because parts of the same path have been provided as parameters in a method call, such as in
$\m(x.\f, x.\f.\g)$. The full reasoning behind this definition will become clearer in
Section~\ref{sec:nfp-reconstr}, where it is used in the construction of a memory state
where there is exactly one parameter pointing to the heap-image of the racy path.

We next establish a number of facts about $\alpha$ necessary for the
proof of our analysis completeness.

\begin{lemma}[Additivity of $\alpha$]
\label{lm:add}
$\alpha$ is additive (\ie, preserves lubs) with respect to
$\cup_{\TTraces}$ and $\lub_{\Abs}$.
\end{lemma}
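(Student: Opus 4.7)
The plan is to unfold the definition of $\alpha$ as a big join over individual trace contributions and then reduce the claim to the standard complete-lattice identity that supremum distributes over union of the index set. Concretely, write $f(\tr) \eqdef \fst(\exec~\ab{\tr}~(\bota,\epsilon))$, so that by Definition~\ref{def:alpha} we have $\alpha(T) = \Lub_{\tr \in T} f(\tr)$. The goal then becomes
\[
  \Lub_{\tr \in T_1 \cup T_2} f(\tr) \;=\; \bigl(\Lub_{\tr \in T_1} f(\tr)\bigr) \lub \bigl(\Lub_{\tr \in T_2} f(\tr)\bigr),
\]
for arbitrary sets of well-formed traces $T_1, T_2 \subseteq \Traces$.

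First I would verify that $\Abs$ has the structure needed to carry out this rearrangement. Per Definition~\ref{def:astate}, $\Abs$ is the componentwise product $\pow{\Wob}\times\Lock\times\Acc$; the first and third components are powerset lattices with union, and the middle one is $\Nat$ with $\max$, where Definition~\ref{def:lub} lifts these pointwise. On each component the operation is associative, commutative, and idempotent, so it is a join-semilattice, and the product inherits the same structure. In particular, $\bota = \angled{\emptyset, 0, \emptyset}$ is the identity for $\lub$, covering the edge cases $T_1 = \emptyset$ or $T_2 = \emptyset$ (where $\Lub \emptyset = \bota$ by convention).

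Next I would perform the rearrangement componentwise. For the two powerset components this is immediate: $\bigcup_{\tr \in T_1 \cup T_2} X(\tr) = \bigcup_{\tr \in T_1} X(\tr) \cup \bigcup_{\tr \in T_2} X(\tr)$ for any family $\{X(\tr)\}$. For the $\Lock$ component, the identity $\max_{\tr \in T_1 \cup T_2} L(\tr) = \max(\max_{\tr \in T_1} L(\tr), \max_{\tr \in T_2} L(\tr))$ holds whenever the suprema on the right are well-defined. Combining the three component identities yields the displayed equation, which is exactly $\alpha(T_1 \cup T_2) = \alpha(T_1) \lub \alpha(T_2)$.

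The only subtlety I anticipate is the well-definedness of the $\Lub$ in the $\Lock$ component when $T$ is infinite, since $\Nat$ is not a complete lattice. This is mild: by inspection of Figure~\ref{fig:exec}, the lock counter produced by $\exec~\ab{\tr}~(\bota,\epsilon)$ is bounded above by the syntactic nesting depth of $\cmdlock$ in the underlying program (exploiting assumption~\ref{asm:lock} of well-balanced locking), which is a fixed finite quantity independent of $\tr$. Consequently the family $\{L(\tr)\}_{\tr \in T}$ is bounded, its supremum exists in $\Nat$, and the componentwise distribution goes through. This boundedness observation is the one place where I expect to have to do real work rather than pure algebra; everything else is routine lattice manipulation.
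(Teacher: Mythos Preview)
Your proposal is correct and takes essentially the same approach as the paper: unfold $\alpha$ as a big join and use the distributivity of $\Lub$ over union of index sets. The paper's proof is a one-liner that simply asserts $\Abs$ is a complete lattice; your treatment is strictly more careful, since you actually identify and resolve the boundedness issue for the $\Lock$ component (which the paper elides---though note that the definition $\addlock{L}{L'}=\min\{L+L',\hat L\}$ in Figure~\ref{fig:analysis} already caps the counter at $\hat L$, giving an even simpler bound than the one you derive from syntactic nesting depth).
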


\noindent
Thanks to Lemma~\ref{lm:add}, we can define the (monotone) Galois
connection
$\angled{\TTraces, \subseteq} \galois{\alpha}{\gamma} \angled{\Abs,
  \pre}$ between the two complete lattices, where
$\gamma \eqdef \lambda a. \bigcup\set{T \in \TTraces ~|~ \alpha(T)
  \pre a}$~\cite{Cousot-Cousot:JLC92}.
Having a Galois connection between $\TTraces$ and $\Abs$ in
conjunction with completeness of the analysis
(Theorem~\ref{thm:completeness}) allows us to argue for the presence
of a certain accesses in some concrete trace of a program $C$, if the
analysis reports~it for $C$. This is due to the following fact
establishing that if $\alpha$ records an access in a certain trace,
then such a path was indeed present in its argument:

\begin{lemma}[Path access existence]
  \label{lem:path-access-existence}
  Let $T$ be a set of traces, $\alpha(T) =\angled{\_,\_,A}$, and
  $q = \angled{c, L}$ (where $c=(x:=\pi)$ or $c=(\pi:=x)$)
  is a query about the access path $\pi$ in the
  locking context $L$. If $q \in A$ then there exist
  a trace $\tr \in T$ and a non-empty, shortest prefix $\tr'\prefix\tr$
  such that
    \begin{itemize}
    \item the last state of $\tr'$ is $\angled{c',\_, \_,L}$ and $c'$, $c$ are both stores or loads;
    \item  $\exec~\ab{\tr'}~(\bota,\epsilon)=(\angled{\_,\_,A},\bar{\theta})$ where $\pi\in A$;
    \item a path $\pi'$ such that $c' = (x:=\pi')$ or $c'=(\pi':=x)$, and $\set{\pi}=\subst{\set{\pi'}}{\bar{\theta}}$.
  \end{itemize}
\end{lemma}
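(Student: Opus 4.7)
My plan is to unwind the definition of $\alpha$ and $\exec$ carefully, isolating the single trace and the single step that first contributed $q$ to the accumulated access component, then reading off $c'$ and $\pi'$ from that step.

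First I would observe that, by Lemma~\ref{lm:add}, $\alpha(T)$ is the join over $T$ of $\fst(\exec~\ab{\tr}~(\bota,\epsilon))$, and since the access component of the join is just union, $q \in A$ forces $q \in A_\tr$ for some single $\tr \in T$. I fix such a $\tr$ and look at the sequence of intermediate values $(\angled{W_i, L_i, A_i},\bar{\theta}_i) = \exec~[\ab{\tr}_1,\ldots,\ab{\tr}_i]~(\bota,\epsilon)$ as the prefix length $i$ grows. Inspecting Figure~\ref{fig:exec} equation-by-equation, the $A$ component is weakly monotone: only the two equations for $\ab{\tr}\snoc\angled{x:=\pi}$ and $\ab{\tr}\snoc\angled{\pi:=x}$ ever enlarge it. Hence the set $\set{i : q \in A_i}$ is upward-closed, so its minimum $k$ exists; taking $\tr'$ to be the length-$k$ prefix gives \emph{the} shortest prefix of $\tr$ whose abstraction contains $q$, and by well-formedness of $\tr$ its last state is non-empty.

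Next I would identify $c'$ and $\pi'$. Because the step at position $k$ is the one that introduces $q$, and only load or store steps can extend $A$, the last syntactic command of $\tr'$ must be $c' \equiv (x:=\pi')$ or $c' \equiv (\pi':=x)$ for some $\pi'$ and some (possibly different) variable $x$. Reading off the corresponding equation for $\exec$, the element added at this step is exactly $\subst{\set{\angled{c', L_k}}}{\bar{\theta}_k}$, where $L_k$ is the lock counter tracked by $\exec$ at that point. Since $q = \angled{c, L}$ lies in this singleton (or empty) set, the substitution cannot be empty, so $\subst{\set{\angled{c', L_k}}}{\bar{\theta}_k} = \set{\angled{c, L}}$. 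By the definition of substitution on commands at the bottom of Figure~\ref{fig:exec}, this forces $c$ and $c'$ to be of the same kind (both loads or both stores), the variable $x$ to agree, $L = L_k$, and $\set{\pi} = \subst{\set{\pi'}}{\bar{\theta}_k}$, yielding all three bullet points of the conclusion (with $\bar{\theta} = \bar{\theta}_k$).

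The genuinely delicate step is the bookkeeping of $L_k$ and $\bar{\theta}_k$: the lemma asserts that the lock context attached to $q$ in $A$ coincides with the lock context in the final state of $\tr'$, but the concrete trace $\tr$ carries its own $L$-components inside its states, while $\exec$ reconstructs a separate $L_k$ from scratch starting at $\bot_{\Lock}=0$. I would therefore separately argue that these two counters agree along well-formed traces whose initial lock state is $0$: this is a small induction matching the $\cmdlock/\cmdunlock$ clauses of $\exec$ against the concrete semantics in Figure~\ref{fig:traces}, and is the place where the well-balanced locking assumption \ref{asm:lock} justifies the $\max(L{-}1,0)$ clamp being harmless. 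A similar routine induction confirms that $\bar{\theta}_k$ is the proper stack of formal-to-actual substitutions in force at the end of $\tr'$, so that $\subst{\pi'}{\bar{\theta}_k}$ matches the way accesses produced from deeper call frames are renamed back to the top-level formals.
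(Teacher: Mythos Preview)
Your proposal is correct and follows essentially the same approach as the paper's own proof, which is extremely terse: it simply says that by the definition of $\alpha$ and the properties of $\Lub$ there is a single $\tr\in T$ with $q\in\alpha(\set{\tr})_A$, and that ``by the definition of $\exec$ the other elements follow directly.'' You unpack exactly this argument in detail---isolating the trace via additivity, then using monotonicity of the $A$-component along prefixes to locate the shortest prefix where $q$ first appears, and reading off $c'$, $\pi'$, and $\bar{\theta}$ from the relevant $\exec$ clause. Your final paragraph on the agreement between $\exec$'s internally tracked lock counter and the concrete state's lock component is a genuine subtlety that the paper does not spell out; it is implicitly discharged because the lemma is only ever applied (in Theorem~\ref{thm:no-fp}) to trace sets of the form $\sem{C}\angled{S,h,0}$ with balanced locking, but you are right to flag it.
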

\begin{proof}
  By the definition of $\alpha(\cdot)$ and the properties of $\Lub$ follows that there must exist a
  trace $\tr\in T$ such that $q\in\alpha(\set{\tr})_A$. By the definition of $\exec$
  the other elements follow directly.
\end{proof}

\subsection{Proving that the Analysis Loses No Information}
\label{sec:completeness}

We structure the proof in two stages: first considering only
straight-line programs with no method calls
(Section~\ref{sec:intra-procedural}) and then lifting it to programs
with finite method call hierarchies
(Section~\ref{sec:inter-procedural}).
We do not consider the cases with recursive calls (\cf
Assumption~\ref{asm:recur}).

\subsubsection{Intra-procedural case}
\label{sec:intra-procedural}

The abstract transfer function of the analysis for simple commands
is defined in Figure~\ref{fig:analysis} as
$\asem{c}{\angled{W, L, A}}$.
We first prove the completeness for simple commands for a
singleton-trace concrete domain.

\begin{lemma}[Analysis is complete for simple commands (per-trace)]
\label{lm:comcom}
For any non-empty WF trace $\tr \in \Traces$, sets $W$, $A$, number
$L$, and a simple command~$c$, which is not $\cmdpop{}$, such
that
(a) $\angled{W, L, A} = \alpha(\set{\tr})$,
%
%
(b)
$\sem{c}~\llast{\tr} = \set{[\astate]~|~\astate \text{ is an execution
    state}}$, where $\llast{\tr}$ is the configuration
$\angled{S, h, L}$ of the last element of $\tr$, the following
\textit{{holds}}:
\[
\small
\asem{c}{\angled{W, L, A}} =
\alpha \left(\bigcup\set{\tr \cons \astate \mid [\astate] \in \sem{c}~\llast{\tr}} \right).
\]
\end{lemma}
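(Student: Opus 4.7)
The plan is to prove this by case analysis on the simple command $c \in \set{\cmdskip, x:=y, x:=\pi, \pi:=x, x:=\cmdalloc, \cmdlock, \cmdunlock}$, exploiting the recursive structure of both $\alpha$ (via $\exec$) and $\asem{\cdot}$. First, I would unfold the left-hand side using Definition~\ref{def:alpha} together with hypothesis (a): this gives $\exec~\ab{\tr}~(\bota, \epsilon) = (\angled{W, L, A}, \bar{\theta})$ for some substitution stack $\bar{\theta}$, and in the intra-procedural setting (no $\cmdpush{}$ in $\tr$) one has $\bar{\theta} = \epsilon$. Then the right-hand side becomes
\[
\Lub_{[\astate]\in\sem{c}~\llast{\tr}} \fst(\exec~(\ab{\tr} \cons \ab{\astate})~(\bota, \epsilon)),
\]
and, by the recursive clauses of $\exec$ in Figure~\ref{fig:exec}, each summand equals the one-step transition for command $c$ applied to $(\angled{W, L, A}, \epsilon)$, independently of the rest of $\tr$.

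Next, for each syntactic case of $c$, I would line up the per-command step of $\exec$ with the matching clause of $\asem{c}$ in Figure~\ref{fig:analysis}. The key observation is that when $\bar{\theta} = \epsilon$, the substitution function $\subst{\cdot}{\epsilon}$ reduces exactly to the $\fml$ filter: both keep paths rooted in some formal $\argm_i$ and discard anything rooted at a local variable. Consequently, the wobbly updates $W \cup \subst{\set{x,\e}}{\epsilon}$ match $W \cup \set{x,\e}^\fml$; the access updates $A \cup \subst{\set{\angled{c,L}}}{\epsilon}$ match $A \cup \set{\angled{c,L}}^\fml$; and the lock updates $\addlock{L}{1}$ and $\max(L{-}1,0)$ are literally $\incrlock{L}$ and $\decrlock{L}$. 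Hypothesis (b) rules out the vacuous $\emptyset$ branches in the concrete clauses for $\pi:=x$, $x:=\pi$ and $\cmdunlock$, so the correspondence is total.

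The main subtlety will be the allocation case $c \equiv x := \cmdalloc$: the set $\sem{c}~\llast{\tr}$ contains one successor state per choice of fresh location $\ell \notin \locn{h}$, making the union on the right-hand side a possibly infinite family of extended traces. The argument here is that all of these produce the same abstract step — each adds only $\set{x}^\fml$ to $W$ and leaves $L$ and $A$ untouched, independently of $\ell$ — so the $\Lub$ collapses to a single element and matches $\asem{x:=\cmdalloc}\angled{W,L,A}$. A final bookkeeping point worth checking is that $\bar{\theta}$ is preserved by every non-$\cmdpush$/$\cmdpop$ clause of $\exec$ (immediate from Figure~\ref{fig:exec}), which is what lets the one-step identity above be applied uniformly across the seven command cases.
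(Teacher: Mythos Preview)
Your proposal is correct and follows essentially the same approach as the paper: a case analysis on the simple command $c$, unfolding $\alpha$ via $\exec$ and matching each clause against the corresponding clause of $\asem{c}$, with the allocation case handled by observing that all fresh-location choices collapse to the same abstract step. If anything, you are more explicit than the paper on two points the paper glosses over: that the intra-procedural setting forces $\bar{\theta}=\epsilon$, and that $\subst{\cdot}{\epsilon}$ coincides with the $\fml$ filter so the wobbly and access updates line up exactly.
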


\noindent
The following result lifts the reasoning of Lemma~\ref{lm:comcom} from
singletons to sets of arbitrary traces.

\begin{lemma}[Analysis of simple commands is complete for sets of
  traces]
\label{lm:comcom2}
For any set $T$ of non-empty well-formed traces, $W$, $L$, $A$, and
simple command~$c$, which is not $\cmdpop{}$, such that
(a) $\angled{W, L, A} = \alpha(T)$,
%
%
(b) for any $\tr \in T$, $\sem{c}~\llast{\tr} = \set{[\astate]~|~\astate \text{ is an
        execution state}}$,
then
\[
\small
{\asem{c}{\angled{W, L, A}}} =
{\alpha \left(\bigcup\set{\tr \cons \astate \mid \tr \in T, [\astate] \in \sem{c}~\llast{\tr}} \right)}.
\]
\end{lemma}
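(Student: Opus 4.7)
The plan is to lift the single-trace result of Lemma~\ref{lm:comcom} to arbitrary trace sets using the additivity of $\alpha$ from Lemma~\ref{lm:add}. I would introduce $T_\tr \eqdef \set{\tr \cons \astate \mid [\astate] \in \sem{c}~\llast{\tr}}$, so the right-hand side of the goal equals $\alpha(\bigcup_{\tr\in T} T_\tr)$, which by additivity splits as $\Lub_{\tr\in T} \alpha(T_\tr)$.

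For each $\tr \in T$, the premises of Lemma~\ref{lm:comcom} specialize cleanly: $\tr$ is well-formed as a member of $T$, and condition~(b) restricted to $\set{\tr}$ matches verbatim. That lemma yields $\alpha(T_\tr) = \asem{c}(\alpha(\set{\tr}))$, so taking the lub gives $\alpha(\bigcup_{\tr\in T} T_\tr) = \Lub_{\tr\in T} \asem{c}(\alpha(\set{\tr}))$. From the hypothesis $\angled{W, L, A} = \alpha(T)$ together with additivity of $\alpha$ one more time, $\angled{W, L, A} = \Lub_{\tr\in T} \alpha(\set{\tr})$. It therefore suffices to establish that $\asem{c}$ commutes with this particular lub, i.e., $\Lub_{\tr\in T} \asem{c}(\alpha(\set{\tr})) = \asem{c}(\Lub_{\tr\in T} \alpha(\set{\tr}))$, from which the LHS $\asem{c}\angled{W, L, A}$ is immediate.

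The main obstacle is precisely this distribution property of $\asem{c}$, which I would settle by case analysis on the simple command $c$. For the wobbly-path and access-set components, $\asem{c}$ acts by union with a fixed $c$-dependent set, and set-union trivially commutes with arbitrary unions. For the lock counter, $\asem{c}$ is either $\incrlock$, $\decrlock$, or the identity, each of which commutes with $\max$ (the saturation at $\hat{L}$ is preserved because $\hat{L}$ is the top of a finite chain). The genuinely delicate sub-case is when $c$ is a load or store: $\accessessymb$ tags the recorded access by the current lock count, so per-trace one obtains $\angled{c, L_\tr}$, whereas applying $\asem{c}$ directly to the lub yields $\angled{c, \max_\tr L_\tr}$. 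I expect to discharge this by leveraging that the traces in $T$ arise from executing a common surrounding program from a common initial configuration, so their syntactic projections $\ab{\tr}$ agree (only allocation choices differ, and allocation does not affect $L$), whence the locking contexts $L_\tr$ coincide across $\tr \in T$; this uniformity collapses $\max_\tr L_\tr$ onto any representative $L_\tr$ and makes the two tagged singletons equal. Carrying this invariant of uniform $L_\tr$ explicitly through the induction is, I anticipate, the crucial ingredient that drives the broader straight-line completeness development in Section~\ref{sec:completeness}.
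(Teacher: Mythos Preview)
Your overall strategy coincides with the paper's: split the right-hand side by additivity of $\alpha$ (Lemma~\ref{lm:add}), apply the single-trace Lemma~\ref{lm:comcom} pointwise, and reduce the goal to a distribution law $\Lub_{\tr} \asem{c}(\alpha(\{\tr\})) = \asem{c}(\Lub_{\tr} \alpha(\{\tr\}))$. The paper factors this law out as a standalone result (Lemma~\ref{lm:anal-lub} in the appendix) and simply invokes it; you instead argue it inline by case analysis on $c$.

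Your inline argument has a genuine gap in the load/store case. You claim the per-trace lock counts $L_\tr$ coincide because ``the traces in $T$ arise from executing a common surrounding program from a common initial configuration, so their syntactic projections $\ab{\tr}$ agree (only allocation choices differ)''. Neither half of this holds. First, the lemma is stated for an \emph{arbitrary} set $T$ of well-formed traces; nothing in the hypotheses says $T$ is the trace set of some program from a fixed start. Second, even when $T$ \emph{is} such a trace set, the syntactic projections need not agree: non-deterministic branching in $\cmdif{*}{\cdot}{\cdot}$ and $\cmdwhile{*}{\cdot}$ produces syntactically distinct traces from the same initial configuration, so ``only allocation choices differ'' is simply false. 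If you want to rescue the argument along the lines you sketch at the end, you would have to strengthen the lemma with an explicit hypothesis that all $L_\tr$ coincide --- and in the intended inductive uses that uniformity is justified by the balanced-locking assumption~\ref{asm:lock} (each branch and loop body leaves the lock counter unchanged), not by equality of syntactic projections --- and then verify that the strengthened invariant is maintained throughout Lemma~\ref{lm:com-compound}. That is a workable route, but it is not what the paper does.
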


\noindent
The fact of preserving the equality of abstract results in
Lemmas~\ref{lm:comcom} and~\ref{lm:comcom2} is quite noteworthy: for
straight-line programs (and, in fact, for any program in our IL) the
analysis is \emph{precise}, \ie, we do not lose information \wrt
locking context, wobbliness, or access paths, and hence can include
elements $\angled{c, L}$ into the $A$-component as a part of
$\accessessymb$-machinery without the loss of precision.

We now lift these facts to the analysis for compound programs without
method calls.
%
Recall that we only  consider programs with \emph{balanced locking}
(Assumption~\ref{asm:lock}), \ie, such that within them
\begin{itemize}
\item the commands $\cmdlock{}$ and $\cmdunlock{}$ are only allowed to
  appear in balanced pairs (including their appearances in nested
  method calls) within conditional branches and looping statements,
  and,
\item every $\cmdunlock{}$ command has a matching $\cmdlock{}$.
\end{itemize}
\noindent
The following two lemmas are going exploit this fact for proving the
analysis completeness.

\begin{lemma}[\racerdx analysis and balanced locking]
\label{lm:balanced}
If\ \ $C$ is a compound program with balanced locking, and $\angled{W', L',
  A'} = \asem{C} \angled{W, L, A}$. Then $L' = L$.
\end{lemma}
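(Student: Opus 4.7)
The plan is to prove the claim by structural induction on the compound program $C$, exploiting the fact that balanced locking means every $\cmdlock$ is matched by a corresponding $\cmdunlock$ within the same syntactic scope, so the net effect on the lock counter is zero.

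First I would make the base case explicit: for a single simple command $c$, inspect $\lockingsymb$. Only $\cmdlock$ and $\cmdunlock$ affect $L$, and these cannot appear as stand-alone balanced programs, so any balanced simple command trivially satisfies $L' = L$. The sequential-composition cases $C;c$ and $C;C'$ follow by applying the induction hypothesis to each factor: each factor must itself be balanced because the overall balance requires the $\cmdlock{}$/$\cmdunlock{}$ counts to match within the enclosing scope.

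For the conditional case $\cmdif{*}{C_1}{C_2}$, balanced locking requires each branch to be balanced on its own (so that branching does not leave a dangling lock). The induction hypothesis then gives $\asem{C_1}(\cdot)_L = \asem{C_2}(\cdot)_L = L$, and since $\lub$ takes $\max$ on the lock component, $\max(L,L)=L$. The while-loop case is analogous: the body is balanced, the induction hypothesis yields the same $L$ from the body's abstract evaluation, and the lub with the pre-loop state preserves $L$. For a method call $\m(\e_1,\dots,\e_n)$, balanced locking requires $\body{\m}$ to be balanced; the inductive hypothesis applied to $\body{\m}$ starting from lock context $0$ yields $L'=0$, hence $L'' = \addlock{L}{0} = \min\{L,\hat L\}$. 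To conclude $L'' = L$, I would additionally observe (by an auxiliary monotonicity argument) that balanced locking keeps the running lock depth below the clamping bound $\hat L$ whenever we start analysis from a reachable caller state, so the $\min$ is trivial.

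The main obstacle is pinning down ``balanced locking'' precisely enough for the induction to go through cleanly, especially the interaction with the inter-procedural case (where the body must be balanced when considered with its own nested calls) and with the $\addlock$ clamp at $\hat L$. Once balance is formalised structurally, the induction is short and mechanical, since every lattice operation in $\asem{\cdot}$ that touches $L$ is either $\max$ (handled by equal operands under the IH) or $\addlock{\cdot}{0}$ (handled by the body being balanced).
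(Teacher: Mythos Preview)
Your overall strategy---structural induction on $C$---matches the paper's proof, and your handling of conditionals, loops, and method calls is essentially what the paper (tersely) intends. However, there is a genuine gap in your sequential-composition case.

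You write that for $C;c$ ``each factor must itself be balanced.'' This is false when $c = \cmdunlock$. Consider the balanced program $\cmdlock;\,C';\,\cmdunlock$ (with $C'$ balanced). Decomposing according to the grammar as $(\cmdlock;\,C');\,\cmdunlock$, the prefix $\cmdlock;\,C'$ has one unmatched $\cmdlock$ and is \emph{not} balanced, so your induction hypothesis does not apply to it. The paper's proof sketch avoids this by ``counting the number of `pending' $\cmdunlock{}/\cmdlock{}$ statements, which eventually balance each other''---in other words, it implicitly strengthens the induction hypothesis to something like: for any (not necessarily balanced) $C$ with net lock change $k$, $\asem{C}\angled{W,L,A}_L = L + k$ (modulo clamping). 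Balanced programs are then the special case $k = 0$. Your induction needs this strengthening to go through.

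On the $\hat L$ clamp: you are right to flag it, and the paper glosses over it just as much as you do. The cleanest resolution is to observe that the only way $L$ is produced in the development is via $\incrlock{}$ (both in $\exec$ and in $\asem{\cdot}$), so every $L$ that actually arises satisfies $L \le \hat L$; under that invariant $\addlock{L}{0} = L$, and more generally the strengthened hypothesis holds provided the maximum nesting depth stays below $\hat L$. Neither the paper nor you makes this explicit, but it is needed for the argument to be literally correct.
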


\begin{lemma}[Balanced locking and syntactic traces]
  \label{lem:abstraction-state-insensitivity}
  For any program $C$ with balanced locking
  and well-behaved states $S_1,h_1$ and $S_2,h_2$ and $L \ge 0$,
  \[
  \{\hat \tau \mid \tau\in\sem{C}\angled{S_1,h_1,L})\} =
  \{\hat \tau \mid \tau\in\sem{C}\angled{S_2,h_2,L})\} \neq \emptyset.
  \]
  As a corollary,
  $\alpha(\sem{C}\angled{S_1, h_1, 0}) = \alpha(\sem{C}\angled{S_2,
    h_2, 0})$.
\end{lemma}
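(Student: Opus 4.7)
The plan is to proceed by structural induction on $C$, simultaneously establishing three joint invariants whenever $(S,h)$ is well-behaved and $L \ge 0$: (i) $\sem{C}\angled{S,h,L}$ is non-empty; (ii) every trace in it ends in a well-behaved state whose lock level is again $L$; and (iii) the set of syntactic traces $\{\hat\tau \mid \tau\in\sem{C}\angled{S,h,L}\}$ depends only on $C$ and $L$. Invariant (ii) is essential for the compositional cases, since it lets the induction hypothesis be re-applied at any intermediate configuration; well-behavedness preservation is supplied by Lemma~\ref{lem:well-behavedness}, and lock-level preservation by Lemma~\ref{lm:balanced}.

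The base cases cover the simple commands. The only rules whose semantics can collapse to $\emptyset$ are the memory accesses $x := \pi$ and $\pi := x$ (when $\lval{\pi}{s,h}\notin\dom{h}$) and $\cmdunlock$ (when $L \le 0$). Well-behavedness, together with the field-completeness that the $\cmdalloc$ rule installs and subsequent commands preserve, guarantees $\lval{\pi}{s,h}\in\dom{h}$; and balanced locking ensures every $\cmdunlock$ sits inside a matching $\cmdlock/\cmdunlock$ block, so a syntactic counting argument gives $L \ge 1$ at that point. In each base case the resulting singleton syntactic trace is manifestly independent of $(S,h)$.

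For the inductive step I would dispatch sequential composition, conditionals, loops, and method calls in turn, each appealing to the IH at the appropriate intermediate configuration whose lock level remains $L$. The method call $\m(\e_1,\ldots,\e_n)$ is the subtle case: evaluating each $\e_i$ requires well-behavedness of the caller state, and after the semantic $\cmdpush$ I must show that the extended call-stack with new top frame $\hat s$ is still well-behaved, so that the IH applies to $\body{\m};\cmdpop{}$. The main obstacle is the loop case $\cmdwhile{*}{C'}$: the fixed point $\fix\WF$ extends traces by $\sem{C'}$ at arbitrarily late intermediate states, and I must argue, via a nested induction on the $\WF$-approximation chain, that the set of syntactic extensions added at each iteration is independent of which concrete prefix produced the extension. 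This reduces precisely to invariant (ii), which guarantees a well-behaved intermediate state with lock level $L$ at every iteration, so the outer IH applies uniformly. The corollary is then immediate: by Figure~\ref{fig:exec} the function $\exec$ (and hence $\alpha$) examines only the syntactic trace and the initial abstract state, so equal syntactic-trace sets at $L = 0$ yield equal abstractions.
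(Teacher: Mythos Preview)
Your proposal is correct and essentially expands the paper's own brief sketch, which lists exactly your three key ingredients: balanced locking prevents $\cmdunlock$ from returning $\emptyset$, well-behavedness prevents loads and stores from returning $\emptyset$, and well-behavedness is preserved along traces (Lemma~\ref{lem:well-behavedness}). One small correction: Lemma~\ref{lm:balanced} is stated for the \emph{abstract} semantics $\asem{\cdot}$, so for invariant~(ii) you need its concrete-semantics analogue (that $\llast{\tr}$ has lock level $L$ for every $\tr\in\sem{C}\angled{S,h,L}$), which follows by the same lock-counting argument on balanced $\cmdlock/\cmdunlock$ pairs.
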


The proof of Lemma~\ref{lem:abstraction-state-insensitivity} hinges on
the the following observations:
\begin{itemize}
\item Balanced locking ensures that
  $\sem{\cmdunlock}~\angled{S,h,L}\neq\emptyset$, which the case for
  any intermediate state of the traces of the programs in
  consideration.
\item Well-behavedness ensures that the set of traces for loads and stores is non-empty.
\item Well-behavedness is preserved along traces (\cf{} the semantics
  of $\cmdalloc$).
  \end{itemize}
\noindent
That is, for compound programs with balanced lockings the
$L$-component of the abstraction remains immutable (by the end of
execution) for both $\asem{\cdot}$ and the abstraction $\alpha$ over
the concrete semantics.
The following lemma delivers the completeness in the intra-procedural
case.

\begin{lemma}
\label{lm:com-compound}
For any compound program $C$ with balanced locking and no method
calls, the starting components $W_0$, $L_0$, $A_0$, and a set of
well-formed non-empty traces $T$, such that
$\angled{W_0, L_0, A_0} = \alpha(T)$,
  \[
     \asem{C} \angled{W_0, L_0, A_0}
    =
    \begin{cases}
        \alpha\left(\bigcup_{\tr \in T} \set{\tr \append \tr' ~|~ \tr' \in
        \sem{C}~\llast{\tr}}\right)
        & \text{ if}~~T \neq \emptyset,
        \\[4pt]
        \alpha\left(\sem{C} \angled{S, h, 0} \right)
        & \text{ otherwise, for any well-behaved $S,h$,}
      \end{cases}
  \]
  where $\llast{\tr}$ is well-defined, as $T$ consists of non-empty traces.
\end{lemma}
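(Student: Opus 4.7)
The plan is to proceed by structural induction on $C$. For the base case $C = c$ a simple command (other than $\cmdpop{}$, which does not appear in surface programs), Lemma~\ref{lm:comcom2} applies directly; the premise that $\sem{c}~\llast{\tr}$ has the required single-state form for every $\tr \in T$ is guaranteed by preservation of well-behavedness (Lemma~\ref{lem:well-behavedness}), which rules out faulting loads and stores. For sequential composition $C; c$, the induction hypothesis on $C$ produces an intermediate trace set $T'$ satisfying $\asem{C}~\angled{W_0, L_0, A_0} = \alpha(T')$, and a second application of Lemma~\ref{lm:comcom2} to $T'$ and $c$ yields the conclusion. For the conditional $C; (\cmdif{*}{C_1}{C_2})$, two applications of the induction hypothesis — to $C$, and then to each branch on the resulting trace set — combine with additivity of $\alpha$ (Lemma~\ref{lm:add}) to match the lub of branches in $\asem{\cdot}$ against the union of branch-trace sets in $\sem{\cdot}$.

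The main obstacle is the while-loop case. The abstract semantics performs only a single unfolding, $\asem{C}~\angled{W_0, L_0, A_0} \lub \asem{C'}(\asem{C}~\angled{W_0, L_0, A_0})$, whereas the concrete $\fix{\WF}$ iterates arbitrarily. My plan is to show that every iteration beyond the first is subsumed by the first, at the level of $\alpha$. Balanced locking, together with Lemma~\ref{lm:balanced}, ensures that the $L$-component is invariant across $C'$, so the locking context never diverges. Lemma~\ref{lem:abstraction-state-insensitivity} then shows that the set of syntactic traces $\set{\ab{\tr} \mid \tr \in \sem{C'}~\angled{S, h, 0}}$ is independent of the well-behaved starting state, so the $W$- and $A$-contributions of $\sem{C'}$ from any reachable configuration coincide with those from the first iteration. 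Since $W$ and $A$ are accumulated by set union, which is idempotent, further unfoldings add no new elements to $\alpha$. Pairing this observation with the induction hypothesis on $C$ and $C'$, and with additivity of $\alpha$, equates the one-step abstract lub with $\alpha(\fix{\WF}(\sem{C}~\angled{S,h,0}))$.

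The empty-$T$ branch reduces to the non-empty case. Since $\alpha(\emptyset) = \bota$, I would witness the desired abstract state by any singleton $T_0 = \set{\tr_0}$ whose syntactic trace contributes nothing to $W$, $L$, or $A$ — for instance, the one-step trace starting from $\angled{S, h, 0}$ arising from the leading $\cmdskip$ assumed in every program's normal form. Applying the non-empty branch to $T_0$ gives the equality with $\alpha(\sem{C}~\angled{S, h, 0})$, and Lemma~\ref{lem:abstraction-state-insensitivity} ensures the right-hand side is independent of the particular well-behaved $S, h$. Because this lemma restricts attention to method-call-free $C$, the substitution stack $\bar{\theta}$ carried by $\exec$ stays constant along any $\ab{\tr'}$ extending a trace in $T$, so the concrete-side traversal and the abstract-side transfer agree on $W$ and $A$ with no extra bookkeeping — the genuinely delicate interactions between $\bar{\theta}$ and substituted paths are confronted only in the inter-procedural lifting of Section~\ref{sec:inter-procedural}.
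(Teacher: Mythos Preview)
Your proposal follows essentially the same induction structure as the paper's proof, and your treatment of the simple-command, sequential, conditional, and while cases is sound and aligns with the paper. The paper phrases the while argument slightly differently---on the abstract side, observing that $\asem{C''}$ is monotone in $W,A$, preserves $L$ by Lemma~\ref{lm:balanced}, and therefore reaches a fixed point after one application---whereas you argue on the concrete side via state-insensitivity of syntactic traces (Lemma~\ref{lem:abstraction-state-insensitivity}); these are two faces of the same observation and both go through.

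There is one small gap in your handling of the $T = \emptyset$ case. You propose to reduce it to the non-empty case via a singleton $T_0 = \{\tr_0\}$ with $\alpha(T_0) = \bota$, witnessing $\tr_0$ as ``the one-step trace arising from the leading $\cmdskip$''. But $\sem{\cmdskip}~\angled{S,h,L} = \{\epsilon\}$, so $\cmdskip$ contributes no state to any trace, and in fact there is no non-empty well-formed trace whose $\exec$-image is $\bota$: every command that can appear in a trace (an assignment, allocation, or lock operation) contributes something to at least one of $W$, $L$, $A$. The paper sidesteps this by handling $T = \emptyset$ directly rather than by reduction: when $\angled{W_0, L_0, A_0} = \bota$, the base case is the bare check $\asem{c}~\bota = \alpha(\sem{c}~\angled{S,h,0})$, done by case analysis on $c$ together with the observation that well-behavedness makes $\sem{c}~\angled{S,h,0}$ non-empty; the inductive cases then carry this through. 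Your invocation of Lemma~\ref{lem:abstraction-state-insensitivity} still earns its keep here, supplying independence from the particular well-behaved $S,h$.
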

\begin{proof}
  By induction on $C$ and Lemma~\ref{lm:comcom2}. \ifext{Details are
    in~\refp{proof:com-compound} in Appendix~\ref{app:comp}.}{\yada}
\end{proof}

\subsubsection{Inter-procedural case}
\label{sec:inter-procedural}
The main technical hurdle on the way for extending the statement of
Lemma~\ref{lm:comcom2} to an inter-procedural case (\ie, allowing for
method calls) is the gap between the semantic implementation of method
calls via stack-management discipline (Figure~\ref{fig:traces}) and
treatment of method summaries by the analysis, via explicit
substitutions (Figure~\ref{fig:analysis}).
To address this, we introduce the following convention, enforced by
\racerdx's intermediate language representation:
\begin{definition}[ANF]
\label{def:anf}
We say that a method $\m(\argm_1, \ldots, \argm_n)$ is in
Argument-Normal Form (ANF), if for every simple command $c$ in its
body, $c$ is \emph{not} of the form $\pi := \argm_j$ for some
$\argm_j \in \many{\argm_i}$.
\end{definition}
Intuitively, this requirement enforces a ``sanitisation'' of used
arguments from the set $\many{\argm_i}$, so one could
\emph{substitute} them with some (non-variable) paths without
disrupting the syntactic structure of a compound program.
Let us denote as $\wb$ the set of paths
$\set{ \e_i \mid \exists j\neq i.\, \e_i \prefix \e_j \lor \e_j \prefix \e_i}$ for
$i \in \set{1 \dots n}$. We will also denote via $\cup_1$ the
following operation on $\Abs$:
\[
\angled{W, L, A} \wcup W' = \angled{W \cup W', L, A}.
\]
The following lemma accounts for the mentioned concrete/abstract
discrepancy in treatment of methods, allowing us to reformulate the
definition of abstract method summaries returned by the analyzer
($\asem{\m(\e_1,\ldots,\e_n)} \angled{W, L, A}$) via substitutions of
their bodies.

\begin{lemma}[ANF and method summaries]
\label{lm:anf-sum}
For a method $\m$, a vector of expressions $e_1, \ldots, e_n$, a
configuration $\angled{s :: S, h, L} = \llast{\tr}$ such that
  \begin{itemize}
  \item $\forall i, 1 \leq i \leq n, \sem{e_i}_{s, h}$ are defined, and
  \item $\body{\m}$ has no nested calls and features well-balanced
    locking, the following holds:
  \end{itemize}
{\small{
\[
\begin{array}{r@{\ }c@{\ }l}
\asem{\m(\e_1,\ldots,\e_n)} \angled{W, L, A} &{\;=\;}&
\angled{W', L', A'} \wcup \wb, \text{where} \\[3pt]
&&
\angled{W', L', A'} = \asem{\body{\m}[\many{\e_i/\argm_i}]}~\angled{W, L, A}
\end{array}
\]
}}
\end{lemma}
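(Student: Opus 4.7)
The plan is to prove this by structural induction on $\body{\m}$, which is permissible because the body is assumed to contain no nested method calls. The core technical device is a commutation-style lemma: for any compound program $C$ without method calls and with balanced locking, and any substitution $\theta = [\many{\e_i/\argm_i}]$, $\asem{C[\theta]}\angled{W,L,A}$ can be expressed as the pointwise combination of $\angled{W,L,A}$ with $\asem{C}\angled{\emptyset,0,\emptyset}$ after applying $\theta$ to the path and command components of the latter (union for the $W$-component, $\addlock{L}{\cdot}$ for the $L$-component, union with locking-context shift for the $A$-component). This commutation is exactly what morally justifies the summary-based treatment of method calls in the abstract transition function.

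First I would prove the commutation lemma for simple commands by a routine case analysis on $c$. The ANF assumption (Definition~\ref{def:anf}) is crucial here: it forbids commands of the form $\pi := \argm_j$ in $\body{\m}$, so after substituting $\argm_j \mapsto \e_j$ (where $\e_j$ may be a path rather than a variable) the syntactic class of each simple command is preserved; this in turn ensures that $\wobblysymb$, $\lockingsymb$, and $\accessessymb$ act uniformly on $c[\theta]$ as they would have on $c$, modulo the substitution.

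Next I would lift the commutation to arbitrary compound programs by induction on the structure of $C$, covering sequential composition, conditionals, and while loops. The key observation in the composite cases is that the abstract join $\lub$ is defined componentwise (union/$\max$/union), and therefore distributes over the combining operation used to merge $\angled{W,L,A}$ with the accumulated summary contribution; this together with the IH closes each case. Note also that $[\theta]$ only touches the path component of commands in $A$ and leaves their lock contexts untouched, so it commutes cleanly with the $\addlock{L}{\cdot}$ adjustment.

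Finally, with the commutation lemma in hand, I would compare both sides of the stated equality. Unfolding $\asem{\m(\e_1,\ldots,\e_n)}\angled{W,L,A}$ and using Lemma~\ref{lm:balanced} applied to $\body{\m}$ starting from $L=0$, we obtain $L' = 0$, so both sides have the same $L$-component; the $A$-components match directly by the commutation lemma. For the $W$-component, commutation supplies $W \cup \subst{W'}{\theta}$, while the original definition of $\asem{\m(\ldots)}$ additionally contributes the aliasing-prefix paths $\set{\e_i \mid \exists j\neq i,\, \e_i \prefix \e_j}$; these are subsumed by the $\wb$ summand on the right-hand side (which is the symmetric closure of that set), closing the proof. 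The main obstacle is the careful bookkeeping of the $W$-component under substitution: substituting formals by complex paths can create new prefix relationships both among the substituted arguments themselves and between them and body-internal paths, so the induction must verify that precisely the paths flagged by $\wb$ cover any gap between the direct analysis of $\body{\m}[\theta]$ and the summary-application mechanism.
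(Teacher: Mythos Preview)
Your approach is essentially the same as the paper's: the paper's proof is the single line ``By induction on the shape of $\body{\m}$ as a compound command (Figure~\ref{fig:analysis})'', and your proposal unpacks exactly that induction, supplying the commutation-with-substitution lemma that makes each inductive case go through and correctly invoking ANF (Definition~\ref{def:anf}) to ensure the syntactic class of each simple command is preserved under~$[\many{\e_i/\argm_i}]$.

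One point worth flagging: you observe that the aliasing-prefix set added by $\asem{\m(\ldots)}$ in Figure~\ref{fig:analysis}, namely $\set{\e_i \mid \exists j\neq i.\,\e_i\prefix\e_j}$, is only one half of the symmetric $\wb$ appearing on the right-hand side of the lemma, and you argue that the former is ``subsumed by'' the latter. That gives the left-to-right inclusion for the $W$-component, but for the claimed \emph{equality} you also need the reverse: that every element of $\wb$ not already in $\set{\e_i \mid \exists j\neq i.\,\e_i\prefix\e_j}$ is accounted for by the direct analysis $\asem{\body{\m}[\theta]}\angled{W,L,A}$. Your final paragraph gestures at this (``substituting formals by complex paths can create new prefix relationships\ldots''), but the argument is not actually made. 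The paper's one-line proof does not address this either, so this is less a flaw in your plan than a shared loose end; still, if you want a fully rigorous equality you should spell out why the ``extra'' half of $\wb$ is covered (or note that only the inclusion is needed downstream).
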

%
%
%
The first Good Thing afforded by Lemma~\ref{lm:anf-sum} is the removal
of substitutions from the $W$ and $A$ components and moving them to
the method's body, which will give the uniformity necessary for
conducting the forthcoming proofs.
The second Good Thing is the given possibility to ``shift'' the
computation of the method summary from the
$\bota = \angled{\emptyset, 0, \emptyset}$ (as given by
Figure~\ref{fig:analysis}) to the actual abstract context
$\angled{W, L, A}$.
Finally, notice that in the absence of recursion, the statement of
Lemma~\ref{lm:anf-sum} can be generalised to methods with nested
calls, with the proof by induction on the size of the call tree.

Armed by Lemma~\ref{lm:anf-sum} (generalised to nested calls), we can
prove analogues of Lemmas~\ref{lm:comcom}--\ref{lm:com-compound}. In
the presence of method calls. The formal development first establishes
the result for a singleton trace, similar to Lemma~\ref{lm:comcom} for
a program with no \emph{nested} method calls, then lifting it, by
induction on the size of the call graph, to arbitrary non-recursive
call hierarchies. It is then further generalised for sets of traces,
in a way similar to the proof of Lemma~\ref{lm:comcom}.
For the sake of saving space, we do not present the statements of this
lemmas here (as they are not particularly remarkable or pretty) and
refer the reader \ifext{to Appendix~\ref{app:comp}}{the accompanying
  extended version of the paper} for auxiliary definitions and proofs.
This arc of formal results concludes with the following statement,
similar to Lemma~\ref{lm:com-compound}:

\begin{lemma}
\label{lm:com-compound-calls}
For any compound program $C$ with balanced locking and all methods in
ANF, the analysis domain components $W_0$, $L_0$, $A_0$, and a set of
well-formed non-empty traces $T$, such that
$\angled{W_0, L_0, A_0} = \alpha(T)$,
  \[
     \asem{C} \angled{W_0, L_0, A_0}
    =
    \begin{cases}
        \alpha\left(\bigcup_{\tr \in T} \set{\tr \append \tr' ~|~ \tr' \in
        \sem{C}~\llast{\tr}}\right)
        & \text{ if}~~T \neq \emptyset,
        \\[4pt]
        \alpha\left(\sem{C} \angled{S, h, 0} \right)
        & \text{ otherwise, for any well-behaved $S,h$,}
      \end{cases}
  \]
  where $\llast{\tr}$ is well-defined, as $T$ consists of non-empty traces.
\end{lemma}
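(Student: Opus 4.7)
The plan is to extend the induction used for Lemma~\ref{lm:com-compound} with an outer induction on the size of the (finite, by Assumption~\ref{asm:recur}) call graph of $C$. For a fixed call graph size, we do a secondary induction on the syntactic structure of $C$. All cases in which the head command is not a method call are handled exactly as in Lemma~\ref{lm:com-compound}: the simple-command step uses Lemma~\ref{lm:comcom2}, sequential composition is immediate from additivity of $\alpha$ (Lemma~\ref{lm:add}), the conditional case is additive since both $\sem{\cdot}$ and $\asem{\cdot}$ take unions/joins over the branches, and the while case follows by Kleene iteration together with the fact that each iteration re-applies an abstract step that is already shown complete under the induction hypothesis.

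The new case is $C = C''; \m(\e_1,\ldots,\e_n)$. First, apply the structural induction hypothesis to $C''$ to reduce the problem to showing completeness of $\asem{\m(\e_1,\ldots,\e_n)}$ over a set of traces $T'$ obtained by extending each $\tr \in T$ with the traces of $C''$ from $\llast{\tr}$. Now invoke Lemma~\ref{lm:anf-sum} (in its extension to nested calls, justified by induction on the size of the call tree using ANF): this rewrites
\[
\asem{\m(\e_1,\ldots,\e_n)}\,\angled{W,L,A} \;=\; \asem{\body{\m}[\many{\e_i/\argm_i}]}\,\angled{W,L,A} \;\wcup\; \wb.
\]
The substituted body $\body{\m}[\many{\e_i/\argm_i}]$ contains one fewer outer call than $C$, so its call graph is strictly smaller and the outer induction hypothesis applies to it, yielding completeness of the analysis against the abstraction of $\sem{\body{\m}[\many{\e_i/\argm_i}]}$ from each $\llast{\tr'}$ with $\tr' \in T'$.

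The main obstacle is to reconcile the concrete call semantics with this substituted-body view on the trace side, since the concrete semantics in Figure~\ref{fig:traces} realises a call by pushing a fresh frame $\hat{s} = s_{\nil}[\argm_i \mapsto \eval{\e_i}{s,h}]$, executing $\body{\m}$, and popping, while the abstract argument above speaks about $\body{\m}$ with the formals already replaced by the actual paths. I would prove an auxiliary trace-level lemma stating that for any $\tr' \in T'$ and any $\tr_{\text{call}} \in \sem{\m(\e_1,\ldots,\e_n)}~\llast{\tr'}$, if one extends $\exec$ through the $\cmdpush{\e_1,\ldots,\e_n}$, the body steps, and the matching $\cmdpop{}$, then the resulting accesses and wobbly set agree (as elements of $\Abs$) with those obtained by running $\exec$ on a trace of $\body{\m}[\many{\e_i/\argm_i}]$ from the caller state, modulo the extra entries in $\wb$ contributed by push. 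This agreement is exactly what the substitution-stack component $\bar\theta$ in Figure~\ref{fig:exec} is designed to track: the $\cmdpush$ clause of $\exec$ pushes $[\argm_i \mapsto \e_i]$ and adds the $\wb$-style paths, every load/store rooted at a formal has its formals substituted away at the moment of recording, and $\cmdpop{}$ discards the frame; since the formal arguments are not reassigned inside the body (ANF, Definition~\ref{def:anf}), the substitution stack is stable throughout the body's execution and commutes with recording accesses. Hence applying $\alpha$ to the push/body/pop segment equals applying $\alpha$ to the substituted body, up to the extra $\wb$ paths. Finally, well-behavedness of the caller state is preserved across the call (Lemma~\ref{lem:well-behavedness}), balanced locking is preserved across $\body{\m}$ (Lemma~\ref{lm:balanced}), and the $T = \emptyset$ branch is handled using Lemma~\ref{lem:abstraction-state-insensitivity} exactly as in Lemma~\ref{lm:com-compound}. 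Combining the outer induction hypothesis with this auxiliary lemma and Lemma~\ref{lm:anf-sum} closes the method-call case and the induction.
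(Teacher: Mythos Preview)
Your proposal is correct and follows essentially the same route as the paper. The only difference is organisational: the paper factors the method-call case into a separate tower of lemmas (Lemma~\ref{lm:anf-subst}, then the per-trace and set-of-traces completeness results Lemmas~\ref{lm:mcom1}--\ref{lm:mcom3}) and then invokes Lemma~\ref{lm:mcom3} as a black box in the structural induction, whereas you fold that reasoning into the main proof via an outer induction on call-graph depth. Your ``auxiliary trace-level lemma'' reconciling the $\cmdpush{}$/body/$\cmdpop{}$ concrete trace with the substituted-body trace under $\exec$ is exactly Lemma~\ref{lm:anf-subst}, and your outer induction plays the role of Lemma~\ref{lm:mcom2}'s lift from non-nested to nested calls.
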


\subsubsection{Main completeness result}
\label{sec:main-completeness}

We are now ready to establish the completeness of the abstract
semantics $\asem{\cdot}$ of our analysis \wrt trace-collecting
semantics $\sem{\cdot}$ and abstraction $\alpha$.

\begin{theorem}[Completeness of \racerdx analysis.]
  \label{thm:completeness}
  For any compound program $C$ with all methods in ANF, balanced
  locking, and well-behaved $S, h$,
  \[
    \asem{C} \bota
    =
    \alpha\left(\sem{C} \angled{S, h, 0} \right).
  \]
\end{theorem}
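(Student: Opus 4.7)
The plan is to derive Theorem~\ref{thm:completeness} as an immediate corollary of Lemma~\ref{lm:com-compound-calls} by instantiating the trace set $T$ with $\emptyset$. First I would note that both of the theorem's hypotheses on $C$ (balanced locking, all methods in ANF) match the hypotheses of Lemma~\ref{lm:com-compound-calls} verbatim, so the only thing to verify is that $\bota$ arises as $\alpha(T)$ for some admissible set of well-formed non-empty traces $T$. The natural candidate is $T = \emptyset$.

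The key observation is that $\alpha(\emptyset) = \bota$. This follows directly from Definition~\ref{def:alpha}: $\alpha(\emptyset) = \bigsqcup_{\tr \in \emptyset} (\fst\ \exec\ \ab{\tr}\ (\bota,\epsilon))$, and the least upper bound of the empty family in $\Abs$ is the bottom element, which is exactly $\bota = \angled{\emptyset, 0, \emptyset}$ by the componentwise definition of $\pre$ on $\Abs$. One could also appeal to additivity of $\alpha$ (Lemma~\ref{lm:add}) and the fact that additive maps preserve the empty join. Thus, taking $\angled{W_0, L_0, A_0} = \bota$ and $T = \emptyset$ satisfies $\angled{W_0, L_0, A_0} = \alpha(T)$, and $T$ vacuously consists of well-formed non-empty traces.

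Because $T = \emptyset$, the first branch of the case analysis in Lemma~\ref{lm:com-compound-calls} is vacuous, and the second branch applies: for any well-behaved $S, h$, $\asem{C}\ \bota = \alpha(\sem{C}\ \angled{S,h,0})$. The theorem supplies precisely such a well-behaved $S, h$, so the desired equality follows.

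There is no real obstacle: all the technical content---handling simple commands (Lemmas~\ref{lm:comcom}, \ref{lm:comcom2}), compound programs without calls (Lemma~\ref{lm:com-compound}), balanced-locking invariants (Lemmas~\ref{lm:balanced}, \ref{lem:abstraction-state-insensitivity}), and the ANF-based reduction of method summaries to substitution (Lemma~\ref{lm:anf-sum})---has already been discharged in the preceding subsections. The only care needed is the computation $\alpha(\emptyset) = \bota$, and spelling out that the ``empty $T$'' side condition in Lemma~\ref{lm:com-compound-calls} is the intended entry point for bootstrapping the equality at the top level, where no prior prefix trace is available.
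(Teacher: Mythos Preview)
Your proposal is correct and matches the paper's own proof essentially verbatim: the paper simply says the theorem follows immediately from Lemma~\ref{lm:com-compound-calls} by taking $\angled{W_0, L_0, A_0} = \bota$ and $T = \emptyset$. Your additional remarks (that $\alpha(\emptyset)=\bota$ via the empty join, and that $T=\emptyset$ vacuously satisfies the well-formed non-empty trace condition) just spell out the instantiation the paper leaves implicit.
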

\begin{proof}
Follows immediately from Lemma~\ref{lm:com-compound-calls} by taking
$\angled{W_0, L_0, A_0} = \bota$, $T = \emptyset$.
\end{proof}

\section{Towards TP Theorem, Part II: Reconstructing Racy Traces}
\label{sec:nfp-reconstr}

Here we show how, given a pair of programs $C_1,C_2$ for which \racerdx reports a race, we \emph{construct}
a concurrent trace (as defined in Figure~\ref{fig:ctraces}) that \emph{exhibits the race}, in the sense of
Definition~\ref{def:race}.

We crucially rely on the notion of \emph{stability} (\emph{non-wobblyness}): an access path $\pi$ is stable if
no proper prefix of $\pi$ is ever read or written.  We track the negation of this property through the
$W$ component of the abstract state, which is slightly complicated by our need to track accesses
to proper prefixes of $\pi$ even inside procedure calls, and thus employ substitutions of actuals over
formals (cf.~the definition of $\exec$ for load/stores and calls).
Note that stability does not preclude accesses to the path itself.
Stability allows us to prove that the backbone of a path (the domain of its heap-image) is
preserved during execution, and thus when the path is accessed at the end of a trace for $C_1$/$C_2$,
the same address in both threads is accessed.

However, not all traces of $C_1$/$C_2$ will allow us to do this. For this reason we introduce the
notions of \emph{path disconnectedness} and \emph{acyclicity}, two semantic restrictions that ensure
that accesses through paths syntactically distinct to $\pi$, and suffixes of $\pi$ respectively,
do not affect the heap-image of $\pi$.

\begin{definition}[Prefix]
An access path $\pi = x.\fs$ is a prefix of $\pi' = y.\gs$ if $x = y$
and the sequence $\mathsf{fs}$ is a prefix of $\mathsf{gs}$. We denote this fact
by $\pi \prefix \pi'$. The notion of proper prefix ($\properprefix$) is straightforward.
We lift this to expressions by fixing $x \properprefix x.\fs$ and $x\prefix x$.
\end{definition}
The root of an access path $x.\mathsf{fs}$ is $x$, \ie, $\rootvar{x.\fs} = x$.
If $\pi = x.\fs$, we may write $\pi.\g$ or $x.\fs.\g$ to mean $x.(\fs.\g)$.
We set $|x.\fs|=|\fs|$, \ie, the length of the field list.

\begin{definition}[Path Footprint]
Let $s,h$ be a state where $\lval{\pi}{s,h}$ is defined. Define the \emph{footprint of $\pi$},
denoted $\pathimg{h}{s}{\pi}$, as
\[
\pathimg{h}{s}{\pi} = \bigcup_{\pi'\prefix\pi } \{ \lval{\pi'}{s,h} \mapsto h(\lval{\pi'}{s,h}) \}\ .
\]
\end{definition}
Intuitively, the footprint of an access path is a heap fragment containing all addresses necessary to resolve the address
of the path, inclusive. It's easy to see that
\[ \dom{\pathimg{h}{s}{\pi}} = \{ \lval{\pi'}{s,h} \mid \pi' \prefix\pi\}. \]

\begin{definition}[Disconnectedness]
A path $\pi$ is \emph{disconnected in a state $s,h$} if $\lval{\pi}{s,h}$ is defined, and
\begin{itemize}
\item for any $x\neq\rootvar{\pi}$, $s(x)\notin\locn{\pathimg{h}{s}{\pi}}$;
\item for all $\alpha\in\dom{h}$, if
$h(\alpha)\in\locn{\pathimg{h}{s}{\pi}}$ then $\alpha\in\dom{\pathimg{h}{s}{\pi}}$.
\end{itemize}
\end{definition}
%

\begin{definition}[Acyclicity]
A path $\pi$ is \emph{acyclic} in state $s,h$ if $\lval{\pi}{s,h}$ is defined,
and for any two paths $\pi''\prefix\pi'\prefix\pi$,
$h(\lval{\pi'}{s,h})\neq\locn{\lval{\pi''}{s,h}}$.
\end{definition}

\begin{definition}[Path preservation]
Let $\pi$ be a path, and $s,h$ and $s',h'$ two states. We say that $\pi$ is preserved from $s,h$ to $s',h'$
if
\begin{enumerate}
  \item $\pi$ is disconnected and acyclic in $s,h$ and $s',h'$;
  \item $s(\rootvar{\pi})=s'(\rootvar{\pi})$;
  \item $\dom{\pathimg{h}{s}{\pi}} = \dom{\pathimg{h'}{s'}{\pi}}$;
  \item $\forall\alpha\in\dom{\pathimg{h}{s}{\pi}}\setminus\{\lval{\pi}{s,h}\}.\,h(\alpha)=h'(\alpha)$.
\end{enumerate}
\end{definition}
It is easy to see that if $\pi$ is preserved from $s,h$ to $s',h'$, then $\lval{\pi}{s,h}=\lval{\pi}{s',h'}$.
Also, is is straightforward to see that path preservation is an equivalence relation on memory states.

We begin by showing that a trace without calls, for which $\pi$ is stable, preserves $\pi$.
\begin{lemma}
\label{lem:stability-means-preservation-nocalls}
Let $\pi$ be an access path and $\tr$ a WF, non-empty trace without any
$\cmdpush{}$/$\cmdpop{}$ commands,
whose starting state
$\angled{\cmdskip, s_0\cons S, h_0, \_}$ is well-behaved,
and where $\pi$ is acyclic and disconnected in $s_0,h_0$.
Let the last state of $\tr$ be $\angled{\_, s'\cons S, h', \_}$.

If it is the case that $\forall \e\in \alpha(\set{\tr})_W. \e\not\properprefix\pi$, then
$\pi$ is preserved from $s_0,h_0$ to $s',h'$.
\end{lemma}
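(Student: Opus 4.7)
The natural approach is induction on the length of $\tr$. The base case ($\tr$ a singleton, $s' = s_0$ and $h' = h_0$) is immediate from the hypotheses, since path preservation is reflexive. For the inductive step, I would write $\tr = \tr_0 \snoc \astate_{n+1}$ where $\astate_{n+1} = \angled{c, s_{n+1}\cons S, h_{n+1}, L_{n+1}}$ for a simple command $c$ (no $\cmdpush{}/\cmdpop{}$ by assumption), and let $\angled{-, s_n \cons S, h_n, L_n}$ be $\llast{\tr_0}$. Because $\exec$ only ever accumulates in the $W$ component, $\alpha(\set{\tr_0})_W \subseteq \alpha(\set{\tr})_W$, so the wobbliness hypothesis transfers to $\tr_0$, the IH applies, and $\pi$ is preserved from $s_0,h_0$ to $s_n,h_n$; in particular, it is disconnected and acyclic at $s_n,h_n$, and well-behavedness is carried along by Lemma~\ref{lem:well-behavedness}. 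It then suffices to show that the single step from $s_n,h_n$ to $s_{n+1},h_{n+1}$ preserves $\pi$, for then preservation from $s_0,h_0$ to $s',h'$ follows by transitivity.

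The case analysis on $c$ is where all the work happens. The cases $\cmdskip$, $\cmdlock$ and $\cmdunlock$ are trivial since neither stack nor heap change. For $x:=y$, $x:=\pi''$ and $x:=\cmdalloc$, the assigned variable $x$ is added to $\alpha(\set{\tr})_W$ by $\exec$; since $\pi$ has at least one field (by the grammar), $x \properprefix \pi$ would contradict the hypothesis, so $x \neq \rootvar{\pi}$. The heap is therefore either unchanged (first two cases) or extended only at a fresh location $\ell \notin \locn{h_n}$ (case $\cmdalloc$); freshness combined with well-behavedness of $h_n$ gives that $\ell \notin \locn{\pathimg{h_n}{s_n}{\pi}}$, so the footprint, disconnectedness, and acyclicity of $\pi$ are unperturbed.

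The substantive case is the store $\pi'' := x$. Let $\beta = \lval{\pi''}{s_n,h_n}$; I must show that $\beta \notin \dom{\pathimg{h_n}{s_n}{\pi}} \setminus \set{\lval{\pi}{s_n,h_n}}$. Split on whether $\rootvar{\pi''} = \rootvar{\pi}$. If not, disconnectedness of $\pi$ at $s_n,h_n$ gives $s_n(\rootvar{\pi''}) \notin \locn{\pathimg{h_n}{s_n}{\pi}}$, and the second clause of disconnectedness (heap entries whose values lie inside $\locn{\pathimg{h_n}{s_n}{\pi}}$ already live in the footprint) lets one argue by induction on $|\pi''|$ that $\beta$ stays outside the footprint. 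If $\rootvar{\pi''} = \rootvar{\pi}$, distinguish three subcases on the prefix relationship: $\pi'' \properprefix \pi$ is excluded because the store adds $\pi''$ to $W$, contradicting the hypothesis; $\pi'' = \pi$ is exactly the allowed case; and if $\pi''$ is a strict extension or syntactically incomparable with $\pi$ but still shares the root, acyclicity of $\pi$ (together with the disconnectedness argument inside the footprint) excludes $\beta$ from being $\lval{\pi'''}{s_n,h_n}$ for any $\pi''' \properprefix \pi$. This case, combined with the bookkeeping needed to show that disconnectedness and acyclicity themselves are re-established at $s_{n+1},h_{n+1}$ so that the induction closes, is where I expect the main technical friction to lie.

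Finally, after the local step has been shown to preserve $\pi$, transitivity of path preservation (an equivalence relation on well-behaved states, as noted in the text following its definition) combines the IH-provided preservation $s_0,h_0 \leadsto s_n,h_n$ with the one-step preservation $s_n,h_n \leadsto s_{n+1},h_{n+1}$, delivering preservation of $\pi$ from $s_0,h_0$ to $s',h'$ and completing the induction.
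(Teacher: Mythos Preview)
Your overall strategy---induction on the trace length, then case analysis on the last simple command, using transitivity of preservation---is exactly the paper's approach. The store case decomposition you sketch (split on whether the roots coincide, then on the prefix relation) is a reasonable inline rendering of what the paper factors out as Lemma~\ref{lem:indomain-prefix}: from $\pi'\not\prefix\pi$ together with acyclicity and disconnectedness of $\pi$, conclude $\lval{\pi'}{s,h}\notin\dom{\pathimg{h}{s}{\pi}}$.

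There is, however, a genuine gap in your treatment of the load and variable-copy cases. You only observe that the \emph{assigned} variable $x$ is put into $W$ (hence $x\neq\rootvar{\pi}$), and then assert that ``the footprint, disconnectedness, and acyclicity of $\pi$ are unperturbed.'' That is not enough to re-establish the \emph{first} clause of disconnectedness at the post-state: you must show $s'(x)\notin\locn{\pathimg{h'}{s'}{\pi}}$. For $x:=y$ this needs $y\neq\rootvar{\pi}$; for $x:=\pi''$ it needs $\pi''\not\properprefix\pi$ so that (via the same argument you reserve for the store case, i.e.\ Lemma~\ref{lem:indomain-prefix} plus the contrapositive of the second disconnectedness clause) $h(\lval{\pi''}{s,h})\notin\locn{\pathimg{h}{s}{\pi}}$. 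Both facts are available---$\exec$ adds the \emph{right-hand side} to $W$ as well---but you never invoke them, and without them the claim that disconnectedness is ``unperturbed'' is false. The paper's proof spends roughly as much effort on these load cases as on the store case, precisely because of this point; your sketch localises all the friction in the store case and glosses over the loads.
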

\begin{proof}
  By induction on the length of $\tr$ and results on stability.
  See \ifext{Appendix~\ref{app:reconstr}}{the accompanying extended version} for details of the proof.
\end{proof}

We can now lift this up to the trace of a single, non-nested procedure call.

\begin{lemma}
  \label{lem:nocalls-preservation}
  Suppose that $\mbody(\m)$ has no procedure calls, that $\pi$ is an access path which is
  disconnected and acyclic in the well-behaved state $s,h$,
  that $\tr\in\sem{\cmdskip;\m(\bar{\e})}\angled{s,h,L}$ is a non-empty trace,
  that $\llast{\tr}=\angled{\cmdpop{},s,h',\_}$, that $\alpha(\set{\tr})=\angled{W,\_,\_}$, and that
  $\forall\e\in W.\, \e\not\properprefix\pi$.

  Then, $\pi$ is preserved from $s,h$ to $s,h'$.
\end{lemma}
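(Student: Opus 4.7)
The plan is to reduce this to Lemma \ref{lem:stability-means-preservation-nocalls} applied to the body of $\m$. Unfolding the method-call clause of Figure \ref{fig:traces}, $\tr$ decomposes as
\[
\tr = \tr_0 \append [\angled{\cmdpush{\bar{\e}}, \hat s\cons s\cons S, h, L}] \append \tr_b \append [\angled{\cmdpop{}, s, h', L}],
\]
where $\tr_0$ is the (possibly empty) trace of $\cmdskip$, $\hat s = s_\nil[\argm_1\mapsto\eval{\e_1}{s,h}]\cdots[\argm_n\mapsto\eval{\e_n}{s,h}]$, and $\tr_b$ is a trace of $\mbody(\m)$ running on top of $\hat s\cons s\cons S$ and ending in some $\angled{\_, \hat s'\cons s\cons S, h', \_}$. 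Because $\mbody(\m)$ is call-free, $\tr_b$ contains no nested $\cmdpush{}$/$\cmdpop{}$, so after prepending a $\cmdskip$ step it is eligible for Lemma \ref{lem:stability-means-preservation-nocalls}.

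I would then split on whether any formal argument is a proper prefix of $\pi$. A short chain-chasing argument using disconnectedness of $\pi$ in $(s,h)$ shows that $\hat s(\argm_j)\in\locn{\pathimg{h}{s}{\pi}}$ iff $\e_j\prefix\pi$, and moreover that a write through a callee path starting at $\argm_j$ touches an address in $\dom{\pathimg{h}{s}{\pi}}$ only if $\e_j$ is a \emph{strict} prefix of $\pi$. Hence if no $\e_j\properprefix\pi$ (including the edge case where some $\e_j$ equals $\pi$), then---together with the facts that callee locals start uninitialised, fresh allocations land outside the footprint, and the second disconnectedness clause forbids re-entering the footprint via heap traversal---every intermediate heap in $\tr_b$ agrees with $h$ on $\pathimg{h}{s}{\pi}$, so $h = h'$ there and preservation is immediate.

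In the remaining case, pick some longest $\e_i$ with $\e_i\properprefix\pi$, write $\pi=\e_i.\gs$ with $\gs\neq\epsilon$, and set $\pi_b=\argm_i.\gs$. Acyclicity of $\pi_b$ in $(\hat s,h)$ is inherited since $\pathimg{h}{\hat s}{\pi_b}$ is the suffix of $\pathimg{h}{s}{\pi}$ starting at $\hat s(\argm_i)$. For disconnectedness, the only dangerous formals $\argm_j$ ($j\neq i$) are those with $\e_j\prefix\e_i$ or $\e_i\prefix\e_j\prefix\pi$, and each such pair triggers the $\cmdpush{}$ rule of $\exec$ in Figure \ref{fig:exec} to add some $\e_k\properprefix\pi$ to $W$ under the (empty) outer substitution stack, contradicting $\forall\e\in W.\,\e\not\properprefix\pi$. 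To feed $\pi_b$ into Lemma \ref{lem:stability-means-preservation-nocalls} I must also check $\forall\e\in\alpha(\{\tr_b\})_W.\,\e\not\properprefix\pi_b$: unfolding $\exec$, every callee access to $\argm_j.\bar\g$ contributes $\e_j.\bar\g$ to $W=\alpha(\{\tr\})_W$ via the pushed substitution $[\argm_k\mapsto\e_k]$, so any proper prefix of $\pi_b$ inside $\alpha(\{\tr_b\})_W$ would yield a proper prefix of $\pi$ in $W$, again a contradiction.

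Lemma \ref{lem:stability-means-preservation-nocalls} then gives preservation of $\pi_b$ from $(\hat s,h)$ to $(\hat s',h')$, which I would lift to preservation of $\pi$ from $(s,h)$ to $(s,h')$: the caller stack $s$ and in particular $s(\rootvar{\pi})$ are untouched by the body, and both footprints $\pathimg{h}{s}{\pi}$ and $\pathimg{h'}{s}{\pi}$ decompose into the fixed caller-side prefix chain from $\rootvar{\pi}$ down to $\hat s(\argm_i)=\eval{\e_i}{s,h}$ (which the callee cannot touch, by the previous case analysis applied to $\e_i$ itself) plus the preserved tails $\pathimg{h}{\hat s}{\pi_b}$ and $\pathimg{h'}{\hat s'}{\pi_b}$. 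I expect the principal obstacle to be the bookkeeping in translating callee accesses back to caller accesses, and in particular confirming that the aliasing addenda of the $\cmdpush{}$ rule of $\exec$ are precisely what rules out formals that would otherwise alias into $\pi$'s footprint.
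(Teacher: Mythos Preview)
Your overall strategy—decompose the call trace and reduce to Lemma~\ref{lem:stability-means-preservation-nocalls} on the body—is the paper's strategy too, and your two-way case split is close to the paper's three-way split. But there is a genuine gap in your Case~B.

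When the distinguished actual $\e_i\properprefix\pi$ is a \emph{non-trivial path} (not just the variable $\rootvar{\pi}$), your callee-side path $\pi_b=\argm_i.\gs$ is \emph{not} disconnected in $(\hat s,h)$. You only argue the first clause of disconnectedness (ruling out other formals $\argm_j$), and that argument is fine. The second clause fails: take $\alpha=\lval{\e_i}{s,h}$, the last address of the caller-side prefix chain. Then $h(\alpha)=\eval{\e_i}{s,h}=\hat s(\argm_i)\in\locn{\pathimg{h}{\hat s}{\pi_b}}$, yet $\alpha$ lies in $\dom{\pathimg{h}{s}{\pi}}\setminus\dom{\pathimg{h}{\hat s}{\pi_b}}$ (by acyclicity of $\pi$, $\locn{\alpha}\neq\hat s(\argm_i)$). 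So Lemma~\ref{lem:stability-means-preservation-nocalls} does not apply to $\pi_b$ over the full heap $h$. The paper repairs exactly this by invoking a semantic frame property: it restricts to the subheap $\tilde h\subseteq h$ reachable from the actuals $\eval{\e_k}{s,h}$, observes that the body runs equally well over $\tilde h$ (with the rest of $h$ as an inert frame), and checks that in $\tilde h$ the offending prefix cells are absent so the second clause does hold. You would need an analogous step; your final paragraph's remark that ``the callee cannot touch'' the prefix chain is downstream of establishing preservation of $\pi_b$, not a substitute for disconnectedness.

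A smaller point on your Case~A (no $\e_j\properprefix\pi$): your direct argument that the callee never reaches the footprint is morally right, but as written it re-runs the induction underlying Lemma~\ref{lem:stability-means-preservation-nocalls} from scratch. The paper avoids this by a neat device: it extends the callee stack with a fresh formal $\argm_{n+1}\mapsto s(\rootvar{\pi})$ (unused in $\body{\m}$, hence the trace is unchanged), so that $\rho=\argm_{n+1}.\fs$ becomes a bona fide callee-side path that \emph{is} disconnected and acyclic in $(\hat s[\argm_{n+1}\mapsto s(\rootvar{\pi})],h)$, and then applies the lemma directly. You may want to adopt this to keep the case uniform.
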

\begin{proof}
  First, notice that $\forall\e\in W.\,\e\not\properprefix\pi$ and the definition of $\exec$ implies
  there is at most one $\e_i$ such that $\e_i= z$, or $\e_i=\pi_i\properprefix\pi$. We then analyse
  the three cases separately: no $\e_i\properprefix\pi$, $\e_i=\rootvar{\pi}$ and $\e_i\properprefix\pi$.
  The most interesting is the last one, where we cut out the subheap of $h$ reachable from the
  arguments, apply Lemma~\ref{lem:stability-means-preservation-nocalls} to it, and then restore
  the facts we prove back to the top-level stack. Notably, here we use the \emph{frame property}
  at the semantics level \cite{Hongseok02}.
\end{proof}

We can further lift the preservation result to traces that have arbitrarily nested procedure calls.
\begin{lemma}
  \label{lem:stability-on-traces}
  Let $\tr$ be a WF, non-empty trace with matched $\cmdpush{}/\cmdpop{}$ pairs,
  whose first state is $\angled{\cmdskip, s,h,\_}$ and
  $\llast{\tr}=\angled{\_,s',h',\_}$.  Let $\pi$ be an access path which is
  disconnected and acyclic in the well-behaved state $s,h$,
  that $\alpha(\set{\tr})=\angled{W,\_,\_}$, and that
  $\forall\e\in W.\, \e\not\properprefix\pi$.

  Then, $\pi$ is preserved from $s,h$ to $s,h'$.
\end{lemma}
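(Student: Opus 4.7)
The plan is induction on the number of top-level $\cmdpush{}$ commands appearing in $\tr$. The base case of zero top-level pushes forces $\tr$ to contain no calls at all, since any deeper pushes would also have to be preceded by top-level ones (matched push/pop pairs account for every push). Hence Lemma~\ref{lem:stability-means-preservation-nocalls} applies directly and yields preservation from $(s,h)$ to $(s,h')$.

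For the inductive step with $n>0$, let $\astate_j$ be the first top-level $\cmdpush{}$ in $\tr$ and $\astate_k$ its matching top-level $\cmdpop{}$. Split $\tr=\tr_1\append\tr_2\append\tr_3$ where $\tr_1$ is the prefix ending at $\astate_{j-1}$ (containing no push/pop at all, by minimality of $j$), $\tr_2=[\astate_j,\ldots,\astate_k]$ is a single call block, and $\tr_3$ is the remaining suffix. Let $(s_1,h_1)$ and $(s_1,h_2)$ be the stack/heap pairs at the ends of $\tr_1$ and $\tr_2$, noting that the top-level stack is restored by the matching pop. Monotonicity of $\alpha$ ensures that the $W$-component of each sub-trace is contained in the outer $W$, so the stability hypothesis transfers to each piece. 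Applying Lemma~\ref{lem:stability-means-preservation-nocalls} to $\tr_1$ gives preservation $(s,h)\to(s,h_1)$, and applying the IH to $\tr_3$ (which has at most $n-1$ top-level pushes) gives preservation $(s_1,h_2)\to(s_1,h')$. Composing by transitivity of path preservation reduces the claim to preservation of $\pi$ across the call block $\tr_2$.

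For $\tr_2$, the inner sub-trace executes $\body{\m};\cmdpop{}$ under a fresh frame $\hat s=s_{\nil}[\argm_i\mapsto\eval{\e_i}{s_1,h_1}]$ pushed by $\cmdpush{\bar{\e}}$. Case split on the relation between the actuals $\bar\e$ and $\pi$. If no $\e_i$ has $\rootvar{\e_i}=\rootvar{\pi}$, then by disconnectedness of $\pi$ at $(s_1,h_1)$ the subheap reachable from the actuals is disjoint from $\pathimg{h_1}{s_1}{\pi}$; the frame property at the semantic level, already exploited in the proof sketch of Lemma~\ref{lem:nocalls-preservation}, then preserves $\pi$'s footprint across the call. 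The case where some $\e_i\properprefix\pi$ is ruled out by the hypothesis, since either the $\wb$ addition in $\exec$ on $\cmdpush{\bar\e}$ already puts $\e_i$ into the outer $W$ (when two actuals share a prefix relation), or any subsequent access to $\e_i$ inside the body, after substitution by $\bar\theta\cons[\argm_i\mapsto\e_i]$, resolves to $\e_i$ in the outer $W$; both contradict $\forall\e\in W.\,\e\not\properprefix\pi$. The remaining case, where some $\e_i$ satisfies $\pi\prefix\e_i$ or $\e_i=\rootvar{\pi}$, is handled by choosing an inner path $\pi^\star=\argm_i.\gs$ such that $\subst{\pi^\star}{[\argm_i\mapsto\e_i]\cons\bar\theta}=\pi$ at the push point, transferring disconnectedness and acyclicity from $\pi$ to $\pi^\star$ at $(\hat s,h_1)$, and applying the IH to the inner trace (which contains at most $n-1$ top-level pushes within its own frame) to get preservation of $\pi^\star$'s footprint; this lifts back to preservation of $\pi$'s footprint via $\lval{\pi^\star}{\hat s,h_1}=\lval{\pi}{s_1,h_1}$.

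The main obstacle is this last case: one must correlate the inner stability hypothesis for $\pi^\star$ with the outer one for $\pi$ across the change of substitution stack from $\bar\theta$ to $[\argm_i\mapsto\e_i]\cons\bar\theta$. In particular, the $\wb$ additions made by $\exec$ on $\cmdpush{}$ must be shown to precisely capture the argument-aliasing situations (for instance $\m(x.\f,x.\f.\g)$) in which inner accesses through distinct formals could silently destabilize $\pi$ without producing a visible proper-prefix entry in the outer $W$; only once that correspondence is pinned down does the IH apply cleanly, and the remaining bookkeeping on preservation of disconnectedness and acyclicity for $\pi^\star$ at the inner frame follows routinely from the corresponding properties at the outer frame together with the uniqueness of the chosen index $i$.
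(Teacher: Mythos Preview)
Your overall strategy---decompose the trace at matched call blocks and argue preservation across each piece by transitivity---matches the paper's decomposition (which factors the call-block argument out as Lemma~\ref{lem:nocalls-preservation} and then inducts). But there are two genuine gaps in your execution.

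First, the induction measure is wrong. You induct on the number of \emph{top-level} $\cmdpush{}$ commands in $\tr$, and then want to apply the IH to the inner trace of a call block, claiming it ``contains at most $n-1$ top-level pushes within its own frame.'' This is false: a trace with a single top-level call ($n=1$) whose body itself makes a hundred calls yields an inner trace with a hundred top-level pushes. The fix is easy---induct on the \emph{total} number of $\cmdpush{}$ commands (then the inner trace has strictly fewer, since the enclosing $\cmdpush{}$ is gone), or on call depth---but as written the IH does not apply where you need it.

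Second, and more seriously, your case split on the actuals is broken. You claim the case ``some $\e_i\properprefix\pi$'' is ruled out, arguing that either the $\wb$ addition or some body access would place $\e_i$ into the outer $W$. Neither holds in general. The $\wb$ set only records $\e_i$ when \emph{another} actual $\e_j$ stands in a prefix relation with it; a lone $\e_i=x.\f\properprefix\pi=x.\f.\g$ triggers nothing. And the body need not touch $\argm_i$ as a proper prefix at all: it might only access $\argm_i.\h$ with $\h\neq\g$, which after substitution yields $x.\f.\h\not\properprefix\pi$, so nothing contradicting stability enters $W$. This case therefore genuinely occurs, and it is exactly the hardest case in the paper's Lemma~\ref{lem:nocalls-preservation} (case~3 there): one must carve out the subheap reachable from the actuals, show the inner path $\rho=\argm_j.\hs$ (where $\e_j.\hs=\pi$) is disconnected and acyclic in that \emph{smaller} heap, apply the inner preservation argument there, and then use the semantic frame property to reattach the untouched remainder containing the initial segment of $\pi$'s footprint. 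Your case~C only covers $\e_i=\rootvar{\pi}$ (paper's case~2); the $\pi\prefix\e_i$ sub-case you also list cannot be handled by your $\pi^\star$ construction (no $\argm_i.\gs$ substitutes to $\pi$ when $\e_i$ already extends $\pi$) and in fact belongs with your case~A. The intermediate case---$\e_i$ a non-trivial path strictly between $\rootvar{\pi}$ and $\pi$---is precisely what you have neither handled nor legitimately excluded.
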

\begin{proof}
  By induction over subtraces corresponding to method calls, using Lemma~\ref{lem:nocalls-preservation}.
\end{proof}

While we have shown that a stable path is preserved along a trace, this only applies to balanced
pairs of $\cmdpush{}$/$\cmdpop{}$ commands in the trace.  However, the racy access may occur
inside a chain of procedure calls, thus we need to further show that a stable path propagated through
the call stack is also preserved, up to the point of access.

\begin{lemma}
  \label{lem:call-stack}
  Let $\tr$ be a (prefix) trace produced by Lemma~\ref{lem:path-access-existence}.
  If additionally it is the case that $\forall\e\in\alpha(\set{\tr})_W.\,\e\not\properprefix\pi$
  then $\lval{\pi}{s,h}=\lval{\pi'}{s',h'}$.
\end{lemma}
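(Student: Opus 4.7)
The plan is to decompose $\tr$ along its unmatched $\cmdpush{}$ commands, apply Lemma~\ref{lem:stability-on-traces} to each maximal subtrace with balanced push/pop pairs, and thread address equality through each intervening push via the semantic effect of that push. First I would let $k$ be the number of unmatched $\cmdpush{}$ commands in $\tr$ and split $\tr$ as $\tr_0 \append [p_1] \append \tr_1 \append [p_2] \append \cdots \append [p_k] \append \tr_k$, where each $p_i=\cmdpush{\e_1^i,\ldots,\e_{n_i}^i}$ is an unmatched push and each $\tr_i$ is a contiguous subtrace at base call-depth $i+1$ consisting of balanced push/pop pairs only; the final segment $\tr_k$ ends with the load/store to $\pi'$. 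Unfolding the definition of $\exec$, the final substitution stack is then $\bar\theta = [\theta_k,\ldots,\theta_1]$ with $\theta_i = [\argm_j\mapsto \e_j^i]_{j=1}^{n_i}$. Define intermediate paths $\pi_k=\pi'$ and $\pi_{i-1}=\subst{\pi_i}{\theta_i}$, so that $\pi_0=\pi$; each $\pi_{i+1}$ is necessarily rooted at some formal $\argm_{j_i^*}$, and writing $\pi_{i+1}=\argm_{j_i^*}.\fs_i$ we get $\pi_i=\e^{i+1}_{j_i^*}.\fs_i$.

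Next I would prove, by downward induction on the level index $i\in\{0,\ldots,k\}$, the claim $\lval{\pi_i}{s^i,h^i}=\lval{\pi'}{s',h'}$, where $(s^i,h^i)$ is the top-of-stack/heap pair at the start of $\tr_i$. For the base case $i=k$, the segment $\tr_k$ is balanced at a fixed base depth, so Lemma~\ref{lem:stability-on-traces} applies and preserves $\pi_k=\pi'$ from $(s^k,h^k)$ to $(s',h')$. For the inductive step $i<k$, Lemma~\ref{lem:stability-on-traces} first propagates $\pi_i$ across $\tr_i$ to the state $(s^{i,+},h^{i,+})$ immediately before $p_{i+1}$; then the semantic rule for $\cmdpush$ installs a fresh frame $\hat s=s_\nil[\argm_j\mapsto \eval{\e^{i+1}_j}{s^{i,+},h^{i,+}}]$, giving the key identity
$\lval{\pi_{i+1}}{\hat s,h^{i,+}}=\lval{\argm_{j_i^*}.\fs_i}{\hat s,h^{i,+}}=\lval{\e^{i+1}_{j_i^*}.\fs_i}{s^{i,+},h^{i,+}}=\lval{\pi_i}{s^i,h^i}$, which is precisely the semantic realisation of the substitution $\theta_{i+1}$. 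Chaining this identity with the IH at level $i+1$ (which yields $\lval{\pi_{i+1}}{\hat s,h^{i,+}}=\lval{\pi'}{s',h'}$) closes the step; instantiating at $i=0$ delivers the lemma.

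The principal obstacle I anticipate is showing that the global stability hypothesis $\forall\e\in\alpha(\{\tr\})_W,\,\e\not\properprefix\pi$ descends, at every level $i$, to the local statement that no proper prefix of $\pi_i$ is ever read or written during the processing of $\tr_i$ (which is exactly what Lemma~\ref{lem:stability-on-traces} needs on each subsegment). This is where the substitution mechanism of $\exec$ becomes crucial: any access to a proper prefix of $\pi_{i+1}=\argm_{j_i^*}.\fs_i$ inside a deeper frame produces, via the accumulated substitution stack, an entry in $\alpha(\{\tr\})_W$ that is a proper prefix of $\pi_i=\e^{i+1}_{j_i^*}.\fs_i$, and iterating this substitution down through $\theta_i,\ldots,\theta_1$ produces a proper prefix of $\pi$, contradicting the hypothesis. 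A secondary obstacle is verifying that disconnectedness and acyclicity of $\pi_i$ at $(s^i,h^i)$ transfer to $\pi_{i+1}$ at $(\hat s,h^{i,+})$: this should follow because the freshly-installed $\nil$-initialised frame $\hat s$ can only introduce aliasing between $\pathimg{h^{i,+}}{\hat s}{\pi_{i+1}}$ and other paths via overlapping actuals $\e^{i+1}_j$, but any such overlap is precisely what the $\wb$ clause of the $\cmdpush$ case of $\exec$ forces into $W$, and hence would again contradict stability of $\pi$.
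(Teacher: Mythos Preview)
Your proposal is correct and follows essentially the same approach as the paper: both decompose $\tr$ along its unmatched $\cmdpush{}$ commands, apply Lemma~\ref{lem:stability-on-traces} to each balanced segment between pushes, and thread address equality through each push via the identity $\hat s(\argm_j)=\eval{\e_j}{s^{i,+},h^{i,+}}$ induced by the call semantics. The paper phrases this as ``backwards induction on prefixes $\tr'\prefix\tr$'' (peeling off the last unmatched push at each step) rather than your explicit level-indexed downward induction, but the two are structurally identical; your identification of the two obstacles (descent of the stability hypothesis to each $\pi_i$, and transfer of disconnectedness/acyclicity across a push) is in fact more explicit than the paper's own appendix proof, which dispatches both points rather tersely.
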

\begin{proof}
  We split $\tr=\tr_1\append\tr_{\text{push}}\append\tr_0$ on the last unmatched $\cmdpush{}$ command,
  and apply Lemma~\ref{lem:stability-on-traces} to $\tr_0$, showing that $\pi'$ is preserved.
  Then, we lift $\lval{\pi'}{s',h'}$ to the parent call stack and identify the argument $\e_j$
  such that $\rootvar{\pi'}=\argm_j$.  Then, we repeat the process for $\e_j$ as $\pi'$ up to the
  top-level stack.
  See \ifext{Appendix~\ref{app:reconstr}}{the accompanying extended version} for details of the proof.
\end{proof}

We are ready to state the main result of this section and the whole paper.

\begin{theorem}[True Positives Theorem]
\label{thm:no-fp}
  Let $C_1,C_2$ be two programs such that $\asem{C_i} \bota= \angled{W_i,\_,A_i}$.
   Let $\pi_i$ be two paths and $c_i$ two commands such that $\angled{c_i,L_i}\in A_i$, and,
  \begin{itemize}
    \item $\pi_1=v_1.\fs$ and $\pi_2=v_2.\fs$ (\ie, the field sequences are the same);
    \item $\forall \e_i\in W_i.\, \e_i\not\properprefix \pi_i$ (for $i\in\set{1,2}$);
    \item $c_1=(\pi_1:=x)$ and $c_2=(\pi_2:=y)$, or, $c_1=(\pi_1:=x)$ and $c_2=(y:=\pi_2)$;
    \item $L_1+L_2 \le 1$.
  \end{itemize}
  Then there exists a well-behaved state $\astate=\angled{(s_1,s_2),h,(0,0)}$  and a
  concurrent trace $\tr^{\parallel} \in \sem{C_1\parallel C_2}~\astate$ that races.
\end{theorem}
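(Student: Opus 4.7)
The plan is to combine Theorem~\ref{thm:completeness} with Lemma~\ref{lem:path-access-existence} to extract explicit concrete witnesses for each thread, craft a shared two-thread initial state whose heap geometry forces $\pi_1$ and $\pi_2$ to alias a common address, and then interleave the two single-threaded witnesses up to the point where both threads have their racy commands enabled. Theorem~\ref{thm:completeness} gives $\asem{C_i}\bota = \alpha(\sem{C_i}\angled{S, h, 0})$ for any well-behaved $S, h$, so each $\angled{c_i,L_i}\in A_i$ is realised in the trace semantics; feeding this into Lemma~\ref{lem:path-access-existence} yields, for $i\in\{1,2\}$, a trace $\tr_i\in\sem{C_i}\angled{S, h, 0}$ and a shortest prefix $\tr'_i$ whose last state has lock counter $L_i$ and whose last command is a load or store on some syntactic path $\pi'_i$ that the substitution stack produced by $\exec$ along $\tr'_i$ rewrites to $\pi_i$.

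Next, I would construct the initial state $\astate = \angled{(s_1, s_2), h, (0,0)}$ by allocating two disjoint chains of fresh locations, one reachable from $v_1$ along $\fs$ using $s_1, h$ and another from $v_2$ along $\fs$ using $s_2, h$, arranged so that the two chains meet at a single location at the penultimate step, and sending every other variable in each stack to its own fresh, isolated location. This yields $\lval{\pi_1}{s_1,h} = \lval{\pi_2}{s_2,h}$, each $\pi_i$ is disconnected and acyclic in its thread's stack, and $\astate$ is well-behaved. I then invoke Lemma~\ref{lem:call-stack} once per thread: the stability hypothesis $\forall \e\in W_i.\,\e\not\properprefix\pi_i$, together with disconnectedness and acyclicity of $\pi_i$ in $s_i, h$, ensures that the address of $\pi_i$ is preserved along the single-threaded prefix $\tr'_i$, and Lemma~\ref{lem:abstraction-state-insensitivity} lets me transport the shape of the extracted witness traces to the shared initial state without altering their abstract signature.

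The concurrent trace itself is obtained by scheduling all but the last step of $\tr'_1$, then all but the last step of $\tr'_2$, leaving the state in a configuration from which either $c_1$ or $c_2$ may fire next. The assumption $L_1+L_2\le 1$ guarantees that the lock guards in the interleaving rule of Figure~\ref{fig:ctraces} are satisfiable at every switch point. The three clauses of Definition~\ref{def:race} then hold: the two paths alias by path preservation, both successor configurations $\tr^{\parallel}\snoc\astate_1$ and $\tr^{\parallel}\snoc\astate_2$ exist in the concurrent semantics, and $c_1$ is always a write.

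The hardest part will be bridging Lemma~\ref{lem:call-stack}, which is stated for a single-threaded trace, to the interleaved schedule, where each thread sees a heap that the other has already mutated. What must be argued is cross-thread non-interference: writes performed by $C_2$'s prefix cannot touch the footprint of $\pi_1$, and symmetrically. This combines three ingredients---stability of each $\pi_i$ in its own thread rules out writes to a proper prefix of its own path, disjointness of the two spines we chose rules out accidental overlap through other paths, and the shared endpoint is touched only by the final racy commands themselves---but weaving them into the recursive definition of the interleaving operator of Figure~\ref{fig:ctraces}, while preserving well-behavedness of every intermediate state, is where the bulk of the technical work will lie.
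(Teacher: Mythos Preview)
Your overall plan---use completeness plus Lemma~\ref{lem:path-access-existence} to extract per-thread witnesses, build a good initial state, interleave the prefixes, and invoke preservation---matches the paper's architecture. But two concrete choices would break your argument.

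First, your initial-state construction is wrong. You propose two \emph{disjoint} chains that merge only at the penultimate location. That destroys disconnectedness: the second-to-last cell of $\pi_2$'s chain points into $\locn{\pathimg{h}{s_1}{\pi_1}}$ yet does not lie in $\dom{\pathimg{h}{s_1}{\pi_1}}$, violating the second clause. Without disconnectedness of $\pi_i$ in $s_i,h$, Lemma~\ref{lem:stability-on-traces} and hence Lemma~\ref{lem:call-stack} do not apply. The paper instead sets $s_1(v_1)=s_2(v_2)$ so that the \emph{entire} spine is shared (Lemma~\ref{lem:starting-state}); this keeps each $\pi_i$ disconnected in its own stack and, crucially, makes $\pathimg{h}{s_1}{\pi_1}=\pathimg{h}{s_2}{\pi_2}$.

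Second, the part you flag as hardest---cross-thread non-interference---is handled in the paper not by arguing about two independently extracted traces, but by \emph{sequentialising} the construction. Take $L_1=0$ without loss of generality; run $C_1$'s prefix up to but not including $c'_1$, yielding heap $h'$. Because $L_1=0$, thread~1 holds no lock here, so thread~2 is free to run. Now apply Theorem~\ref{thm:completeness} and Lemma~\ref{lem:path-access-existence} \emph{again}, this time for $C_2$ starting from $(s_2,h',0)$, not from the original $h$. This way $C_2$'s preservation lemma is just another single-thread instance; there is no genuine interleaving to wrestle with. The only remaining obligation is that $\pi_2$ is still acyclic and disconnected in $s_2,h'$---and this follows from $\pi_1$'s preservation through $C_1$'s prefix together with the shared footprint (Lemmas~\ref{lem:other-thread-preserved}, \ref{lem:disconnected-across-threads}, \ref{lem:acyclic-across-threads}). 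Your plan of extracting both witnesses from the same $h$ and then transporting via Lemma~\ref{lem:abstraction-state-insensitivity} would force you to re-prove preservation of $\pi_2$ through a heap that $C_1$ has already mutated, which is exactly the difficulty you anticipate but cannot discharge with disjoint spines.
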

\begin{proof}
  We sketch the important intuitions behind the proof. For more details, see
  \ifext{Appendix~\ref{app:reconstr}}{the accompanying extended version}.

  By assumption, one of $L_1=0\lor L_2=0$, so \Wlog, we set $L_1=0$.
  This means that when $C_1$ accesses the path $\pi$ it is \emph{not holding the lock}.

  We first construct a special initial state, where both programs ``see'' the same path on the
  shared heap, and where that path is acyclic and disconnected.

  We then obtain a partial trace $\tr_1$ of $C_1$
  as ordained by Lemma~\ref{lem:path-access-existence},
  that ends just before committing an access to $\pi$.

  We further show that the final state of that trace satisfies the preconditions of
  Lemma~\ref{lem:path-access-existence} for $C_2$ thus obtaining a trace $\tr_2$ that picks up
  from where $\tr_1$ finished.

  We weave $\tr_1$ and $\tr_2$ into a concurrent trace
  by letting $C_1$ progress until it reaches the end of $\tr_1$, and then $C_2$ progresses until
  the end of $\tr_2$.  Crucially, this schedule is realisable because of the above observation
  that $C_1$ isn't holding the lock.

  We finally use
  Lemmas~\ref{lem:stability-on-traces}--\ref{lem:call-stack} to show that the path is preserved
  to the concurrent trace' final state, and that the path each thread sees through its call
  stack resolves to the same address.
\end{proof}

\section{Implementation and Evaluation}
\label{sec:evaluation}

In the previous sections we laid out the design of a theoretical analyser that is similar to
\racerd, but which enjoys the True Positives property: under certain assumptions, reports from
the theoretical analyser proposed here, are true positives. But that still does not address
the effectiveness of the new analyser: we would like to know would an implementation of \racerdx be
\begin{enumerate}
  \item effective, in that it reports a sufficient proportion of reports which \racerd generates;
  \item efficient, in that it is not significantly slower than \racerd.
\end{enumerate}

We implemented the proposed analyser as a modification of \racerd.
The implementation was relatively straightforward: it required less than 1kLOC of OCaml code, and used
data structures and algorithms that are either standard or come with the Infer analyser.

We next discuss the differences between the two analysers.

\subsection{Differences between \racerd and \racerdx}
\label{sec:diff-racerd}

The main difference between \racerd and \racerdx is in substituting the ownership domain of the former
with that of stability.

Newly allocated objects in Java are known only to the caller of \code{new}.  Thus, it is not
possible for two threads to race on a newly allocated object, unless the address of that object
flows into some shared location (basically, the heap).  In addition, accesses to fields of the
newly allocated object cannot possibly participate in a data race, for much the same reasons.
\racerd uses a notion of \emph{ownership} typing to track access paths which have been obtained
through allocation, in an effort to avoid the associated false positives.

\racerdx swaps the whole abstract domain of ownership for that of
stability. This substitution is logically warranted by the fact that
allocation immediately destabilises all paths extending that of the
receiver. For instance, in the code \code{x.b = new Bloop(); x.b.f =
  1;} the access at \code{x.b.f} will not be reported by \racerdx
because \code{x.b} $\properprefix$ \code{x.b.f}, and \code{x.b} will
have been recorded in the $W$ component of the analysis state.

\subsection{Differences between Theoretical Analyser and its Implementation}
\label{sec:diff-theory}

Mathematical formulations of static analysers typically deal with a simplified language and a
simplified semantics for reasons of simplicity and clarity.  Their implementations, in so far as
the target is a real programming language on a real concrete semantics, usually are not one-to-one
renditions of the math in code.  Here we examine some of the differences between the formal model
of \racerdx and its implementation. Some of these differences are directly inherited from \racerd;
these include a treatment of pure procedures (aided by programmer annotations); support for
confining a set of accesses in a specific thread (for instance, a UI thread) such that no lock is
required to avoid interference; \code{static} and \code{volatile} fields; inheritance,
and others.

The parts of \racerd we had to specifically modify for stability, and which constitute a complex
feature not present in the formal model of \racerdx include:
\begin{description}

\item[Global variables:] These clearly do not fit the model of \racerdx, as they
  do not behave as formals.  We model them in the implementation as paths that
  never undergo substitution, and are never destabilised.  This means that any
  access to them not under synchronisation may be reported in the context of a race.

\item[Constructors:] A constructor cannot normally race on the freshly
  created object. To avoid the obvious false positives, we treat
  \code{this} as always unstable when inside a constructor. Thus,
  \code{this}$~\in W$ and therefore any statement that initialises
  a member field will not be reported because
  \code{this}$\properprefix$\code{this.f} for any field
  \code{f}. Note that this allows potential false negatives in
  the case where the address of \code{this} (or some path
  starting at \code{this}) escapes the constructor.


  \item[Containers:] Such objects are treated specially by \racerd, because they are usually opaque
  (live in libraries) and because the model they expose is much simpler than their implementation.
  For instance, an addition of an element to an \code{ArrayList} at the path
  \code{this.arrayList} will be treated as a store, for the purposes of thread safety.
  However, \racerdx needs to distinguish between accesses to the container itself and accesses that
  manipulate the contents, for instance \code{this.arrayList = f()} and
  \code{this.arrayList.get(0)}.  The way this is achieved is through a dummy field that is
  read/written to when an element is read/added from/to the container.
\end{description}

\subsubsection{Analysing Open-Source Projects}
\label{sec:open}

We obtained six open-source Java projects and analysed them with \racerd and \racerdx, recording their
total CPU time cost and data race reports.
The test environment was Linux 4.11, running on a server with 56Gb of RAM, on an Intel CPU at 2.5GHz.
Results are shown in Table~\ref{tbl:evaluation}.
The total CPU time of each tool (in seconds) and the proportion of their difference are given in the columns
``D CPU'', ``DX CPU'', ``CPU $\pm$\%''.

An access path may be accessed at a set of locations.  In the worst case where are these accesses
are racy, we may end up with a number of reports that is quadratic in the number of locations.
\racerd and \racerdx both have de-duplication
capacilities to avoid spamming developers, thus selecting a subset of races.
Since, for the purposes of the evaluation, this selection is arbitrary, we
deactivated de-duplication in both tools to ease the comparison.
We checked that \racerdx never introduces reports, that is, if \racerdx makes a particular report
then \racerd always makes the same report too. We show the number of race reports for both tools
and the proportion of the difference in the columns ``D Reps'', ``DX Reps'' and ``Reps $\pm$\%''.

To further elucidate the fundamental differences, we extracted the access path $\pi$ a race report
identifies, and counted how many unique paths are reported by each tool. Again, we found that
\racerdx never reports a path not also reported by \racerd. We show the number of paths reported in
the columns ``D \#$\pi$'', ``DX \#$\pi$'' and the proportion of the difference in ``\#$\pi$ $\pm$\%''.

\begin{table}
  \footnotesize
  \begin{subtable}{\textwidth}
  \begin{tabular}{@{\:\:}l@{\:\:}l@{\:\:}l@{}}
    \toprule
    Project & Description & URL \\
    \midrule
    avrora & An AVR emulator & \url{https://github.com/ibr-cm/avrora} \\
    Chronicle-Map & A non-blocking key-value store & \url{https://github.com/OpenHFT/Chronicle-Map} \\
    jvm-tools & Tool for JVM troubleshooting and profiling & \url{https://github.com/aragozin/jvm-tools} \\
    RxJava & Library for asynchronous and event-based programs & \url{https://github.com/ReactiveX/RxJava} \\
    sunflow & Rendering system for photo-realistic image synthesis & \url{https://github.com/fpsunflower/sunflow} \\
    xalan-j & XSLT processor & \url{https://github.com/apache/xalan-j} \\
    \bottomrule
  \end{tabular}
  \caption{Evaluation targets.}
  \label{tbl:targets}
  \end{subtable}
\vspace{1ex}
\begin{subtable}{\textwidth}
  \begin{tabular}{@{}lrrrrrrrrrr@{}}
    \toprule
    Target & LOC & D CPU & DX CPU & CPU $\pm$\% & D Reps & DX Reps & Reps $\pm$\% & D \#$\pi$ & DX \#$\pi$ & \#$\pi$ $\pm$\% \\
    \midrule
    avrora & 76k & 103 & 102 & 0.4\%  & 143 & 92 & 36\%  & 78 & 38 & 51\% \\
    Chronicle-Map & 45k & 196 & 196 & 0.1\%  & 2 & 2 & 0\%  & 2 & 2 & 0\% \\
    jvm-tools & 33k & 106 & 109 & -3.6\%  & 30 & 26 & 13\%  & 14 & 11 & 21\% \\
    RxJava & 273k & 76 & 69 & 9.2\%  & 166 & 134 & 19\%  & 65 & 44 & 32\% \\
    sunflow & 25k & 44 & 44 & -1.4\%  & 97 & 42 & 57\%  & 116 & 38 & 67\% \\
    xalan-j & 175k & 144 & 137 & 5.0\%  & 326 & 295 & 10\%  & 135 & 94 & 30\% \\
    \bottomrule
  \end{tabular}
  \caption{Evaluation results.  CPU columns are in seconds; Reps are distinct reports; $\pi$ are distinct paths.}
  \label{tbl:evaluation}
\end{subtable}
\end{table}

We can make the following observations:
\begin{itemize}
  \item The difference between runtimes is largely within the noise margins, especially given
  that a large percentage of these runtimes is spent compiling Java source into bytecode, as Infer
  extracts an AST from the compiled artefact.
  \item The loss in terms of number of reports ranges between 10\% and 57\% (we exclude Chronicle-Map
  as there are too few reports to start with), and the loss in terms of number of distinct access
  paths ranges from 21\% to 67\%.
\end{itemize}

\subsubsection{The Causes for Deterioration of Reporting Rate}
\label{sec:caus-deter-report}

We triaged a sample of reports that \racerd made but \racerdx didn't. We discerned two main
classes of reports:
\begin{itemize}
  \item In a call \code{this.foo(this.f)}, the check whether one of the actuals is a proper
  prefix of another (Figure~\ref{fig:analysis}, method call case)
  fails because \code{this}$\properprefix$\code{this.f}.
  Thus, \code{this} is marked as unstable (is added to the $W$ component of the abstract state).
  But this means that all accesses in the caller method will not be considered in data race reports,
  leading to potential false negatives.

  A potential solution here may be to use the fact that the first
  argument of a non-\code{static} Java method cannot be reassigned,
  and thus may be left out of the check above, but we have not at
  present assessed how this might affect the status of our theorems.

  \item Inner (nested) classes are common in Java, and allow methods of an inner class object to
  reference fields and methods of the containing class.  To achieve this, the compiler inserts in the
  inner class a hidden reference to the outer class object, and initialises this appropriately at
  construction.  Unfortunately, this also means that the \code{this} reference of the outer
  class is marked as unstable whenever an inner class object is constructed, thus precluding
  accesses occurring in the enclosing method from being reported.
\end{itemize}

Remarkably, all the reports we triaged were true positives.
Both of these classes of missing reports may benefit from elaborating the stability abstract domain
to track escaping references, \ie, when a path is read it is not immediately marked as unstable,
but only when the address read ends up being stored somewhere.  This is something we will investigate in further work.

\section{Discussion and Related Work}
\label{sec:related}



In this section we place \racerdx in the field of static race detectors that have been subjected to formal analysis.
Since the majority of the static race analyzers used in
industry---\tname{ThreadSafe}~\cite{Contemplate}, Coverity's analysis
suite~\cite{AndyChou14}, and
\racerd~\cite{Blackshear-al:OOPSLA18}---despite being
impactful in practice, do not come with any formal theorem about their
algorithm, we focus on the
only two race analyzers we know of, for which a formal statement has
been made: \chord~\cite{Naik-al:PLDI06,Raghothaman-al:PLDI18}
and~\racerdx.

\chord is a tool which strives to be \emph{sound}, or to favour
reduction of false negatives over false positives.  Note that our
discussion should not be interpreted as providing value judgements on
\racerdx versus \chord. We think of both as exemplars of points in the
design space which might be built upon or learned from in the
construction of new analyzers that are useful in practice.

The notion of soundness is a standard one: the results of an analysis
model \emph{all possible executions} of the program \wrt certain
behaviors of interest, \eg, uncaught exceptions, memory leaks,
\etc. This is the concept of over-approximation. A consequence of an
over-approximation result is that if an analyzer claims that there are
no bugs then there are none (at least in the idealized formal model regarded as the concrete semantics). 

\racerdx, on the other hand, is an example of a static analysis that favours reduction of false positives over false negatives.
It aims to be {\em complete}\/, to report only true races, under certain assumptions.
A series of diagrams, shown in Figure~\ref{fig:circles}, illustrate specific
aspects of the analysis design space and depict the
notions of \emph{over-of-under} and \emph{under-of-over} approximation from the introduction.
We discuss each of the diagrams.

%
%
%
%
%
%


\begin{figure}[t]
\centering
\includegraphics[width=0.98\textwidth]{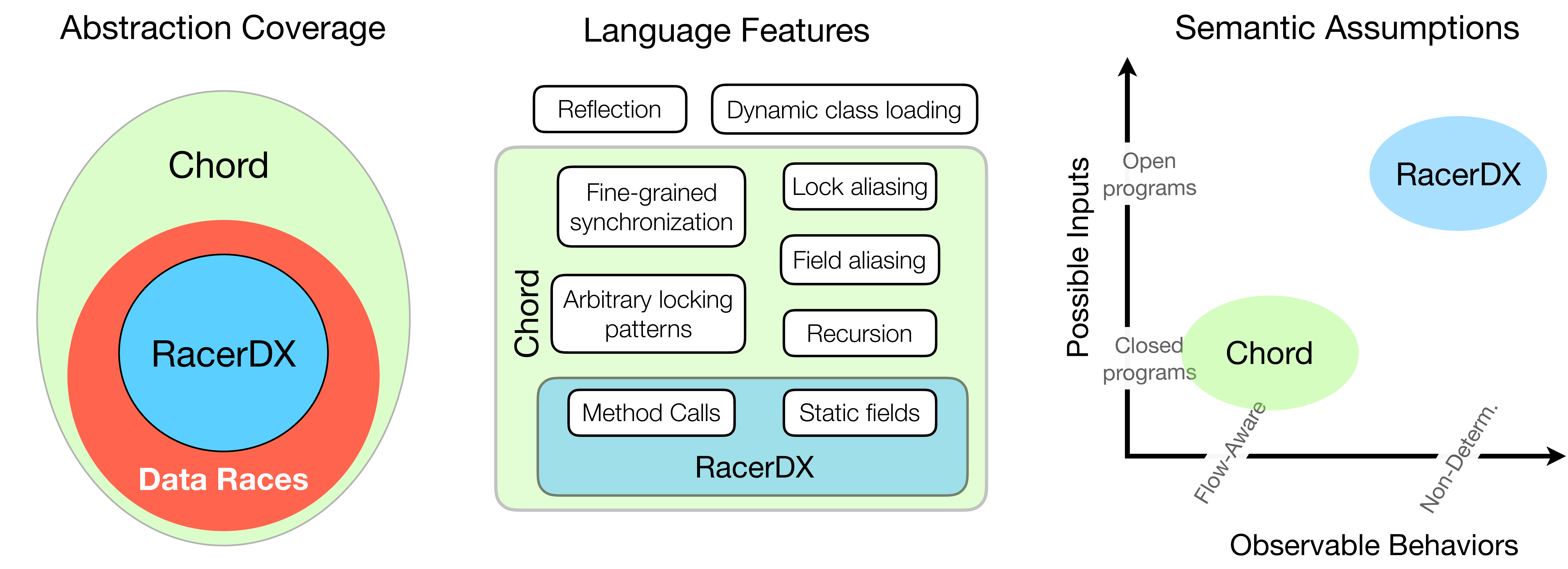}
\caption{The aspects of a race analyzer: abstraction coverage,
  language features, and semantic assumptions.}
\label{fig:circles}
\end{figure}

\begin{itemize}
\item \emph{Abstraction Coverage.}
  The left part of Figure~\ref{fig:circles} demonstrates the relative
  positioning of claims by \chord and \racerdx .
   \chord~ strives to  compute an over-approximation, so that its data race reports should be a superset of the races than can occur in program executions.
    In contrast, \racerdx targets \emph{under-approximation}, computing a subset of the real races, which is what our True Positives Theorem addresses.


\item \emph{Language Features.}
  Claims about \emph{unconditional} absence of false negatives or
  false positives made with regard to an industrial-scale project and
  a realistic programming language (\eg, Java), are often false for particular analyzers,
  even if their designs are guided by considerations of   soundness or completeness.
%
  %
  The set of  choices of which language features to model
  faithfully or to ignore, which we refer to as \emph{context}, affects
  both the soundness and completeness claims, and frame the analysis' claims, serving
  as signs of warning to the analysis' users.
  The \emph{``Soundiness''} movement~\cite{Livshits-al:CACM15}
  mandates the designers of new static analyses to make the context
  explicit and offers a list of likely causes of unsoundness in
  popular languages and
  runtimes; that is, soundness (over-approximation) can be wrt a restriction (under-approximation) of the chosen concrete semantics.
  Similarly, for analyzers aiming at completeness the context should be made explicit.\footnote{However, we avoid inventing a new term, as both soundiness and completeness-relative-to-assumptions can, as far as we are aware, be formulated as standard soundness or completeness results wrt models capturing the assumptions (which themselves can often be under-approximate or over-approximate models).}
   As an example, the central part of Figure~\ref{fig:circles} shows a
  selection of features, framing the claims made for \chord
  (with extensions to its points-to analysis
  component by~\citet{Naik-Aiken:POPL07}) and \racerdx, correspondingly.
  In the latter case, it corresponds to the
  assumptions~\ref{asm:class}--\ref{asm:recur} we made in
  Section~\ref{sec:language}.

\item \emph{Semantic Assumptions.}
  The main reasons for having the claims framed properly with
  regard to the context, is to make sure that the class of the
  programs satisfying the context assumptions, can be faithfully
  represented by a semantics, with respect to which the soundness or completeness
  claims were established.
  But the choice of the semantics itself, as well as
  non-language-specific assumptions, is rarely questioned, or exercised
  to the advantage of the analysis designer.
  For instance, a soundness result for a data race analysis,
  proven modulo a rigorously stated context, can be invalidated if,
  for instance, one assumes a version of a relaxed-memory
  semantics~\cite{Kang-al:POPL17} instead of more traditional
  sequential consistency~\cite{Lamport:TC79}.
  At the same time, a completness result would be
  \emph{simpler} to establish in this setting, as it would allow one
  to \emph{choose from more program behaviors}!
  The right part of Figure~\ref{fig:circles} depicts this remarkable
  variance of the soundness/completeness results with regard to the chosen
  foundations, determining the set of allowed behaviors and inputs.

  An analysis aiming for soundness will be easier to prove in a more
  restrictive semantics and in closed-world assumptions, which is the
  case for \chord.
  In contrast, a formal result for an analysis aiming for completeness
  (\eg, \racerdx) will be easier (or even possible) to state in a less
  restrictive, possibly non-deterministic semantics (such as what we
  assumed via~\ref{asm:control}), in the presence of an open world
  assumption (which we exploited in proofs in
  Sections~\ref{sec:nfp-comp} and~\ref{sec:nfp-reconstr}).

  As far as we are aware, this observation is novel, although in
  retrospect not very surprising.
\end{itemize}

\noindent
In the framework of Figure~\ref{fig:circles}, our True Positives
Theorem delivers \wrt claims on relating to the set of unwelcome
behaviors (No-False-Positives), framed within a particular context
\wrt supported language
features~(assumptions~\ref{asm:class}--\ref{asm:recur}) for carefully
chosen relaxed semantic assumptions~\ref{asm:control}.


\paragraph{Over-of-Under and Under-of-Over}
\label{sec:over-under}

In the light of the suggested classification the over-of-under and
under-of-over characterization of an analyzer is a matter of
positioning the \emph{claims} and choosing the \emph{foundations}.
Analyzers claiming a version of soundness to prevent bugs, such as
\chord, typically establish this in somewhat restricted
foundations (\eg, sequentially-consistent semantics with no
non-determinism, closed-world assumptions), but over-approximates the
set of unwelcome behaviors, therefore delivering the over-of-under
result.
Conversely, analyzers aiming for completeness to detect bugs, such as \racerdx,
choose to consider a more relaxed semantics, while under-approximating
the set of the bugs that they can report, going only for
high-confidence ones, which corresponds to the under-of-over result.

Regardless of whether the over-of-under or under-of-over approach is
chosen, how can one rigorously validate the set of choices made with
regard to the supported features and semantic assumptions?
In the case of soundness claims, one can instrument code with checks that
reflect the sources of deliberate
unsoundness~\cite{Christakis-al:VMCAI15} and compute the fraction of
the code that is analyzed soundly.
Another way, which, in our experience, worked well for completeness/TP claims,
is to \emph{instrument the analysis itself}, so that it would not
report bugs that it is not absolutely adamant about---precisely what
is achieved by our notion of wobbliness/stability from
Section~\ref{sec:analysis}.\footnote{That said, wobbliness does not
  account for all of our language features, such as, \eg, well-scoped
  locking (\ref{asm:lock}).}

\paragraph{Relevance of Theory}
\label{sec:wages-soundness}

How can we establish that the formal analysis of an analyzer is relevant in practice, especially when the contextual assumptions for its theorems might be violated?

For analyzers that come with a proof of soundness, the criterion often used is that of  \emph{precision}, \ie, that the analyzer does not report too many false
positives. The rate of true positives can be improved by chaining the
main analysis with a suitably tailored
\emph{pre-analysis}~\cite{Oh-al:PLDI14}, or by involving a user into the loop
and taking her feedback into account~\cite{Raghothaman-al:PLDI18}.

Our approach, which addresses this challenge for an completeness-style
analyzer, is a complementary one and assesses whether the analysis's
\emph{generality} does not deteriorate (\ie, whether it does not
ignore too many plausible bugs).
To do this, we conducted an empirical comparison to an already
deployed tool, which has been considered highly
impactful by industry standards~\cite{Blackshear-al:OOPSLA18}, and
demonstrate that a rigorously proven version does not do too much
worse (Section~\ref{sec:evaluation}).

\subsection{Theorems for Completeness-oriented Race Detectors}

Several results on  proven \emph{dynamic} concurrency analyses
have been published recently. The pioneering work
by~\citet{Sadowski-al:WMM08} described a partial verification of some
properties of the \tname{Velodrome}~\cite{Flanagan-al:PLDI08} dynamic
atomicity checker, but considered only a model of the algorithm, not
its actual implementation.
The work by~\citet{Mansky-al:CPP17} formalized the notion of run-time
data races on memory locations and proved the soundness and
completeness of the specification rules for a number of idealized
implementations of race detection analyses, including
\tname{FastTrack}~\cite{Flanagan-Freund:PLDI09}.
Finally, \citet{Wilcox-al:PPoPP18} verified a more
realistic implementation of \tname{FastTrack}, showing that it is as fast
as or faster than existing state-of-the-art non-verified
\tname{FastTrack} implementation.

The paper of \citet{Blackshear-al:OOPSLA18} compares \racerd to
several dynamic analyses.  We are not making claims here about the
relative usefulness of the dynamic and static techniques, except to
say that they have often been found to be complementary.

To the best of our knowledge, our result is the first formal proof of
a No-False-Positives property for a \emph{static}
concurrency analysis.

\subsection{Proving Static Analyzers Complete}
The subject of completeness is well-studied in the Abstract
Interpretation community in the context of over-approximating (soundness)
analyzers. For instance, the work by~\citet{Ranzato:VMCAI13}
demonstrates how completeness can be crucial for designing static
analyzers for a number of common intra-procedural properties (\eg,
signs, constant propagation, polyhedra domains, \etc), encouraging one
to reason about the completeness properties of their underlying
abstract domains.
\citet{Giacobazzi-al:POPL15} go even further, providing a \emph{proof
  system} for showing that the result of a certain analysis on a
particular given program is precise.
Those works address mostly numerical properties.
As far as we know, our work is the first to address
completeness of an abstraction and a single-threaded abstract
semantics for detection of concurrent races.

\section{Conclusions}
\label{sec:conclusion}


In this paper we have formulated and proven a True Positives Theorem for an idealized static race detector, motivated by one that has been proven to be effective in production at Facebook.
We have also provided an empirical evaluation of the distance between the idealized analyzer and the in-production version.

It is important to emphasize that we don't view the primary role of the True Positives Theorem as providing guarantees to programmers. Rather, it clarifies to the analysis designer the nature of the analysis algorithm: its purpose is to provide understanding and guide design of current and future analyzers.  This kind of role of theory is complementary to the goal of confirming properties of analyses after they have been finished, and is a relative of the corresponding role of semantics in programming language design advocated by ~\citet{Tennent77}.  Indeed, the theorem was not formulated before~\citet{Blackshear-al:OOPSLA18} implemented \racerd, as a specification for them to meet. Rather, the theorem arose in response to the fact that the tool did not satisfy a standard over-approximation (soundness for bug prevention) or under-approximation (completeness, or soundness for testing) result, and yet was effective in practice. And, as we mentioned, the statement of the theorem then impacted the further development of \racerd.

We have not formulated a general theory for the TP theorem, into which many analyzers would fit.  The reason is not that we can't see how to make {\em some}\/ such theory, but rather that we are being careful not to engage in quick generalisation from few examples from practice. Beyond the {\em shape}\/ of the TP theorem (over-of-under), the more significant issue of the nature of the assumptions supporting a meaningful theorem is one that we would like to see validated by other in-production analyses, before generalization.
Formal theorems about static analyses for data race detection have concentrated on soundness results, where here we considered completeness. In both cases contextual assumptions often need to be stated which concern, \eg, language features treated and ignored or other assumptions reflected in the abstractions.

The way we compose over- and under-approximations is  consistent with the spirit of abstract interpretation.  Although abstract interpretation can be used to formulate and prove vanilla soundness and completeness results (over and under on their own), perhaps its greater value comes from composing and comparing different abstractions, to give finer insight on the nature of analyses than the vanilla results (which often do not literally hold) would do.  It is obvious that one {\em might} consider any sequence of over- and under-approximations.
A modest suggestion of this work is that perhaps under-of-over is deserving of more attention, as a way to study static analyses that are designed for bug catching rather than prevention.\footnote{This suggestion is consistent with bounded model checking and symbolic execution for testing, which can often be seen as computing {\em either}\/ over-of-under or under-of-over.}

\begin{acks}  
 We are grateful to Don Stewart for his encouragement and support
 for this research project.
 We thank Francesco Logozzo for his comments on the formalisation of
 the analysis completeness.
 Finally, we wish to thank the POPL'19 PC and AEC reviewers for the
 careful reading of the paper, and for many insightful comments and
 suggestions.
\end{acks}

\bibliography{references,proceedings}

\newpage
\appendix
\section{Proofs for Section~\ref{sec:nfp-comp} (Analysis Completeness)}
\label{app:comp}

\begin{alemma}{\ref{lm:add}}
$\alpha$ is additive (\ie, preserves lub's) with respect to
$\cup_{\TTraces}$ and $\lub_{\Abs}$.
\end{alemma}
\begin{proof}
  Both $\TTraces$ and $\Abs$ are complete lattices. For any
  $T_1, T_2 \in \TTraces$, we have: {\small{
\begin{align*}
  \alpha(T_2 \cup T_2) = \Lub_{\tr \in T_1 \cup T_2}(\fst(\exec~\ab{\tr}~\bota)) =
  \Lub_{\tr \in T_1}(\fst(\exec~\ab{\tr}~\bota)) \lub \Lub_{\tr \in T_2}(\fst(\exec~\ab{\tr}~\bota)) =
  \alpha(T_1) \lub \alpha(T_2)
\end{align*}
}}
\end{proof}

\begin{alemma}{\ref{lm:comcom}}
For any non-empty WF trace $\tr \in \Traces$, sets $W$, $A$, number
$L$, and a simple command~$c$, which is not~$\cmdpop{}$, such
that
(a) $\angled{W, L, A} = \alpha(\set{\tr})$,
%
%
(b)
$\sem{c}~\llast{\tr} = \set{[\astate]~|~\astate \text{ is an execution
    state}}$, where $\llast{\tr}$ is the configuration
$\angled{S, h, L}$ of the last element of $\tr$, the following
\textit{{holds}}:
\[
\small
{\asem{c}{\angled{W, L, A}}} =
{\alpha \left(\bigcup\set{\tr \cons \astate \mid [\astate] \in \sem{c}~\llast{\tr}} \right)}.
\]
\end{alemma}
\begin{refproof}{\emph{Proof}}
\label{proof:comcom}
  If $c = \cmdskip$ the proof is trivial, as
  $\sem{c}~\llast{\tr} = \set{\epsilon}$, so the side condition
  $\sem{c}~\llast{\tr} = \set{[\astate]}$ does not hold.
  For other simple commands, let us notice that the
  condition $\sem{c}~\llast{\tr} = \set{[\astate ~|~ \ldots]}$
  literally means that for any $\astate$, the trace $t \cons \astate$
  (since $c$ is a simple command, the result of its successful
  execution is \emph{always} a set of a single-state traces, an
  infinite one in the case of $\cmdalloc$ --- \cf
  Figure~\ref{fig:traces}) is well-formed.

  We therefore proceed by case analysis of the clauses of the \racerd
  analysis definition in Figure~\ref{fig:analysis} and
  trace-collecting semantics in Figure~\ref{fig:traces}, assuming
  $\angled{W', L', A'} = \fst(\exec~\ab{\tr}~ \bota)$, so by the assumptions
  $\angled{W, L, A} = \angled{W', L', A'}$, so $L = L'$ \etc.
 Since there are no $\cmdpush{}/\cmdpop{}$ in $\ab{\tr}$, we do not
 have to account for the effect of a substitution $\theta$,
 ``accumulated'' by $\exec$ while proceccing $\ab{\tr}$.

  \vspace{3pt}
  \noindent\textbf{Case: $c= (x := y)$}.
  In this case, there is exactly one resulting state $\astate$, and
  $\ab{\tr \cons \astate} = \ab{\tr} \cons \angled{x := y}$.
  Therefore,
  $\alpha \left(\set{\tr \cons \astate} \right) =
  \fst(\exec~\ab{\tr}\cons \angled{x := y}~\bota) = \angled{W' \cup
    \set{x, y}, L, A'}$. At the same time, by the definition in
  Figure~\ref{fig:analysis},
  $\asem{(x := y)}{\angled{W, L, A}} = \angled{W \cup \set{x, y}, L,
    A}$ $= \angled{W' \cup \set{x, y}, L, A'}$.

  \vspace{3pt}
  \noindent\textbf{Case: $c= (x := \pi)$}.
  There is exactly one resulting state $\astate$,
  $\alpha \left(\set{\tr \cons \astate} \right) = \fst(\exec~\ab{\tr}
  \cons \angled{x := \pi}~\bota) = \angled{W' \cup \set{x, \pi}, L, A'
    \cup \angled{x := \pi, L}}$. Since
  $\asem{(x := \pi)}{\angled{W, L, A}} = \angled{W \cup \set{x, \pi},
    L, A \cup \angled{x := \pi, L}}$, the desired equality
  holds.

  \vspace{3pt}
  \noindent\textbf{Case: $c= (\pi := x)$}.
  There is exactly one resulting state $\astate$. We have
  $\alpha \left(\set{\tr \cons \astate} \right) = \fst(\exec~\ab{\tr}
  \cons \angled{\pi := x}~\bota) = \angled{W' \cup \set{x, \pi}, L, A'
    \cup \angled{\pi := x, L}}$.
  By the definition,\\
  $\asem{(\pi := x)}{\angled{W, L, A}} = \angled{W \cup \set{x,
      \pi}, L, A \cup \angled{\pi := x, L}}$, so the desired
  equality holds.

  \vspace{3pt}
  \noindent\textbf{Case: $c= (x:=\cmdalloc)$}.
  Due to the non-determinism of object allocation, there is an
  infinite set of resulting states $\astate$, yet the abstraction
  $\alpha$ collapses all of them into the same syntactic statement
  $\angled{x:=\cmdalloc}$. Therefore,
  \[
\alpha \left(\bigcup_{\astate, [\astate] \in
    \sem{x:=\cmdalloc}~\llast{\tr}} \set{\tr \cons \astate} \right) =
\fst(\exec~\ab{\tr} \cons \angled{x:=\cmdalloc}~\angled{W', L, A'}) = \angled{W' \cup \set{x}, L, A'}.
  \]
  By the definition~\ref{fig:analysis},
  $\asem{(x:=\cmdalloc)}{\angled{W, L, A}} = \angled{W \cup
    \set{x}, L, A}$, so the equality holds.

  \vspace{3pt}
  \noindent\textbf{Case: $c= \cmdlock{}$}.
  There is exactly one resulting state $\astate$. We have
  $\alpha \left(\set{\tr \cons \astate} \right) = \fst(\exec~\ab{\tr}
  \cons \angled{\cmdlock{}}~\bota) = \angled{W', \addlock{L}{1}, A'}$.
  By the definition,
  $\asem{\cmdlock{}}{\angled{W, L, A}} = \angled{W, \addlock{L}{1},
    A}$, so the desired equality holds.

  \vspace{3pt}
  \noindent\textbf{Case: $c= \cmdunlock$}.
  For the only resulting state $\astate$, we have
  $\alpha \left(\set{\tr \cons \astate} \right) = \fst(\exec~\ab{\tr} \cons
  \angled{\cmdunlock}~\bota) = \angled{W', \max(L-1, 0), A'}$.
  Since
  $\asem{\cmdunlock}{\angled{W, L, A}} = \angled{W, \max(L-1, 0),
    A}$, the desired equality holds.
\QED\end{refproof}


\begin{lemma}[$\asem{\cdot}$ distributes over $\lub$]
\label{lm:anal-lub}
Let $\angled{W, A, L} = \Lub_{i \in I}\angled{W_i, A_i, L_i}$ for
some index set $I$. Then for any simple command $c$,
\[
\asem{c}{\angled{W, A, L}} = \Lub_{i \in I}
\left(\asem{c}{\angled{W_i, A_i, L_i}}\right).
\]
\end{lemma}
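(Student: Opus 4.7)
The plan is to proceed by case analysis on the simple command $c$, exploiting the fact that both $\asem{c}$ (Figure~\ref{fig:analysis}) and the lub on $\Abs$ (Definition~\ref{def:lub}) are computed componentwise. It therefore suffices to establish distributivity of each of the three primitives $\wobblysymb$, $\lockingsymb$ and $\accessessymb$ over the corresponding lub operator on its component, for each admissible $c$. For the wobbliness component, and for the access component outside the load/store cases, the update has the form $X \mapsto X \cup F$ where $F$ depends only on $c$; distributivity then reduces to the elementary identity $(\bigcup_i X_i) \cup F = \bigcup_i (X_i \cup F)$.

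For the lock component, $\lockingsymb$ acts nontrivially only when $c \in \set{\cmdlock, \cmdunlock}$. The required identities $\incrlock{\max_i L_i} = \max_i \incrlock{L_i}$ and $\decrlock{\max_i L_i} = \max_i \decrlock{L_i}$ each follow from a short case analysis: the first on whether the cap $\hat L$ is achieved by $\max_i L_i + 1$, and the second on whether $\max_i L_i = 0$. Both reduce to the monotonicity of the underlying integer operations together with the fact that the join $\max_i L_i$ is attained at some concrete index.

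The main delicacy, and the step I expect to demand the most attention, concerns the load/store cases $c \equiv (x := \pi)$ and $c \equiv (\pi := x)$, where $\accessessymb$ tags the newly-added element with the \emph{current} lock context. Applied to the lub first, the access component picks up the singleton $\set{\angled{c, \max_i L_i}}^\fml$, whereas applied pointwise and then joined it accumulates $\bigcup_i \set{\angled{c, L_i}}^\fml$, and these coincide precisely when the $L_i$ agree. The hard part is thus to ensure that this premise is met at every invocation of the lemma; in the intended uses this is secured by Assumption~\ref{asm:lock} on well-balanced locking together with Lemma~\ref{lm:balanced}, which guarantee that the incoming lock contexts at any join point of conditionals or loops coincide in programs satisfying our standing assumptions.
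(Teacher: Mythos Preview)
Your approach---case analysis on $c$, handling each component separately---is exactly what the paper does. You have, however, been more careful than the paper on the load/store cases for the $A$-component. Your observation is correct: applied to the join first, $\accessessymb$ adds the single tag $\angled{c,\max_i L_i}$, whereas applied pointwise and then joined it accumulates all of $\{\angled{c,L_i}\mid i\in I\}$; these agree only when the $L_i$ coincide. The paper's proof notes that the maximum is attained at some index $j$ (so the left-hand tag appears on the right) but does not explain why the remaining $\angled{c,L_i}$ on the right are absorbed on the left. As an unconditional equality the lemma is thus not literally true, and you have correctly located the gap.

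Your proposed remedy---appeal to balanced locking so that the $L_i$ agree at every invocation---is the right idea and is indeed the mechanism the paper relies on elsewhere (Lemma~\ref{lm:balanced}). One caution: the lemma's client is the proof of Lemma~\ref{lm:comcom2}, where the index set is a family of traces $\tr\in T$ with lock components $L_\tr$. For your fix to close the gap there, you need the additional hypothesis that all traces in $T$ end at the same lock level. This is satisfied in the downstream inductive uses (the $T$ fed to Lemma~\ref{lm:comcom2} in the proof of Lemma~\ref{lm:com-compound} comes from executing a balanced-locking prefix from a common lock context), but it is not part of the stated premises of Lemma~\ref{lm:comcom2}. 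So your plan works, but the side condition must be threaded through explicitly rather than invoked only at the leaves.
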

\begin{proof}
  The proof is by case analysis on the command $c$. In general, let us
  notice that, as defined, $\asem{\cdot}$ only \emph{adds} elements to
  the $W$ and $A$.

  In the case of $A$-component, the locking set $L'$ to be paired with
  $c$ (as in $\angled{c, L'}$ added by the analysis) is taken from the
  initial $L$ component. Since the $\lub$ on $\Lock$ is defined as
  $\max$ there should exist $j \in I$, such that $L_i = L$. Therefore,
  no element $\angled{L', A}$ will be ``missed'' on the left hand side
  of the equality in question for
  $L' \in \set{L, \max(L+1, L), \min(L-1, L)}$, depending on $c$,
  matching the corresponding entry on the right.

  Finally, for the same reason, the $L$-component on the left side of
  the equality being proven is no large than the one on the right,
  as per definition of the $\lockingsymb$ subroutine (which does not
  depend on $W$ and $A$) and taking $\lub$ to be the $\max$.
\end{proof}


\begin{alemma}{\ref{lm:comcom2}}
For any set $T$ of non-empty well-formed traces, $W$, $L$, $A$, and
simple command~$c$, which is not $\cmdpop{}$, such that
(a) $\angled{W, L, A} = \alpha(T)$,
%
%
(b) for any $\tr \in T$, $\sem{c}~\llast{\tr} = \set{[\astate]~|~\astate \text{ is an
        execution state}}$,
then the following holds:
\[
\small
{\asem{c}{\angled{W, L, A}}} =
{\alpha \left(\bigcup\set{\tr \cons \astate \mid \tr \in T, [\astate] \in \sem{c}~\llast{\tr}} \right)}.
\]
\end{alemma}
\begin{refproof}{\emph{Proof}}
\label{proof:comcom2}
  The proof is by case analysis of the clauses of the
  \racerdx analysis definition in Figure~\ref{fig:analysis} and
  trace-collecting semantics in Figure~\ref{fig:traces}, assuming
  $\angled{W', L', A'} = \alpha(T) = \Lub_{\tr \in
    T}(\fst(\exec~\ab{\tr}~\bota))$, so by the assumptions
  $\angled{W, L, A} = \angled{W', L', A'}$. Also, any
  trace $\tr \cons \astate$ (where $\astate$ is and state from the
  lemma assumptions) is well-formed.

  From Lemma~\ref{lm:add} and Definition~\ref{def:alpha}, we have
  \[\alpha \left(\bigcup_{\tr \in T, [\astate] \in
      \sem{c}~\llast{\tr}} \set {\tr \cons \astate} \right) =
  \Lub_{\tr \in T, [\astate]\in
      \sem{c}~\llast{\tr}} \fst(\exec \left(\ab{\tr \cons \astate} \right))
  \]
  Let us notice now that for any $p \in W'$ there exists $\tr \in T$,
  such that $p \in (\fst(\exec~{\ab{\tr}}~\bota))_W$ (where $X_W$ is the
  projection to the $W$-component of an abstract domain element $X$).
  Similarly, for any $\angled{c, L} \in A'$ there exists $\tr \in T$,
  such that $\angled{c, L} \in (\fst(\exec~{\ab{\tr}}~\bota))_A$. Finally,
  since $\lub$ is defined as $\max$ on $\Lock$, there exists
  $\tr \in T$, such that $L = (\fst(\exec~{\ab{\tr}}~\bota))_L$.
  Therefore, we can represent, $\angled{W, L, A}$ as follows:
  \[
    \angled{W, L, A} = \Lub_{\tr \in T} \angled{W \cap W_{\tr},
      \min(L, L_{\tr}), A \cap A_{\tr}}, ~\text{where}~
    \angled{W_{\tr}, L_{\tr}, A_{\tr}} = \fst(\exec~\ab{\tr}~\bota)
  \]
  Now, let us denote via $\angled{W^{\tr}, L^{\tr}, A^{\tr}}$ the
  individual ``$\tr$-projected fragments'' of $\angled{W, L, A}$ so
  that for any $\tr \in T$,
  $\angled{W^{\tr}, L^{\tr}, A^{\tr}} = \angled{W \cap W_{\tr},
    \min(L, L_{\tr}), A \cap A_{\tr}}$. Altogether
  $\angled{W^{\tr}, L^{\tr}, A^{\tr}}$s deliver the overall value
  $\angled{W, L, A}$ in the $\Lub$-union above.
  This way, for each $\tr \in T$,
  $\angled{W^{\tr}, L^{\tr}, A^{\tr}} \pre \fst(\exec~\ab{\tr}~\bota) =
  \alpha(\set{\tr})$ and $L^{\tr} = \alpha(\set{\tr})_L$ (since
  $L^{\tr} \leq L$). Via Lemma~\ref{lm:comcom}, we obtain, for each
  $\tr \in T$:
  \begin{equation}
    \label{eq:lm1}
     \asem{c}{\angled{W^{\tr}, L^{\tr}, A^{\tr}}} \pre \alpha
    \left(\bigcup_{\astate, [\astate] \in \sem{c}~\llast{\tr}} \set{\tr \cons \astate} \right)
  \end{equation}
  Therefore, via Lemma~\ref{lm:anal-lub}, we have
  \begin{align*}
   \asem{c}{\angled{W, L, A}} &=
   \Lub_{\tr \in T}\asem{c}{\angled{W^{\tr}, L^{\tr}, A^{\tr}}} &
    \text{ via~Lemma~\ref{lm:anal-lub}}
    \\
   &= \Lub_{\tr \in T} \alpha \left(\bigcup_{\astate, [\astate] \in \sem{c}~\llast{\tr}} \set{\tr \cons \astate} \right)
  & \text{ via~\eqref{eq:lm1} and properties of $\lub$}
    \\
   &=  \alpha \left(\bigcup_{\tr \in T, [\astate] \in \sem{c}~\llast{\tr}} \set{\tr \cons \astate} \right) &
  \text{ via~Lemma~\ref{lm:add}.}
  \end{align*}

\QED\end{refproof}


\begin{alemma}{\ref{lm:balanced}}
If $C$ is a compound program with balanced locking, and $\angled{W', L',
  A'} = \asem{C} \angled{W, L, A}$. Then $L' = L$.
\end{alemma}
\begin{refproof}{\emph{Proof}}
  By induction on the shape of $C$ and the definition in
  Figure~\ref{fig:analysis}, counting the number of ``pending''
  $\cmdunlock{}/\cmdlock{}$ statements, which eventually balance each
  other, so all additions/subtractions to the $L$ component of the
  abstract state cancel each other.
\end{refproof}


\begin{alemma}{\ref{lm:com-compound}}
For any compound program $C$ with balanced locking and no method
calls, the initial components $W_0$, $L_0$, $A_0$, and a set of
well-formed non-empty traces $T$, such that
$\angled{W_0, L_0, A_0} = \alpha(T)$,
  \[
     \asem{C} \angled{W_0, L_0, A_0}
    =
    \begin{cases}
        \alpha\left(\bigcup_{\tr \in T} \set{\tr \append \tr' ~|~ \tr' \in
        \sem{C}~\llast{\tr}}\right)
        & \text{ if}~~T \neq \emptyset,
        \\[4pt]
        \alpha\left(\sem{C} \angled{S, h, 0} \right)
        & \text{ otherwise, for any well-behaved $S,h$,}
      \end{cases}
  \]
  where $\llast{\tr}$ is well-defined, as $T$ consists of non-empty traces.
\end{alemma}
\begin{refproof}{\emph{Proof}}
\label{proof:com-compound}
  The proof is by the backwards induction on the shape of the program
  $C$, with the use of Lemma~\ref{lm:comcom2} for simple commands.

\vspace{3pt}
\noindent\textbf{Case: $C = c$ (simple command)}.
The proof follows Lemma~\ref{lm:comcom2} for the subcase
$T \neq \emptyset$. Otherwise, $\alpha(T)=\alpha(\emptyset)=\bota$ and
as $\angled{W_0, L_0, A_0} = \alpha(T)$, $\angled{W_0, L_0, A_0}=\bota$.
Then, the proof is similar, but simpler, as
the statement of interest reduces to
$\asem{c}{\bota} = \alpha (\sem{c}~\angled{S,
  h, 0})$, which is established by case analysis on $c$ and the observation
  that due to well-behavedness, $\sem{c}~\angled{S,h, 0}\neq\emptyset$.

\vspace{3pt}
\noindent\textbf{Case: $C = C'; c$ ($c$ is not a method call)}.
For the first sub-case, follows immediately from
Lemma~\ref{lm:comcom2} with
$T = \bigcup_{\tr \in T}\sem{C'}~\llast{\tr}$ and
$\angled{W, L, A} = \asem{C'} \angled{W_0, L_0, A_0}$. For the second
sub-case, it is similar, via Lemma~\ref{lm:comcom2} with
$T = \sem{C'}~\angled{S, h, 0}$ and
$\angled{W, L, A} = \asem{C'} \angled{W_0, L_0, A_0}$.
Notice that while $C$ and, by consequently, $C'$ enjoy balanced
locking (which only matters for control-flow constructs), both $C$ and
$C'$ may have non-matched $\cmdlock{}$-statements.

\vspace{3pt}
\noindent\textbf{Case: $C = C'; (\cmdif{*}{C_1}{C_2})$}.
Let $\angled{W, L, A} = \asem{C'} \angled{W_0, L_0, A_0}$,
$\angled{W_i, L, A_i} = \asem{C_i} \angled{W, L, A}$ for
$i \in \set{1, 2}$ (the equality of $L$-components is thanks to the
balancing locking assumptions and Lemma~\ref{lm:balanced}). Let us
focus on the resulting abstract state
$\angled{W_1, L, A_1} \lub \angled{W_2, L, A_2}$. By induction
hypothesis, every element of $W_1$ is also in
$\alpha{\sem{C_1} \ldots}$; the inclusion also holds for elements of
$A_1$, and similarly for the corresponding result of the second branch
$C_2$. By Definition~\ref{def:lub} of $\lub$ in the desired equality
indeed holds (\ie, the analysis correctly under-approximates the
abstraction of the concrete semantics). Remarkably, even though it
takes the minimum of $L$-components, the analysis \emph{does not lose
  soundness} in the presence of conditionals because of the balanced
locking assumption.

\vspace{3pt}
\noindent\textbf{Case: $C = C'; (\cmdwhile{*}{C''})$}.
For this case, we will employ the enforced requirement that
$\cmdlock{}$ and $\cmdunlock{}$ only appear in balanced pairs.
First, let us pose
$\angled{W, L, A} = \asem{C'} \angled{W_0, L_0, A_0}$ and Let
$\angled{W', L, A'} = \asem{C''} \angled{W, L, A}$ (thanks to
well-scoped locking). Notice that, while our analysis functional
$\asem{\cdot}$ is not monotone in general, it \emph{is} monotone on
programs with balanced locks, because it is monotone \wrt the first
and the last components of the abstract state, and preserves the
second component, as established by Lemma~\ref{lm:balanced}.
Furthermore, we can see that after taking just one execution of
$\asem{C''} \angled{W, L, A}$ we already reach a fixed point: no
further accesses are going to be added (ang by inductive hypothesis,
\emph{all} concretely occurring accesses are recorded) and the
$L$-component will not change. Therefore, the analysis represents
precisely computation of the fixpoint of
$\ab{F} = \lambda D, \asem{C''}~D$, so the desired equality
holds.
\QED\end{refproof}


The following lemma ensures that an abstract
execution of ANF methods can be modelled via direct
\emph{substitutions} rather with \emph{stack manipulation}:

\begin{lemma}[ANF, substitution, and abstraction]
\label{lm:anf-subst}
For any non-empty well-formed trace $\tr \in \Traces$, a method $\m$,
a vector of expressions $e_1, \ldots, e_n$, a configuration
$\angled{s \cons S, h, L} = \llast{\tr}$, such that
(a)~$\forall i, 1 \leq i \leq n, \sem{e_i}_{s, h}$ are defined,
(b)~$\hat s = s_{\nil}[\argm_1 \mapsto
    \eval{\e_1}{s,h}]\cdots[\argm_n\mapsto\eval{\e_n}{s,h}]$,
(c)~$ \tr_{\text{push}} = [\angled{\cmdpush{\e_1,\ldots,\e_n}, \hat s
      \cons s \cons S, h, L}]$, and
(d)~$\body{\m}$ has no nested calls and features well-balanced locking,
the following holds:
{{
\begin{align*}
  \alpha \left(\bigcup\set{\tr \append \tr_{\text{push}} \append \tr'
    \mid {\tr' \in \sem{\body{\m}; \cmdpop{}}~\angled{\hat
      s \cons s \cons S, h, L}}} \right)
&=
\\
\alpha \left(\bigcup \set{\tr \append \tr' \mid {\tr' \in \sem{\body{\m}[\many{e_i/\argm_i}]}~\llast{\tr}}} \right)
\wcup
\wb
\end{align*}
}}
\end{lemma}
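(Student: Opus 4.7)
The plan is to prove this equality by induction on the structure of $\body{\m}$, establishing that the two ways of handling method calls---(i) the concrete stack-based discipline via $\cmdpush{}/\cmdpop{}$ on the LHS, and (ii) upfront syntactic substitution $\body{\m}[\many{\e_i/\argm_i}]$ on the RHS---produce the same $W$ and $A$ components of the abstraction, modulo the additional wobbly paths captured by $\wcup \wb$ coming from potential aliasing among the actuals.

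First I would unpack what $\exec$ does on the trailing state $\cmdpush{\e_1,\ldots,\e_n}$ in $\tr_{\text{push}}$: consulting Figure~\ref{fig:exec}, it pushes the substitution $\theta = [\argm_i \mapsto \e_i]_{i=1}^n$ on top of the previous substitution stack $\bar{\theta'}$, and it augments $W$ with exactly $\subst{\set{\e_i \mid \exists j\neq i.\, \e_i \prefix \e_j}}{\bar{\theta'}}$, which under the outer substitutions is precisely $\wb$. Symmetrically, the final $\cmdpop{}$ at the end of $\sem{\body{\m}; \cmdpop{}}$ pops $\theta$ back off the stack, restoring $\bar{\theta'}$. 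Between the push and the pop, the LHS processes the syntactic body trace $\ab{\tr'}$ under substitution stack $\theta \cons \bar{\theta'}$, whereas the RHS processes the corresponding trace of $\body{\m}[\many{\e_i/\argm_i}]$ under just $\bar{\theta'}$.

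The crux is then a correspondence between concrete traces: for every $\tr' \in \sem{\body{\m}}~\angled{\hat{s} \cons s \cons S, h, L}$ there is a matching $\tr'' \in \sem{\body{\m}[\many{\e_i/\argm_i}]}~\llast{\tr}$ whose syntactic projection is the substituted version $\ab{\tr'}[\many{\e_i/\argm_i}]$ of the original, and vice versa. For any load/store access whose path is rooted at some $\argm_j$, $\exec$ on the LHS applies $\theta$ at record time to produce the path $\subst{\pi'}{\theta \cons \bar{\theta'}}$, while on the RHS the command already contains the substituted path $\pi'[\many{\e_i/\argm_i}]$ and $\exec$ records the same thing. The ANF assumption from Definition~\ref{def:anf} is essential here: it rules out store commands of the form $\pi := \argm_j$ that would become syntactically ill-formed under substitution when $\e_j$ is itself a path; loads and variable copies remain well-formed because the RHS is still a variable or can legitimately become a path load.

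The main obstacle, I expect, will be handling the trace correspondence through the compound constructs---particularly the fixpoint construction for $\cmdwhile{*}{\cdot}$---and verifying that the two trace sets are in a tight bijection preserving the syntactic-trace-after-substitution relation. The well-balanced locking assumption in $\body{\m}$, combined with Lemma~\ref{lm:balanced}, keeps the $L$-component aligned across iterations, and the non-deterministic choice in $\cmdif{*}{\cdot}{\cdot}$ is identified on both sides by simply taking both branches. Once the correspondence is assembled and it is checked that the only additional contribution to $W$ on the LHS over the RHS comes from the $\cmdpush{}$ state, additivity of $\alpha$ (Lemma~\ref{lm:add}) together with the definition of $\wcup$ closes the argument.
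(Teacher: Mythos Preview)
Your proposal is correct and aligns with the paper's own proof, which is remarkably terse: the paper simply chains three equalities, justifying the key step (dropping $\tr_{\text{push}}$ and replacing the body-under-$\hat{s}$ by the substituted body, while splitting off $\wb$) by nothing more than ``via Def.~\ref{def:alpha}'', and the removal of $\cmdpop{}$ likewise. Your plan is essentially a detailed unpacking of that one-line justification: you make explicit the trace correspondence between $\sem{\body{\m}}~\angled{\hat{s}\cons s\cons S,h,L}$ and $\sem{\body{\m}[\many{\e_i/\argm_i}]}~\llast{\tr}$ at the level of syntactic projections, and you spell out how $\exec$'s treatment of $\cmdpush{}$ (pushing $\theta$ and emitting exactly $\wb$ into $W$) and of loads/stores (applying the substitution stack at record time) matches pre-substitution of the body. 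The induction on the structure of $\body{\m}$ is a sound way to organise this, though the paper does not state it explicitly; your invocation of ANF to keep stores well-formed under substitution and of balanced locking for the $L$-component are the right supporting observations.
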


In other words, one can replace computation of the \emph{method $\m$'s call}
abstraction via computing the abstraction of the \emph{$\m$'s body}
with substituted expressions.

\begin{refproof}{\emph{Proof}}
\label{proof:anf-subst}
  Starting from the expression on the left hand side of the equality,
  let us notice that
{\small{
\begin{align*}
\alpha \left(\bigcup\set{\tr \append \tr_{\text{push}} \append \tr' \mid
\tr' \in \sem{\body{\m}; \cmdpop{}}~\angled{\hat s \cons s \cons S, h, L}} \right) &= \text{(via Def.~\ref{def:alpha})}
\\
\alpha \left(\bigcup\set{\tr \append \tr_{\text{push}} \append \tr' \mid
\tr' \in \sem{\body{\m}}~\angled{\hat s \cons s \cons S, h, L}}
  \right)
%
&= \text{(via Def.~\ref{def:alpha})}
\\
\alpha \left(\bigcup\set{\tr \append \tr' \mid
\tr' \in \sem{\body{\m}[\many{e_i/\argm_i}]}~\angled{s \cons S, h, L}}
  \right) \wcup \wb &= \text{(by assumption)}
\\
\alpha \left(\bigcup\set{\tr \append \tr' \mid
\tr' \in \sem{\body{\m}[\many{e_i/\argm_i}]}~\llast{\tr}}\right) \wcup \wb
\end{align*}
}}
\QED\end{refproof}

\begin{alemma}{\ref{lm:anf-sum}}
For a method $\m$, a vector of expressions $e_1, \ldots, e_n$, a
configuration $\angled{s :: S, h, L} = \llast{\tr}$ such that
  \begin{itemize}
  \item $\forall i, 1 \leq i \leq n, \sem{e_i}_{s, h}$ are defined, and
  \item $\body{\m}$ has no nested calls and features well-balanced
    locking, the following holds:
  \end{itemize}
{\small{
\[
\begin{array}{r@{\ }c@{\ }l}
\asem{\m(\e_1,\ldots,\e_n)} \angled{W, L, A} &{\;=\;}&
\angled{W', L', A'} \wcup \wb, \text{where} \\[3pt]
&&
\angled{W', L', A'} = \asem{\body{\m}[\many{\e_i/\argm_i}]}~\angled{W, L, A}
\end{array}
\]
}}
\end{alemma}

\begin{proof}
  By induction on the shape of $\body{m}$ as a compound command
  (Figure~\ref{fig:analysis}).
\end{proof}
%

\begin{lemma}[Analysis is complete for method calls (per-trace)]
\label{lm:mcom1}
For any non-empty well-formed trace $\tr \in \Traces$, sets $W$, $A$,
number $L$, a method $\m$ in ANF, a vector of expressions
$e_1, \ldots, e_n$, a configuration
$\angled{s :: S, h, L} = \llast{\tr}$ such that
  \begin{itemize}
  \item $\forall i, 1 \leq i \leq n, \sem{e_i}_{s, h}$ are defined,
  \item $\angled{W, L, A} = \alpha(\set{\tr})$,
  \item $ \tr_{\text{push}} = [\angled{\cmdpush{\e_1,\ldots,\e_n}, \hat s
      \cons s \cons S, h, L}]$
  \item $\hat s = s_{\nil}[\argm_1 \mapsto \eval{\e_1}{s,h}]\cdots[\argm_n\mapsto\eval{\e_n}{s,h}]$,
  \item $\body{\m}$ has no nested calls and enjoys the well-scoped locking.
  \end{itemize}
\noindent
Then the following holds:
\[
{\asem{\m(\e_1,\ldots,\e_n)}{\angled{W, L, A}}} =
{\alpha \left(\bigcup
\set{\tr \append \tr_{\text{push}} \append \tr'
\mid
\tr' \in \sem{\body{\m}; \cmdpop{}}~\angled{\hat s \cons s \cons S, h, L}} \right)}.
\]
\end{lemma}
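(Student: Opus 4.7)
The plan is to combine Lemma~\ref{lm:anf-sum}, Lemma~\ref{lm:anf-subst}, and Lemma~\ref{lm:com-compound} to reduce the method-call case to the already-proven intra-procedural (no-call) case for straight-line compound programs. First I would apply Lemma~\ref{lm:anf-sum} to the left-hand side, rewriting $\asem{\m(\e_1,\ldots,\e_n)}\angled{W,L,A}$ as $\asem{\body{\m}[\many{\e_i/\argm_i}]}\angled{W,L,A} \wcup \wb$. The ANF assumption is crucial here: it guarantees that the syntactic substitution $[\many{\e_i/\argm_i}]$ produces a well-formed compound program (no command of the form $\pi:=\argm_j$ is ever turned into $\pi:=\pi_j$, which would break the grammar), and the no-nested-calls assumption ensures the substituted body contains no further method calls.

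Next I would apply Lemma~\ref{lm:anf-subst} to the right-hand side, which rewrites the abstraction of the traces that start with the $\cmdpush{}$ frame and end with $\cmdpop{}$ as $\alpha(\bigcup\{\tr \append \tr' \mid \tr' \in \sem{\body{\m}[\many{\e_i/\argm_i}]}~\llast{\tr}\}) \wcup \wb$. At this point both sides of the target equality have the form $(\text{something}) \wcup \wb$, and the wobbly-correction term $\wb$ matches identically on both sides by construction (the analysis' method-call clause in Figure~\ref{fig:analysis} and the $\cmdpush{}$ clause of $\exec$ in Figure~\ref{fig:exec} emit exactly the same set of prefix-aliased actuals).

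It remains to show that the two core terms coincide, namely
\[
\asem{\body{\m}[\many{\e_i/\argm_i}]}\angled{W,L,A}
\;=\;
\alpha\Bigl(\bigcup\set{\tr \append \tr' \mid \tr' \in \sem{\body{\m}[\many{\e_i/\argm_i}]}~\llast{\tr}}\Bigr).
\]
This is precisely an instance of Lemma~\ref{lm:com-compound} applied with $T = \set{\tr}$, $C = \body{\m}[\many{\e_i/\argm_i}]$, and $\angled{W_0, L_0, A_0} = \angled{W, L, A} = \alpha(\set{\tr})$ (the last equality being one of our assumptions). The premises of Lemma~\ref{lm:com-compound} need checking: (i) $C$ contains no method calls, because $\body{\m}$ has none and substitution only replaces formals by expressions at load/store sites; (ii) $C$ has balanced locking, because substitution does not touch $\cmdlock{}$/$\cmdunlock{}$ occurrences; and (iii) $T = \set{\tr}$ is a singleton of a non-empty well-formed trace, so $\llast{\tr}$ is defined.

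The main obstacle I anticipate is a careful bookkeeping argument that the $\wb$ terms on the two sides genuinely agree, and that Lemma~\ref{lm:anf-subst} can be invoked at the exact state $\llast{\tr} = \angled{s\cons S, h, L}$ under which the actuals $\e_i$ all evaluate (as required by its hypothesis). The evaluation hypothesis is directly one of our assumptions, and well-formedness of the extended trace $\tr \append \tr_{\text{push}}$ is routine because the $\cmdpush{}$ step only grows the call-stack without altering $h$ or $L$. Once these checkpoints are discharged, the three lemmas compose cleanly and the proof reduces to rewriting.
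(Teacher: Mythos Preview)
Your proposal is correct and follows essentially the same approach as the paper: apply Lemma~\ref{lm:anf-sum} to rewrite the abstract side, Lemma~\ref{lm:anf-subst} to rewrite the concrete side, observe that both reductions produce a matching $\wcup\,\wb$ correction, and discharge the remaining core equality via Lemma~\ref{lm:com-compound} with $T=\set{\tr}$ and $C=\body{\m}[\many{\e_i/\argm_i}]$. The only cosmetic difference is the order in which you invoke the two rewriting lemmas (you start with the left-hand side, the paper starts with the right), which is immaterial since they act on opposite sides of the equation.
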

\begin{proof}
By Lemma~\ref{lm:anf-subst}, it is sufficient to prove that
{\small{
\[
{\asem{\m(\e_1,\ldots,\e_n)}{\angled{W, L, A}}} =
{\alpha \left(\bigcup \set{\tr \append \tr' \mid
      \tr' \in \sem{\body{\m}[\many{e_i/\argm_i}]}~\llast{\tr}}
  \right) \wcup \wb},
\]
}}
\noindent 
which \emph{almost} matches the statement
of~Lemma~\ref{lm:com-compound} if we take
$C := \body{\m}[\many{e_i/\argm_i}]$.
By Lemma~\ref{lm:anf-sum}, we need to prove that
\[
\left(\asem{C}~\angled{W, L, A}\right) \wcup \wb = \alpha \left(\bigcup \set{\tr
    \append \tr' \mid \tr' \in
    \sem{C}~\llast{\tr}} \right) \wcup \wb,
\]
for $C = \body{\m}[\many{\e_i/\argm_i}]$.
%
%
%
That is, now it is sufficient to prove the following goal
\[
\asem{C}~\angled{W, L, A} = \alpha \left(\bigcup \set{\tr
    \append \tr' \mid \tr' \in
    \sem{C}~\llast{\tr}} \right),
\]
This is precisely the conclusion of Lemma~\ref{lm:com-compound}, for
$T = \set{\tr}$, so applying it completes the proof.
\end{proof}


\begin{lemma}[Analysis is complete for \emph{nested} method calls (per-trace)]
\label{lm:mcom2}
For any non-empty well-formed trace $\tr \in \Traces$, sets $W$, $A$,
number $L$, a method $\m$, a vector of expressions
$e_1, \ldots, e_n$, a configuration
$\angled{s :: S, h, L} = \llast{\tr}$ such that
  \begin{itemize}
  \item $\forall i, 1 \leq i \leq n, \sem{e_i}_{s, h}$ are defined,
  \item $\angled{W, L, A} = \alpha(\set{\tr})$,
  \item $ \tr_{\text{push}} = [\angled{\cmdpush{\e_1,\ldots,\e_n}, \hat s
      \cons s \cons S, h, L}]$
  \item $\hat s = s_{\nil}[\argm_1 \mapsto \eval{\e_1}{s,h}]\cdots[\argm_n\mapsto\eval{\e_n}{s,h}]$,
  \item $\body{\m}$ and all its callees are in ANF and enjoy the
    well-scoped locking.
  \end{itemize}
\noindent
Then the following holds:
\[
{\asem{\m(\e_1,\ldots,\e_n)}{\angled{W, L, A}}} =
{\alpha \left(\bigcup
\set{\tr \append \tr_{\text{push}} \append \tr'
\mid
\tr' \in \sem{\body{\m}; \cmdpop{}}~\angled{\hat s \cons s \cons S, h, L}} \right)}.
\]
\end{lemma}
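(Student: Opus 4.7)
The plan is to proceed by strong induction on the depth $d$ of the call tree rooted at $\m$ (well-defined because of Assumption~\ref{asm:recur} forbidding recursion). The base case $d = 1$, in which $\body{\m}$ contains no further calls, is exactly the statement of Lemma~\ref{lm:mcom1}. For the inductive step, I would assume the conclusion holds for every method whose call-tree depth is strictly less than $d$, and then establish it for a method $\m$ whose body has depth $d$.

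First, I would invoke the generalisation of Lemma~\ref{lm:anf-sum} to nested calls (stated in the paragraph after that lemma) to rewrite the abstract summary as
\[
\asem{\m(\e_1,\ldots,\e_n)} \angled{W, L, A} = \asem{\body{\m}[\many{\e_i/\argm_i}]} \angled{W, L, A} \wcup \wb.
\]
Dually, I would prove a nested-calls analogue of Lemma~\ref{lm:anf-subst}: because $\m$ is in ANF, the $\cmdpush{}$/$\cmdpop{}$ pair surrounding $\body{\m}$ in the concrete semantics merely manipulates the substitution stack used by $\exec$, and this manipulation is extensionally equivalent to applying the substitution $\many{\e_i/\argm_i}$ directly to the body and adding the $\wb$ term that accounts for aliasing between actuals. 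Granting this, the equality reduces to the claim that
\[
\asem{\body{\m}[\many{\e_i/\argm_i}]} \angled{W, L, A} = \alpha\!\left(\bigcup \set{\tr \append \tr' \mid \tr' \in \sem{\body{\m}[\many{\e_i/\argm_i}]}~\llast{\tr}}\right),
\]
which is an analogue of Lemma~\ref{lm:com-compound} permitting method calls inside the compound program.

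I would then prove this equality by structural induction on the compound command $C = \body{\m}[\many{\e_i/\argm_i}]$, exactly reusing the case analysis in the proof of Lemma~\ref{lm:com-compound} for $\cmdskip$, simple commands, sequential composition, conditionals, and loops (Lemma~\ref{lm:balanced} remains available for these since $\body{\m}$ has well-scoped locking, and this property is preserved by the substitution of expressions for formals). The only new subcase is $C'; \m'(\e'_1,\ldots,\e'_k)$: here any such nested $\m'$ has call-tree depth at most $d-1$, so the outer inductive hypothesis yields the required equality for that call, and it composes with the inner structural IH for $C'$ by the same additivity argument as the simple-command case of Lemma~\ref{lm:com-compound}.

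The hard part will be pinning down the nested-calls version of Lemma~\ref{lm:anf-subst}. I need to be precise about how $\exec$ treats the substitution stack $\bar{\theta}$ that accumulates across $\cmdpush{}$/$\cmdpop{}$ pairs, and verify that composing substitutions down the call chain produces exactly the same syntactic paths in the $W$ and $A$ components as applying the top-level substitution $\many{\e_i/\argm_i}$ once on the body. The $\wb$ contribution must be shown to appear at precisely the right level of nesting, matching the $\cmdpush{\e_1,\ldots,\e_n}$ clause in the definition of $\exec$; this is what the $\wcup \wb$ on the right of the Lemma~\ref{lm:anf-sum} equality and its generalisation is accounting for. Once this bookkeeping is discharged, the two inductions (on call-tree depth and on the structure of the body) fit together cleanly, and the rest of the argument is a direct lift of the proofs of Lemmas~\ref{lm:mcom1} and~\ref{lm:com-compound}.
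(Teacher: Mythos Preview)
Your proposal is correct and follows essentially the same approach as the paper: the paper's proof is the one-line sketch ``By top-level induction on the size of the call tree and per-call induction on the structure of the body, with the base case delivered by Lemma~\ref{lm:comcom},'' and your plan unpacks exactly this two-level induction (outer on call-tree depth, inner structural on the body) with Lemma~\ref{lm:mcom1} as the depth-one case and the generalized Lemmas~\ref{lm:anf-sum}/\ref{lm:anf-subst} handling the substitution bookkeeping. Your identification of the ``hard part'' (the nested-calls version of Lemma~\ref{lm:anf-subst}, tracking how $\exec$'s substitution stack $\bar{\theta}$ composes against the flattened substitution $[\many{\e_i/\argm_i}]$ and where the $\wb$ term lands) is apt and is precisely the detail the paper's sketch elides.
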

\begin{proof}
By top-level induction on the size of the call tree and per-call
induction on the structure of the body, with the base case delivered
by Lemma~\ref{lm:comcom}.
\end{proof}


\begin{lemma}[Analysis is complete for \emph{nested} method calls (for sets of traces)]
\label{lm:mcom3}
  For any set $T$ of non-empty well-formed traces, sets $W$, $A$,
  number $L$, a method $\m$, a vector of expressions
  $e_1, \ldots, e_n$, such that
  \begin{itemize}
  \item $\forall i, 1 \leq i \leq n, \sem{e_i}_{s, h}$ are defined,
  \item $\angled{W, L, A} = \alpha(T)$,
  \item for any $\tr \in T, \llast{\tr} = \angled{s :: S, h, L}$ for
    some $s$, $S$,
  \item $\body{\m}$ and all its callees are in ANF and enjoy the
    well-scoped locking.
  \end{itemize}
\noindent
Then the following holds:
{\footnotesize{
\[
  {\asem{\m(\e_1,\ldots,\e_n)}{\angled{W, L, A}}} =
  {
\alpha \left(\bigcup_{
\begin{array}{c}
\tr \in T
\end{array}}
\set{\tr \append \tr_{\text{push}} \append \tr'
\left|
\begin{array}{l}
\llast{\tr} = \angled{s :: S, h, L}
\\
\hat s = s_{\nil}[\argm_1 \mapsto
    \eval{\e_1}{s,h}]\cdots[\argm_n\mapsto\eval{\e_n}{s,h}]
\\
\tr_{\text{push}} = [\angled{\cmdpush{\e_1,\ldots,\e_n}, \hat s
      \cons s \cons S, h, L}] \\
\tr' \in \sem{\body{\m}; \cmdpop{}}~\angled{\hat s \cons s \cons S, h,
  L}
\end{array}
\right.
} \right)
%
}.
\]
}}
\end{lemma}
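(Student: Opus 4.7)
The plan is to lift Lemma~\ref{lm:mcom2} (the per-trace completeness result for nested method calls) to sets of traces, following exactly the same recipe used to lift Lemma~\ref{lm:comcom} to Lemma~\ref{lm:comcom2}. The high-level strategy is to decompose the abstract state $\angled{W, L, A} = \alpha(T)$ as a $\lub$ of per-trace projections, apply the per-trace lemma to each projection, and then recombine using additivity of $\alpha$ and distributivity of $\asem{\m(\e_1,\ldots,\e_n)}$ over $\lub$.

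First I would observe that by Lemma~\ref{lm:add} the right-hand side of the desired equation equals $\Lub_{\tr \in T} \alpha(\set{\tr \append \tr_{\text{push}} \append \tr' \mid \ldots})$, where the inner set ranges over continuations from $\llast{\tr}$. Since $\alpha(T) = \Lub_{\tr \in T} \alpha(\set{\tr})$, and each $L$-component on the lub is obtained by $\max$, I can define per-trace projected fragments $\angled{W^{\tr}, L^{\tr}, A^{\tr}}$ with $W^{\tr} = W \cap \alpha(\set{\tr})_W$, $A^{\tr} = A \cap \alpha(\set{\tr})_A$, $L^{\tr} = \min(L, \alpha(\set{\tr})_L)$, so that $\angled{W, L, A} = \Lub_{\tr \in T} \angled{W^{\tr}, L^{\tr}, A^{\tr}}$ and $\angled{W^{\tr}, L^{\tr}, A^{\tr}} \pre \alpha(\set{\tr})$ for each $\tr$, with $L^{\tr}$ matching the lock context in $\llast{\tr}$.

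Next I would apply Lemma~\ref{lm:mcom2} to each trace $\tr \in T$ with initial abstract state $\angled{W^{\tr}, L^{\tr}, A^{\tr}}$, yielding
\[
\asem{\m(\e_1,\ldots,\e_n)}{\angled{W^{\tr}, L^{\tr}, A^{\tr}}} \pre \alpha \left(\bigcup\set{\tr \append \tr_{\text{push}} \append \tr' \mid \tr' \in \sem{\body{\m}; \cmdpop{}}~\angled{\hat s \cons s \cons S, h, L}}\right).
\]
To recombine the left-hand sides into a single application of $\asem{\m(\e_1,\ldots,\e_n)}$ at $\angled{W, L, A}$, I need a distributivity property: $\asem{\m(\e_1,\ldots,\e_n)}$ distributes over $\lub$. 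This follows because by Lemma~\ref{lm:anf-sum} the method call can be unfolded as $\asem{\body{\m}[\many{\e_i/\argm_i}]}$ applied to the input state and then $\wcup$-joined with $\wb$ (which is independent of the input state), and $\asem{\cdot}$ on compound programs distributes over $\lub$ by an induction that uses Lemma~\ref{lm:anal-lub} as its base case.

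The main obstacle is thus the distributivity step: while Lemma~\ref{lm:anal-lub} gives distributivity only for simple commands, I would need to extend it to all compound programs (including conditionals, loops, and further nested calls) using the balanced-locking discipline and the fact that $\asem{\cdot}$ is monotone on such programs (as noted in the proof of Lemma~\ref{lm:com-compound}), so that the fixpoint for loops commutes with $\lub$. Once this compositional distributivity is in place, combining it with Lemma~\ref{lm:add} and the per-trace inequality above yields $\asem{\m(\e_1,\ldots,\e_n)}{\angled{W, L, A}} = \Lub_{\tr \in T} \asem{\m(\e_1,\ldots,\e_n)}{\angled{W^{\tr}, L^{\tr}, A^{\tr}}} \pre \alpha(\cdots)$, and the reverse direction follows because the abstraction on the right is obtained exactly by executing $\exec$ on the syntactic traces produced by the call, which is precisely what $\asem{\m(\e_1,\ldots,\e_n)}$ captures via Lemma~\ref{lm:anf-sum}, giving the desired equality.
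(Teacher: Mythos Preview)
Your proposal is correct and follows essentially the same approach the paper indicates: its proof of Lemma~\ref{lm:mcom3} is the one-liner ``Similar to Lemma~\ref{lm:comcom2},'' and you have faithfully instantiated that recipe by decomposing $\alpha(T)$ into per-trace fragments, applying Lemma~\ref{lm:mcom2} pointwise, and recombining via additivity of $\alpha$ and distributivity of $\asem{\cdot}$ over $\lub$. Your explicit identification of the need to extend Lemma~\ref{lm:anal-lub} from simple commands to compound programs (via Lemma~\ref{lm:anf-sum} and the balanced-locking monotonicity argument) is exactly the extra work the paper elides behind the word ``similar.''
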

\begin{proof}
Similar to Lemma~\ref{lm:comcom2}.
\end{proof}



\begin{alemma}{\ref{lm:com-compound-calls}}
For any compound program $C$ with balanced locking and all methods in
ANF, the analysis domain components $W_0$, $L_0$, $A_0$, and a set of
well-formed non-empty traces $T$, such that
$\angled{W_0, L_0, A_0} = \alpha(T)$,
  \[
     \asem{C} \angled{W_0, L_0, A_0}
    =
    \begin{cases}
        \alpha\left(\bigcup_{\tr \in T} \set{\tr \append \tr' ~|~ \tr' \in
        \sem{C}~\llast{\tr}}\right)
        & \text{ if}~~T \neq \emptyset,
        \\[4pt]
        \alpha\left(\sem{C} \angled{S, h, 0} \right)
        & \text{ otherwise, for any well-behaved $S,h$,}
      \end{cases}
  \]
  where $\llast{\tr}$ is well-defined, as $T$ consists of non-empty traces.
\end{alemma}
\begin{refproof}{\emph{Proof}}
\label{proof:com-compound-calls}
  The proof is by the backwards induction on the shape of the program
  $C$ similarly to the proof of Lemma~\ref{lm:com-compound}, with the
  use of Lemma~\ref{lm:comcom2} for simple commands and
  Lemma~\ref{lm:mcom3} for method calls. Let us consider the only new
  case of method calls:

\vspace{3pt}
\noindent\textbf{Case: $C = C'; \m(\e_1,\ldots,\e_n)$}
By the definition from Figure~\ref{fig:traces}, the call
$\m(\e_1,\ldots,\e_n)$ is ``unfolded'' into
$\cmdpush{\e_1,\ldots,\e_n}; \body{\m}; \cmdpop{}$, so the statement
of Lemma~\ref{lm:mcom3} applies.
For the first sub-case, follows immediately from Lemma~\ref{lm:mcom3}
with $T = \bigcup_{\tr \in T}\sem{C'}~\llast{\tr}$ and
$\angled{W, L, A} = \asem{C'} \angled{W_0, L_0, A_0}$. For the second
sub-case, it is similar, via Lemma~\ref{lm:mcom3} with
$T = \sem{C'}~\angled{S, h, 0}$ and
$\angled{W, L, A} = \asem{C'} \angled{W_0, L_0, A_0}$.
\QED\end{refproof}

\section{Proofs for Section~\ref{sec:nfp-reconstr} (Racy Trace Reconstruction)}
\label{app:reconstr}

For a path $\pi$ such that $|\pi|>1$ we fix $\trunc{\pi}$ to be the proper prefix of $\pi$ such
that is $|\trunc{\pi}|=|\pi|-1$.

\begin{lemma}
\label{lem:preserve-image-stack-update}
Let $\fs$ be a non-empty sequence of fields, and $x,y$ two variables.
Let $\pi= x.\fs$ and $\rho= y.\fs$.
If $s(x)=s'(y)$ and $h\subseteq h'$ then
$\pathimg{h}{s}{x.\gs} = \pathimg{h'}{s'}{y.\gs}$
and $\lval{x.\gs}{s,h}=\lval{y.\gs}{s',h'}$ for any $\gs\prefix\fs$
(and therefore $\lval{\pi}{s,h}=\lval{\rho}{s',h'}$).
\end{lemma}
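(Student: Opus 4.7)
The plan is to proceed by induction on $|\gs|$ for $\gs$ a nonempty prefix of $\fs$, proving both conclusions simultaneously. Since the values of both lvalues and footprints are determined by the symbolic shape of the path, the witness that transports equalities from $(s,h)$ to $(s',h')$ is simply the assumption $s(x)=s'(y)$, propagated field by field using $h\subseteq h'$.

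For the base case $\gs=\f$, one unfolds the definition of $\lval{\cdot}{\cdot}$: $\lval{x.\f}{s,h}=(s(x),\f)=(s'(y),\f)=\lval{y.\f}{s',h'}$ by the stack agreement. If this address is in $\dom{h}$ then $h\subseteq h'$ gives $h(s(x),\f)=h'(s'(y),\f)$, and the two footprints $\pathimg{h}{s}{x.\f}$ and $\pathimg{h'}{s'}{y.\f}$ coincide (being singletons consisting of this one mapping); if the lvalue is undefined in $h$ then, by $h\subseteq h'$, neither footprint contains this entry, so they agree trivially. For the inductive step, write $\gs=\gs'.\g$ with $\gs'$ a shorter nonempty prefix (or handle $|\gs'|=0$ as the variable case where $s(x)=s'(y)$ gives the needed equality directly). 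The induction hypothesis yields $\lval{x.\gs'}{s,h}=\lval{y.\gs'}{s',h'}$ and $\pathimg{h}{s}{x.\gs'}=\pathimg{h'}{s'}{y.\gs'}$. Then $\lval{x.\gs}{s,h}=(h(\lval{x.\gs'}{s,h}),\g)$, and since this address lies in $\dom{h}\subseteq\dom{h'}$ we can rewrite it as $(h'(\lval{y.\gs'}{s',h'}),\g)=\lval{y.\gs}{s',h'}$. The footprint then extends by one entry on each side, which coincide by the same argument, giving the footprint equality for $\gs$. Applying this to $\gs=\fs$ yields the final parenthetical equality $\lval{\pi}{s,h}=\lval{\rho}{s',h'}$.

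The mildly delicate point, and the one I expect to require the most care, is the treatment of partiality: the lemma does not explicitly assume that $\lval{x.\fs}{s,h}$ is defined, so the induction must be robust to the case where intermediate dereferences fall out of $\dom{h}$. The cleanest way to handle this is to phrase the statement as an equality of partial functions and observe that $h\subseteq h'$ ensures that whenever a dereference is defined in $h$ it is defined with the same value in $h'$; in the opposite direction, if a chain of dereferences is defined in $h'$ starting from $s'(y)$, then by Kleene-equality of the two sides at each prior step it must also be defined in $h$ starting from $s(x)$. Since $s(x)=s'(y)$ seeds this dual propagation, each step of the induction preserves both the ``defined'' and ``undefined'' verdicts synchronously, and the equality of lvalues and footprints follows in either case.
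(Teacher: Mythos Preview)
Your inductive argument is essentially identical to the paper's: induction on the structure of $\gs\prefix\fs$, with the base case handled by $s(x)=s'(y)$ and the step using the induction hypothesis together with $h\subseteq h'$ to push the heap lookup from $h$ to $h'$. That part is fine and matches the paper closely.

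The problematic part is your final paragraph on partiality. You claim that if a chain of dereferences is defined in $h'$ starting from $s'(y)$, then ``by Kleene-equality of the two sides at each prior step it must also be defined in $h$ starting from $s(x)$.'' This is circular and in fact false: the hypothesis $h\subseteq h'$ only transports definedness \emph{from $h$ to $h'$}, not the other way. Concretely, if $(s(x),\f)\notin\dom{h}$ but $(s(x),\f)\in\dom{h'}$, then $\lval{x.\f.\g}{s,h}$ is undefined while $\lval{y.\f.\g}{s',h'}$ is defined, so Kleene equality fails. The lemma therefore cannot be read as an unconditional equality of partial expressions.

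The paper simply does not engage with this issue: it tacitly assumes definedness on the $(s,h)$ side, which is justified because the definition of the footprint $\pathimg{h}{s}{\pi}$ already presupposes that $\lval{\pi}{s,h}$ is defined, and every call site of this lemma in the paper supplies a state where the relevant path is defined (typically via disconnectedness/acyclicity hypotheses). So the right fix is to drop your partiality paragraph and either state explicitly that $\lval{x.\fs}{s,h}$ is assumed defined, or note that the footprint notation already carries that assumption.
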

\begin{proof}
By induction on the structure of $\gs\prefix\fs$.

If $\gs=\g$ then $\lval{x.\gs}{s,h} = (s(x),\g) = (s'(y),\g) = \lval{y.\gs}{s',h'}$.
Therefore, $\pathimg{h}{s}{x.\gs} = \{ \lval{x.\g}{s,h} \mapsto h(\lval{x.\g}{s,h}) \} =
\{ \lval{y.\g}{s',h'} \mapsto h(\lval{y.\g}{s',h'}) \}$. By $h\subseteq h'$ this
means $\pathimg{h}{s}{x.\gs} = \{ \lval{y.\g}{s',h'} \mapsto h'(\lval{y.\g}{s',h'}) \} = \pathimg{h'}{s'}{y.\gs}$.

Let $\gs= \gs'.\g$. Then
$\lval{x.\gs}{s,h} = (h(\lval{x.\gs'}{s,h}),\g)$. By the inductive hypothesis, this is equal to
$(h(\lval{y.\gs'}{s',h'}),\g)$, and by \mbox{$h\subseteq h'$},
equal to $(h'(\lval{y.\gs'}{s',h'}),\g)$ which is simply $\lval{y.\gs}{s',h'}$. At the same time,
$\pathimg{h}{s}{x.\gs} = \pathimg{h}{s}{x.\gs'} \cup \{ \lval{x.\gs}{s,h} \mapsto h(\lval{x.\gs}{s,h}) \}$.
By the inductive hypothesis, $\pathimg{h}{s}{x.\gs'} = \pathimg{h'}{s'}{y.\gs'}$.
Thus, given that $\lval{x.\gs}{s,h} = \lval{y.\gs}{s',h'}$,  we have
$\pathimg{h}{s}{x.\gs} = \pathimg{h'}{s'}{y.\gs'} \cup \{ \lval{y.\gs}{s',h'} \mapsto h'(\lval{y.\gs}{s',h'}) \} = \pathimg{h'}{s'}{y.\gs}$.
\end{proof}

\begin{lemma}
\label{lem:preserve-acyclicity}
If $\pi$ is acyclic in $s,h$, $s(\rootvar{\pi})=s'(\rootvar{\pi})$ and
$\pathimg{h}{s}{\pi} = \pathimg{h'}{s'}{\pi}$ then $\pi$ is acyclic in $s',h'$.
\end{lemma}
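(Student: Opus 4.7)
The plan is to first establish, by induction on the length of a prefix $\pi' \prefix \pi$, that $\lval{\pi'}{s,h} = \lval{\pi'}{s',h'}$ for every such $\pi'$, and moreover that $h(\lval{\pi'}{s,h}) = h'(\lval{\pi'}{s',h'})$. The second equality uses crucially that $\lval{\pi'}{s,h} \in \dom{\pathimg{h}{s}{\pi}}$, so the footprint hypothesis $\pathimg{h}{s}{\pi} = \pathimg{h'}{s'}{\pi}$ pins down the corresponding value in $h'$.

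For the base case $\pi' = \rootvar{\pi}.\f$ (with $\f$ the first field of $\pi$), the lvalue equality is immediate from $s(\rootvar{\pi}) = s'(\rootvar{\pi})$, and the heap-value equality follows from the footprint equality as just sketched. The inductive step is $\pi' = \pi''.\g$ with $\pi'' \properprefix \pi'$: the induction hypothesis supplies both the lvalue equality and the heap-value equality at $\pi''$, and the definition of $\lval{\cdot}{\cdot}$ (namely $\lval{\pi''.\g}{s,h} = (h(\lval{\pi''}{s,h}), \g)$) propagates these to $\pi'$; another appeal to the footprint equality then supplies the new heap-value equality. This is essentially the argument of Lemma~\ref{lem:preserve-image-stack-update}, reused with the variable-equality hypothesis $s(\rootvar{\pi}) = s'(\rootvar{\pi})$ in place of $s(x) = s'(y)$ and without needing $h \subseteq h'$ since the footprint equality fixes the heap values on exactly the addresses we care about.

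With these equalities in hand, we transport the acyclicity condition directly. For any $\pi'' \prefix \pi' \prefix \pi$, the established equalities let us rewrite the target $h'(\lval{\pi'}{s',h'}) \neq \locn{\lval{\pi''}{s',h'}}$ as $h(\lval{\pi'}{s,h}) \neq \locn{\lval{\pi''}{s,h}}$, which holds by the assumption that $\pi$ is acyclic in $s,h$. Well-definedness of $\lval{\pi}{s',h'}$ is implicit in the assumption that $\pathimg{h'}{s'}{\pi}$ is a well-formed object at all (each side of the footprint equality presupposes the path resolves in its state).

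The main point requiring attention is tracking \emph{which} heap values are controlled by the footprint equality: the footprint determines $h$ and $h'$ exactly on addresses of the form $\lval{\pi'}{s,h}$ with $\pi' \prefix \pi$, so the induction must step through only such addresses and never appeal to heap values outside this set. Beyond this bookkeeping, there is no real obstacle; the proof is short and purely equational.
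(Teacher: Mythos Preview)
Your proposal is correct and follows essentially the same approach as the paper's proof: an induction over prefixes $\pi'\prefix\pi$ to establish $\lval{\pi'}{s,h}=\lval{\pi'}{s',h'}$ (using the footprint equality to pin down the heap values along the path), followed by a direct transport of the acyclicity inequality. The only cosmetic difference is that you carry the heap-value equality $h(\lval{\pi'}{s,h})=h'(\lval{\pi'}{s',h'})$ inside the induction hypothesis, whereas the paper derives it on demand from the footprint hypothesis after proving the lvalue equalities; both are fine.
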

\begin{proof}
First we show that if $s(\rootvar{\pi})=s'(\rootvar{\pi})$ and
$\pathimg{h}{s}{\pi} = \pathimg{h'}{s'}{\pi}$, then for any $\pi'\prefix\pi$,
$\lval{\pi'}{s,h}=\lval{\pi'}{s',h'}$. We do this by induction on $\pi'$.
The base case is immediate,
given $s(\rootvar{\pi})=s'(\rootvar{\pi})$.
Let $\pi'=\pi''.\f\prefix\pi$. We need to show that
 $(h(\lval{\pi''}{s,h}),\f)=(h'(\lval{\pi''}{s',h'}),\f)$, or more simply that
 $h(\lval{\pi''}{s,h})=h'(\lval{\pi''}{s',h'})$. By the inductive hypothesis,
 $\lval{\pi''}{s,h}=\lval{\pi''}{s',h'}$. By the definition of $\pathimg{h}{s}{\pi}$, we have
that $\lval{\pi''}{s,h}\in\dom{\pathimg{h}{s}{\pi}}$, and the required result follows from
the assumption that $\pathimg{h}{s}{\pi} = \pathimg{h'}{s'}{\pi}$.

To show acyclicity in $s',h'$ we need to show that for any $\pi''\prefix\pi'\prefix\pi$,
$h'(\lval{\pi'}{s',h'})\neq\locn{\lval{\pi''}{s',h'}}$. Given the above result,
$\lval{\pi'}{s',h'} = \lval{\pi'}{s,h}$ and $\lval{\pi''}{s',h'}=\lval{\pi''}{s,h}$.
In addition, since $\lval{\pi'}{s,h}\in\dom{\pathimg{h}{s}{\pi}}$ and
$\pathimg{h}{s}{\pi} = \pathimg{h'}{s'}{\pi}$, it follows that
$h'(\lval{\pi'}{s',h'})=h(\lval{\pi'}{s,h})$. But from the acyclicity in $s,h$ follows that
$h(\lval{\pi'}{s,h})\neq\locn{\lval{\pi''}{s,h}}=\locn{\lval{\pi''}{s',h'}}$ as required.
\end{proof}

\begin{lemma}
Let $\pi,\pi'$ be two access paths such that $\pi'\not\prefix\pi$.
In addition, $\pi$ is acyclic and disconnected in $s,h$.
Then, $\lval{\pi'}{s,h}\notin\dom{\pathimg{h}{s}{\pi}}$.
\label{lem:indomain-prefix}
\end{lemma}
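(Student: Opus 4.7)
The plan is to prove the contrapositive: if $\lval{\pi'}{s,h}$ is defined and lies in $\dom{\pathimg{h}{s}{\pi}}$, then $\pi'\prefix\pi$. The first step is to distil acyclicity into a clean uniqueness fact. Writing $\pi = x.\f_1\cdots\f_n$, set $\ell_0 = s(x)$ and $\ell_i = h(\lval{x.\f_1\cdots\f_i}{s,h})$ for $1 \le i \le n$. Then $\lval{x.\f_1\cdots\f_i}{s,h} = (\ell_{i-1},\f_i)$, so $\dom{\pathimg{h}{s}{\pi}} = \{(\ell_{i-1},\f_i)\mid 1\le i\le n\}$ and $\locn{\pathimg{h}{s}{\pi}} = \{\ell_0,\ldots,\ell_{n-1}\}$. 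Instantiating acyclicity with $\pi' = x.\f_1\cdots\f_i$ and $\pi'' = x.\f_1\cdots\f_j$ for $1\le j\le i$ yields $\ell_i \neq \ell_{j-1}$; a short induction on $i$ then shows that $\ell_0,\ldots,\ell_n$ are pairwise distinct.

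With this in hand I proceed by strong induction on $|\pi'|$. For the base case, write $\pi' = y.\g$, so $\lval{\pi'}{s,h} = (s(y),\g)$. If $y\neq x$, disconnectedness clause 1 gives $s(y)\notin\locn{\pathimg{h}{s}{\pi}}$, contradicting membership. If $y = x$, then $s(y) = \ell_0$, and matching $(\ell_0,\g)$ with some $(\ell_{i-1},\f_i)$ forces $i = 1$ by distinctness, hence $\g = \f_1$ and $\pi' = x.\f_1 \prefix \pi$. For the inductive step, write $\pi' = \pi''.\g$ and suppose $\lval{\pi'}{s,h} = (\ell_{i-1},\f_i)$. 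Then $h(\lval{\pi''}{s,h}) = \ell_{i-1}\in\locn{\pathimg{h}{s}{\pi}}$, and well-definedness of $\lval{\pi'}{s,h}$ ensures $\lval{\pi''}{s,h}\in\dom{h}$; hence disconnectedness clause 2 places $\lval{\pi''}{s,h}$ in $\dom{\pathimg{h}{s}{\pi}}$. The inductive hypothesis yields $\pi'' = x.\f_1\cdots\f_k \prefix \pi$, so $h(\lval{\pi''}{s,h}) = \ell_k$; then $\ell_k = \ell_{i-1}$ and distinctness forces $k = i-1$, whence $\g = \f_{k+1}$ and $\pi' = x.\f_1\cdots\f_{k+1} \prefix \pi$.

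The only substantive step is the distinctness of the $\ell_i$'s; the remainder is tight bookkeeping determined by which disconnectedness clause applies (clause 1 for fresh roots, clause 2 to promote the inductive hypothesis along an extension). The main obstacle is setting up the inductive step so that well-definedness of $\lval{\pi'}{s,h}$ legitimately grants $\lval{\pi''}{s,h}\in\dom{h}$ — required to invoke clause 2 — which is a standard consequence of the well-behaved setting in which the lemma is used.
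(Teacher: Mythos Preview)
Your proof is correct and follows a genuinely different decomposition from the paper's. The paper argues by induction on the structure of $\pi$: the base case $\pi=x.\f$ is handled by a case split on $|\pi'|$, using both disconnectedness clauses and acyclicity to reach a contradiction; the inductive step $\pi=\rho.\f$ observes that $\pi'\not\prefix\pi$ implies $\pi'\not\prefix\rho$, that $\rho$ inherits acyclicity and disconnectedness from $\pi$, and then invokes the inductive hypothesis for $\rho$. You instead fix $\pi$, first distil acyclicity into pairwise distinctness of the locations $\ell_0,\ldots,\ell_n$ along $\pi$, and then induct on $|\pi'|$ to establish the contrapositive. Your route makes the role of each hypothesis very transparent (distinctness pins down the index uniquely, disconnectedness drives the step from $\pi'$ back to its truncation), and as a bonus it avoids a small loose end in the paper's inductive step: applying the hypothesis to $\rho$ only yields $\lval{\pi'}{s,h}\notin\dom{\pathimg{h}{s}{\rho}}$, leaving the single extra address $\lval{\pi}{s,h}\in\dom{\pathimg{h}{s}{\pi}}\setminus\dom{\pathimg{h}{s}{\rho}}$ to be ruled out separately. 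The paper's approach, in exchange, needs no preliminary distinctness lemma and keeps the case analysis local to one-step extensions of $\pi$.
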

\begin{proof}
By induction on the structure of $\pi$.

Base case: suppose $\pi= x.\f$. Suppose, for a contradiction, that
$\lval{\pi'}{s,h}\in\dom{\pathimg{h}{s}{\pi}}$, meaning
$\lval{\pi'}{s,h}=\lval{\pi}{s,h}=(s(x),\f)$.

If $|\pi'|=1$ then it is easy to see that $s(\rootvar{\pi'})=s(\rootvar{\pi})$. If
$\rootvar{\pi'}\neq\rootvar{\pi}$,
then this contradicts the first clause of the disconnectedness of $\pi$.  Otherwise,
$\pi'=\pi$, contradicting the assumption that $\pi'\not\prefix\pi$.

If $|\pi'|>1$ then $\trunc{\pi'}$ is defined, and in particular,
$\lval{\pi'}{s,h}=(h(\lval{\trunc{\pi'}}{s,h}), \f)$, meaning that
$h(\lval{\trunc{\pi'}}{s,h}) = s(x)$, and furthermore,
$h(\lval{\trunc{\pi'}}{s,h})\in\locn{\pathimg{h}{s}{\pi}}$. By the second clause of
disconnectedness, it follows that $\lval{\trunc{\pi'}}{s,h}\in\dom{\pathimg{h}{s}{\pi}}$.
Therefore, $\lval{\trunc{\pi'}}{s,h} = (s(x),\f)$ again. As such,
$h(s(x),\f)=s(x)$, contradicting the acyclicity of $\pi$.

Inductive case: let $\pi=\rho.\f$. Since $\pi'\not\prefix\pi$ it will also be the case that
$\pi'\not\prefix\rho$.  It is not hard to see that $\rho$ is disconnected and acyclic in $s,h$,
because $\pi$ is, therefore the inductive hypothesis applies, completing the proof.
\end{proof}

\begin{lemma}
\label{lem:preserve-image-same-heap-update}
Suppose $\pi$ is acyclic in $s,h$, that $s(\rootvar{\pi})=s'(\rootvar{\pi})$,
that $\dom{\pathimg{h}{s}{\pi}}\subseteq\dom{h'}$ and
that for all $\alpha\in\dom{\pathimg{h}{s}{\pi}}\setminus\{\lval{\pi}{s,h}\}$,
$h(\alpha)=h'(\alpha)$.
Then, for any $\pi'\prefix\pi$, $\lval{\pi'}{s,h}=\lval{\pi'}{s',h'}$, and
$\dom{\pathimg{h}{s}{\pi}}=\dom{\pathimg{h'}{s'}{\pi}}$.
\end{lemma}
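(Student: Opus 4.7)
The plan is to prove both conclusions via a single induction on $|\pi'|$, establishing $\lval{\pi'}{s,h}=\lval{\pi'}{s',h'}$ for every $\pi'\prefix\pi$; the equality of footprint domains then follows immediately, since by definition $\dom{\pathimg{h}{s}{\pi}}=\{\lval{\pi'}{s,h}\mid\pi'\prefix\pi\}$, and likewise for $h',s'$. The base case, $|\pi'|=1$, reduces to $(s(\rootvar{\pi}),\f)=(s'(\rootvar{\pi}),\f)$, which is immediate from $s(\rootvar{\pi})=s'(\rootvar{\pi})$.

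For the inductive step, decompose $\pi'=\pi''.\f$ with $\pi''\prec\pi'\prefix\pi$. By the inductive hypothesis, $\lval{\pi''}{s,h}=\lval{\pi''}{s',h'}$. Hence $\lval{\pi'}{s,h}=(h(\lval{\pi''}{s,h}),\f)$ while $\lval{\pi'}{s',h'}=(h'(\lval{\pi''}{s,h}),\f)$, so it suffices to show $h(\lval{\pi''}{s,h})=h'(\lval{\pi''}{s,h})$. The hypothesis provides heap agreement on $\dom{\pathimg{h}{s}{\pi}}\setminus\{\lval{\pi}{s,h}\}$, and since $\pi''\prec\pi$ certainly $\lval{\pi''}{s,h}\in\dom{\pathimg{h}{s}{\pi}}$. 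Thus the only remaining task is to show that $\lval{\pi''}{s,h}\neq\lval{\pi}{s,h}$.

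This coincidence-avoidance is the main technical obstacle and is where acyclicity is essential. Write $\pi=\pi''.\gs$ with $|\gs|\ge 1$. If $\lval{\pi''}{s,h}=\lval{\pi}{s,h}$, then their first coordinates coincide, yielding $h(\lval{\rho}{s,h})=\locn{\lval{\pi''}{s,h}}$, where $\rho$ is the length-$(|\pi|-1)$ prefix of $\pi$. Split on $|\gs|$: when $|\gs|=1$ we have $\rho=\pi''$, and the resulting equation $h(\lval{\pi''}{s,h})=\locn{\lval{\pi''}{s,h}}$ is forbidden by acyclicity instantiated at the reflexive chain $\pi''\prefix\pi''\prefix\pi$; when $|\gs|\ge 2$ we have $\pi''\prec\rho\prefix\pi$, and the same equation is forbidden by acyclicity at $\pi''\prec\rho\prefix\pi$. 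In either case we derive a contradiction, completing the inductive step.

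Finally, the second stated conclusion is a direct corollary: by the induction just established, the sets $\{\lval{\pi'}{s,h}\mid\pi'\prefix\pi\}$ and $\{\lval{\pi'}{s',h'}\mid\pi'\prefix\pi\}$ coincide element-wise, so $\dom{\pathimg{h}{s}{\pi}}=\dom{\pathimg{h'}{s'}{\pi}}$. I expect the routine part of the proof to be the base case and the algebraic manipulation of addresses under $h$/$h'$; the only subtle ingredient is the case analysis on $|\gs|$ that uses acyclicity (including its reflexive instance $\pi''\prefix\pi''$) to rule out the pathological self-reference.
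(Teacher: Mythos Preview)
Your proof is correct and follows essentially the same idea as the paper's: both arguments hinge on using acyclicity to show that for every proper prefix $\pi''\properprefix\pi$ one has $\lval{\pi''}{s,h}\neq\lval{\pi}{s,h}$, so that the hypothesis gives $h(\lval{\pi''}{s,h})=h'(\lval{\pi''}{s,h})$ and the address equality propagates along the path.

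The only difference is packaging. The paper first establishes $\lval{\pi'}{s,h}\neq\lval{\pi}{s,h}$ for all $\pi'\properprefix\pi$, concludes that $h$ and $h'$ agree on all of $\dom{\pathimg{h}{s}{\trunc{\pi}}}$, and then invokes two earlier lemmas (the ``distinct heap update'' lemma on $\trunc{\pi}$, followed by the ``stack update'' lemma) to obtain the address equalities; it then extends to $\pi$ itself. You instead inline the induction directly on $\pi'\prefix\pi$, which is a bit more self-contained, while the paper's version reuses already-proved infrastructure. One minor remark: your case split on $|\gs|$ is unnecessary, since the acyclicity definition is stated with the non-strict order $\pi''\prefix\pi'\prefix\pi$ and therefore already covers the reflexive instance $\pi''=\rho$; a single invocation with $\pi''\prefix\rho\prefix\pi$ (where $\rho=\trunc{\pi}$) handles both cases at once.
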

\begin{proof}
If $|\pi|=1$ then the result follows trivially.

Let $|\pi|>1$ and fix some path $\pi'\properprefix\pi$.
First, observe that $\lval{\pi'}{s,h}\neq\lval{\pi}{s,h}$, because
otherwise $\locn{\lval{\pi'}{s,h}}=\locn{\lval{\pi}{s,h}}=h(\lval{\trunc{\pi}}{s,h})$, contradicting
acyclicity.
This means that for all $\pi'\properprefix\pi$, $h(\lval{\pi'}{s,h})=h'(\lval{\pi'}{s,h})$.
Lemma~\ref{lem:preserve-image-distinct-heap-update} can now be used on $\trunc{\pi}$,
giving us that $\pathimg{h}{s}{\trunc{\pi}}=\pathimg{h'}{s'}{\trunc{\pi}}$.
Applying Lemma~\ref{lem:preserve-image-stack-update} to the restricted heap
$\pathimg{h}{s}{\trunc{\pi}}$
lets us conclude that $\lval{\pi'}{s,h}=\lval{\pi'}{s',h'}$.
Finally, this gives $\lval{\pi}{s,h}=\lval{\pi}{s',h'}$ and
$\dom{\pathimg{h}{s}{\pi}}=\dom{\pathimg{h'}{s'}{\pi}}$, completing the proof.
\end{proof}

\begin{lemma}
\label{lem:preserve-image-distinct-heap-update}
Suppose that $s(\rootvar{\pi})=s'(\rootvar{\pi})$, that
$\dom{\pathimg{h}{s}{\pi}}\subseteq\dom{h'}$,
and that for all addresses $\alpha\in\dom{\pathimg{h}{s}{\pi}}$, $h(\alpha)=h'(\alpha)$.
Then $\pathimg{h}{s}{\pi}=\pathimg{h'}{s'}{\pi}$
and $\lval{\pi}{s,h}=\lval{\pi}{s',h'}$.
\end{lemma}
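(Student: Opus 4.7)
The plan is to proceed by induction on $|\pi|$, the length of the field sequence. The key observation is that all three hypotheses (agreement of $s,s'$ on $\rootvar{\pi}$, domain inclusion $\dom{\pathimg{h}{s}{\pi}}\subseteq\dom{h'}$, and pointwise heap-agreement on $\dom{\pathimg{h}{s}{\pi}}$) transfer to any prefix $\pi''\prefix\pi$, because $\pathimg{h}{s}{\pi''}\subseteq\pathimg{h}{s}{\pi}$ and $\rootvar{\pi''}=\rootvar{\pi}$. So the inductive hypothesis will always be applicable to $\trunc{\pi}$.

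For the base case $|\pi|=1$, write $\pi=x.\f$. Then $\lval{\pi}{s,h}=(s(x),\f)=(s'(x),\f)=\lval{\pi}{s',h'}$ by the stack-agreement hypothesis. The footprint is the singleton $\{(s(x),\f)\mapsto h(s(x),\f)\}$; the address $(s(x),\f)$ lies in $\dom{\pathimg{h}{s}{\pi}}$, so the heap-agreement hypothesis forces $h(s(x),\f)=h'(s(x),\f)$, giving $\pathimg{h}{s}{\pi}=\pathimg{h'}{s'}{\pi}$.

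For the inductive step, write $\pi=\pi''.\f$ with $|\pi''|=|\pi|-1\ge 1$. The inductive hypothesis applied to $\pi''$ yields $\lval{\pi''}{s,h}=\lval{\pi''}{s',h'}$ and $\pathimg{h}{s}{\pi''}=\pathimg{h'}{s'}{\pi''}$. Unfolding the definition of address, $\lval{\pi}{s,h}=(h(\lval{\pi''}{s,h}),\f)$. Because $\lval{\pi''}{s,h}\in\dom{\pathimg{h}{s}{\pi''}}\subseteq\dom{\pathimg{h}{s}{\pi}}$, heap-agreement gives $h(\lval{\pi''}{s,h})=h'(\lval{\pi''}{s,h})$, and by the inductive hypothesis the latter equals $h'(\lval{\pi''}{s',h'})$. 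Hence $\lval{\pi}{s,h}=\lval{\pi}{s',h'}$. The footprint equation then follows since $\pathimg{h}{s}{\pi}=\pathimg{h}{s}{\pi''}\cup\{\lval{\pi}{s,h}\mapsto h(\lval{\pi}{s,h})\}$ and the analogous decomposition holds on the primed side: the prefix parts agree by induction, the keys agree by what we just showed, and the values at the new key agree by invoking heap-agreement once more at $\lval{\pi}{s,h}\in\dom{\pathimg{h}{s}{\pi}}$.

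There is no real obstacle here; the result is a straightforward compositionality fact about path resolution under a heap extension, and mirrors the structure of Lemma~\ref{lem:preserve-image-stack-update}. The only point requiring care is the order of rewrites in the inductive step: one must apply heap-agreement to $\lval{\pi''}{s,h}$ (which lies in the \emph{unprimed} footprint) \emph{before} using the inductive identification with $\lval{\pi''}{s',h'}$, since the heap-agreement hypothesis is phrased in terms of $\dom{\pathimg{h}{s}{\pi}}$ rather than its primed counterpart.
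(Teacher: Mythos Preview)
Your proposal is correct and follows essentially the same approach as the paper's own proof: both proceed by induction on the structure of $\pi$, handle the base case $\pi=x.\f$ via the stack-agreement hypothesis, and in the inductive step apply the inductive hypothesis to the immediate prefix before invoking heap-agreement at the prefix's address to extend both the $\lval{\cdot}{}$ equality and the footprint equality. The order-of-rewrites subtlety you flag is exactly how the paper handles it as well.
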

\begin{proof}
By induction on $\pi$. If $\pi= x.\f$ then
$\pathimg{h}{s}{\pi} = \{ (s(x),\f)\mapsto h(s(x),\f)\} = \{ (s'(x),\f)\mapsto h(s'(x),\f)\}$.
By assumption, $h(s(x),\f)=h'(s(x),\f)=h'(s'(x),\f)$, thus
$\pathimg{h}{s}{\pi} = \pathimg{h'}{s'}{\pi}$, and $\lval{\pi}{s,h}=\lval{\pi}{s',h'}$.

Let $\pi=\pi'.\f$. Then,
$\pathimg{h}{s}{\pi} = \pathimg{h}{s}{\pi'} \cup \{ \lval{\pi}{s,h} \mapsto h(\lval{\pi}{s,h})\}$,
and $\dom{\pathimg{h}{s}{\pi'}}\subseteq\dom{h'}$.
By the inductive hypothesis, $\pathimg{h}{s}{\pi'}=\pathimg{h'}{s'}{\pi'}$, and
$\lval{\pi'}{s,h}=\lval{\pi'}{s',h'}$. Also, $\lval{\pi'}{s,h}\in\dom{\pathimg{h}{s}{\pi'}}$, thus
$h(\lval{\pi'}{s,h})=h'(\lval{\pi'}{s,h})=h'(\lval{\pi'}{s',h'})$, meaning
$\pathimg{h}{s}{\pi}=\pathimg{h'}{s'}{\pi}$, and $\lval{\pi}{s,h}=\lval{\pi}{s',h'}$.
\end{proof}

\begin{alemma}{\ref{lem:stability-means-preservation-nocalls}}
Let $\pi$ be an access path and $\tr$ a WF, non-empty trace without any
$\cmdpush{}$/$\cmdpop{}$ commands,
whose starting state
$\angled{\cmdskip, s_0\cons S, h_0, \_}$ is well-behaved,
and where $\pi$ is acyclic and disconnected in $s_0,h_0$.
Let the last state of $\tr$ be $\angled{\_, s'\cons S, h', \_}$.

If it is the case that $\forall \e\in \alpha(\set{\tr})_W. \e\not\properprefix\pi$, then
$\pi$ is preserved from $s_0,h_0$ to $s',h'$.
\end{alemma}
\begin{proof}
By induction on the length of $\tr$. The base case is trivial (traces always start with $\cmdskip$).
For the inductive case, let $\angled{\_, s\cons S, h, \_}$ be the penultimate state in $\tr$.
By the inductive hypothesis, $\pi$ is preserved from $s_0,h_0$ to $s,h$.  Thus it suffices to show
that $\pi$ is preserved from $s,h$ to $s',h'$. The cases for $\cmdlock{}/\cmdunlock{}$ are trivial.

We call the assumption that $\forall \e\in \alpha(\set{\tr})_W. \e\not\properprefix\pi$, the \emph{stability assumption}.

\noindent\textbf{Case: $c=(x:=y)$ or $c=(x:=\pi')$ or $c=(x := \cmdalloc)$}.

Clearly, $\lval{\pi}{s',h'}$ is defined.

By the stability assumption and
Definition~\ref{def:alpha} ($\exec$), it follows that $x\neq \rootvar{\pi}$.
Therefore, by the concrete semantics of $c$, $s(\rootvar{\pi})=s'(\rootvar{\pi})$,
and in all cases, $h\subseteq h'$.
Thus, by Lemma~\ref{lem:preserve-image-stack-update}, it follows that
$\pathimg{h'}{s'}{\pi} = \pathimg{h}{s}{\pi}$  and
$\forall\alpha\in\dom{\pathimg{h}{s}{\pi}}.\,h'(\alpha)=h(\alpha)$.

The acyclicity of $\pi$ in $s',h'$ follows from Lemma~\ref{lem:preserve-acyclicity} and the fact
that $\pathimg{h'}{s'}{\pi} = \pathimg{h}{s}{\pi}$.

We show that $\pi$ is disconnected in $s',h'$.  Given
$\pathimg{h}{s}{\pi} = \pathimg{h'}{s'}{\pi}$, the second clause
of the disconnectedness definition carries over directly from
the disconnectedness of $\pi$ in $s,h$ whenever $c$ is
$x :=y $ or $x:=\pi'$ (because $h=h'$). The case where $c$ is $x:=\cmdalloc$
follows by the same reasoning, plus the fact that all newly-allocated
addresses point to the newly allocated location, and therefore by construction, cannot alias
addresses in $h$.

Given the disconnectedness of $\pi$ in $s,h$ and the fact that for all variables
$w\neq x$, $s(w)=s'(w)$, we only need to show
$s'(x)\notin\locn{\pathimg{h'}{s'}{\pi}}$ for the first clause.

If $c=(x:=y)$ then $s'(x)=s(y)$.
Given that $y\neq \rootvar{\pi}$, due to the stability assumption,
and the first clause of the definition of disconnectedness for $\pi$ in $s,h$, we have that
$s(y)\notin\locn{\pathimg{h}{s}{\pi}}=\locn{\pathimg{h'}{s'}{\pi}}$
(since $\pathimg{h}{s}{\pi} = \pathimg{h'}{s'}{\pi}$), as required.

If $c=(x:=\pi')$ then $s'(x)=h(\lval{\pi'}{s,h})$.
Given that $h=h'$ and that $\pathimg{h}{s}{\pi} = \pathimg{h'}{s'}{\pi}$, it
suffices to show $h(\lval{\pi'}{s,h})\notin\locn{\pathimg{h}{s}{\pi}}$.
By the stability assumption,  $\pi'\not\prefix\pi$, thus applying Lemma~\ref{lem:indomain-prefix},
we obtain $\lval{\pi'}{s,h}\notin\dom{\pathimg{h}{s}{\pi}}$.
By the contrapositive of the second clause of disconnectedness of $\pi$ in $s,h$ follows that
$h(\lval{\pi'}{s,h})\notin\dom{\pathimg{h}{s}{\pi}}$, as required.

If $c=(x:=\cmdalloc)$ then $s'(x)=\ell$ for some $\ell\notin\dom{h}$.
Clearly, $\ell\notin\locn{\pathimg{h}{s}{\pi}}$ and since
$\pathimg{h}{s}{\pi} = \pathimg{h'}{s'}{\pi}$, it follows that
$s'(x)=\ell\notin\locn{\pathimg{h'}{s'}{\pi}}$, as required.

\noindent\textbf{Case: $c=(\pi':=x)$, where $\pi'\neq\pi$}.

By the preservation of well-behavedness $c$ is safe to execute, thus, $\lval{\pi'}{s,h}\in\dom{h}$.
Also, $s=s'$.

The stability assumption forces $\pi'\not\prefix\pi$, thus by Lemma~\ref{lem:indomain-prefix} we have that
$\lval{\pi'}{s,h}\notin\dom{\pathimg{h}{s}{\pi}}$.
This, in addition with the semantics of the assignment to $\pi'$, means that
for all addresses $\alpha\in\dom{\pathimg{h}{s}{\pi}}$, $h(\alpha)=h'(\alpha)$.
Thus Lemma~\ref{lem:preserve-image-distinct-heap-update} applies, giving
$\lval{\pi}{s,h}=\lval{\pi}{s',h'}$,
$\pathimg{h}{s}{\pi}=\pathimg{h'}{s'}{\pi}$ and
$\forall\alpha\in\dom{\pathimg{h}{s}{\pi}}.\,h(\alpha)=h'(\alpha)$,
thus,
$\dom{\pathimg{h}{s}{\pi}} = \dom{\pathimg{h'}{s'}{\pi}}$.

For disconnectedness, first, $\lval{\pi}{s',h'}$ is clearly defined;
the first clause holds due to the fact that $s=s'$, $\pathimg{h}{s}{\pi}=\pathimg{h'}{s'}{\pi}$,
and the assumption of disconnectedness of $\pi$ in $s,h$.

By the semantics of the command, $\locn{h}=\locn{h'}$.
Therefore, the second clause of disconnectedness in $s',h'$ holds for all
addresses $\alpha\neq\lval{\pi'}{s,h}$,
by the second clause of the disconnectedness in $s,h$.
By the first clause and stability ($x\neq\rootvar{\pi}$) follows that
$h'(\lval{\pi'}{s,h})=s(x)\notin\locn{\pathimg{h}{s}{\pi}}=\locn{\pathimg{h'}{s'}{\pi}}$,
covering the case where $\alpha=\lval{\pi'}{s,h}$.

As before, acyclicity follows from the fact that $\pathimg{h}{s}{\pi}=\pathimg{h'}{s'}{\pi}$,
the acyclicity in $s,h$ and Lemma~\ref{lem:preserve-acyclicity}.

\noindent\textbf{Case: $c=(\pi:=x)$}.

By the acyclicity of $\pi$ in $s,h$ and the semantics of the store it follows that
$\forall\alpha\in\dom{\pathimg{h}{s}{\pi}}\setminus\{\lval{\pi}{s,h}\}.\,h(\alpha)=h'(\alpha)$.
Thus, Lemma~\ref{lem:preserve-image-same-heap-update} applies and we have
$\lval{\pi}{s,h}=\lval{\pi}{s',h'}$ and
$\dom{\pathimg{h}{s}{\pi}} = \dom{\pathimg{h'}{s'}{\pi}}$.

Disconnectedness: The first clause follows from the disconnectedness in $s,h$, $s=s'$
and $\locn{\pathimg{h}{s}{\pi}} = \locn{\pathimg{h'}{s'}{\pi}}$
(because $\dom{\pathimg{h}{s}{\pi}} = \dom{\pathimg{h'}{s'}{\pi}}$).

For the second clause, we need to show that for all $\alpha\in\dom{h'}$,
if $h'(\alpha)\in\locn{\pathimg{h'}{s'}{\pi}}$, then $\alpha\in\dom{\pathimg{h'}{s'}{\pi}}$.
As $\dom{\pathimg{h}{s}{\pi}} = \dom{\pathimg{h'}{s'}{\pi}}$ and $\dom{h'}=\dom{h}$,
this is equivalent to showing that for all $\alpha\in\dom{h}$,
if $h'(\alpha)\in\locn{\pathimg{h}{s}{\pi}}$, then $\alpha\in\dom{\pathimg{h}{s}{\pi}}$.

If $\alpha\neq\lval{\pi}{s,h}=\lval{\pi}{s',h'}$, then $h(\alpha)=h'(\alpha)$. Thus,
if $h'(\alpha)\in\locn{\pathimg{h}{s}{\pi}}$
then $h(\alpha)\in\locn{\pathimg{h}{s}{\pi}}$
thus $\alpha\in\dom{\pathimg{h}{s}{\pi}}$ (by disconnectedness in $s,h$).

If $\alpha=\lval{\pi}{s,h}=\lval{\pi}{s',h'}$, then $h'(\alpha)=s(x)$.
By stability, $x\neq\rootvar{\pi}$, thus by the first clause of disconnectedness,
$s(x)\notin\locn{\pathimg{h}{s}{\pi}}$, as required.

Acyclicity: we wish to prove that for every $\pi''\prefix\pi'\prefix\pi$,
$h'(\lval{\pi'}{s',h'})\neq\locn{\lval{\pi''}{s',h'}}$.
Using Lemma~\ref{lem:preserve-image-same-heap-update}, this reduces to showing
$h'(\lval{\pi'}{s,h})\neq\locn{\lval{\pi''}{s,h}}$
(because for any path $\pi'\prefix\pi$, $\lval{\pi'}{s,h}=\lval{\pi'}{s',h'}$).
We distinguish two cases.

Let $\lval{\pi''}{s,h}\neq\lval{\pi}{s,h}$.
Then $h'(\lval{\pi''}{s,h}) = h(\lval{\pi''}{s,h})\neq\locn{\lval{\pi''}{s,h}}$,
by the acyclicity in of $\pi$ in $s,h$.

Let $\lval{\pi''}{s',h'}=\lval{\pi}{s',h'}$.
This means $h'(\lval{\pi''}{s',h'})= s(x) \neq \locn{\lval{\pi'}{s,h}}$.
by stability and the disconnectedness in $s,h$.
\end{proof}

\begin{lemma}[Frame Property]
  \label{lem:frame}
  Let $h=h_0 \cup h_F$ (where $\cup$ is disjoint union), and let $\tr\in\sem{C}\angled{s,h,L}$ such that $\llast{\tr}=\angled{s',h',L'}$.
  If $\sem{C}\angled{s,h_F,L}\neq\emptyset$ then there exists $\tr'\in\sem{C}\angled{s,h_F,L}$
  such that $\llast{\tr'}=\angled{s',h'_F,L'}$, and $h'=h_0\cup h'_F$.
\end{lemma}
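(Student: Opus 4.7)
}
The plan is to proceed by structural induction on $C$, mirroring the inductive structure of the concrete semantics in Figures~\ref{fig:traces} and~\ref{fig:ctraces}. The induction hypothesis is that for every strict sub-program of $C$, whenever both it runs successfully on $\angled{s,h,L}$ and the semantics on the smaller state $\angled{s,h_F,L}$ is non-empty, we can select a trace on $h_F$ that agrees step-for-step with the large-heap trace modulo the disjoint frame $h_0$. The main technical point to establish at each step is an invariant of the form: for every intermediate state $\angled{c_i, S_i, h_i, L_i}$ of $\tr$ there exists a matched intermediate state $\angled{c_i, S_i, h_{F,i}, L_i}$ in $\tr'$ with $h_i = h_0 \cup h_{F,i}$. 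Stacks and lock contexts coincide exactly; only the heap is split.

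For the base cases I would handle each simple command using its transfer rule. For $\cmdskip$, $x:=y$, $\cmdlock$, $\cmdunlock$, and $\cmdpop{}$ the heap is not consulted and the claim is immediate, taking $h'_F = h_F$. For $x:=\pi$ and $\pi := x$, the non-emptiness assumption on $\sem{C}\angled{s,h_F,L}$ ensures $\lval{\pi}{s,h_F}$ is defined, so every prefix address of $\pi$ lies in $\dom{h_F} \subseteq \dom{h}$; since $h$ and $h_F$ agree on $\dom{h_F}$, one obtains $\lval{\pi}{s,h} = \lval{\pi}{s,h_F}$ and the read/write touches only $h_F$, yielding $h' = h_0 \cup h'_F$ with the required $h'_F$. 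For $x:=\cmdalloc$, the key observation is that $\sem{x:=\cmdalloc}$ is non-deterministic over fresh locations: let $\ell$ be the location picked in $\tr$ (so $\ell \notin \locn{h}$, hence $\ell \notin \locn{h_F}$); we pick the same $\ell$ for $\tr'$, giving $h'_F = h_F \cup \bigcup_\f \{(\ell,\f)\mapsto\ell\}$ and $h' = h_0 \cup h'_F$.

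For the compound cases I would apply the inductive hypothesis to each sub-program, threading the matched intermediate states together. Sequential composition $C;c$ and $C;c'$ where $c'$ is any composite form all follow by applying the hypothesis twice (to $C$ and to the continuation), using that the stack and lock state coincide so the continuation's input configurations match. Conditionals and loops are handled branch-by-branch using the non-deterministic structure of Figure~\ref{fig:traces}: whichever branch/unrolling was taken in $\tr$ can be taken in $\tr'$ by the inductive hypothesis (and the non-emptiness precondition for $\tr'$ propagates down because at least one execution on $h_F$ is assumed). Method calls $\m(\e_1,\ldots,\e_n)$ are slightly delicate: the $\cmdpush$ step evaluates argument expressions $\eval{\e_i}{s,h}$ which, by the same prefix-in-$h_F$ reasoning used for loads, coincide with $\eval{\e_i}{s,h_F}$; the fresh frame $\hat{s}$ is therefore identical on both sides, and the recursive invocation on $\body{\m}$ is discharged by the hypothesis; the concluding $\cmdpop{}$ restores the caller stack in both traces identically.

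The main obstacle I anticipate is bookkeeping rather than conceptual: keeping the matched-intermediate-states invariant aligned across the $\cmdpush{}/\cmdpop{}$ boundaries (so that the frame $h_0$ threaded through a callee agrees with the frame in the caller) and ensuring that the non-emptiness precondition for $\tr'$ is preserved into every sub-derivation one descends into. A secondary subtlety is the allocation case, where one must be careful to \emph{reuse} the same fresh location chosen in $\tr$ rather than appeal to general non-determinism; this is legitimate because $\sem{\cdot}$ collects \emph{all} fresh-location choices, so the specific choice made in $\tr$ is available in $\tr'$ as well. Once these are handled uniformly, the remainder of the proof is routine case analysis, and the final conjunct $h' = h_0 \cup h'_F$ drops out at the last step from the invariant applied to $\llast{\tr}$ and $\llast{\tr'}$.
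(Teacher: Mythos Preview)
Your structural induction is exactly what the paper's one-line proof (``By the semantics of the language and standard separation-logic style reasoning'') leaves implicit, so the overall approach is correct and matches the paper.

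One point deserves tightening: your claim that ``the non-emptiness precondition for $\tr'$ propagates down because at least one execution on $h_F$ is assumed'' is not sufficient as written. For a conditional $C;(\cmdif{*}{C_1}{C_2})$, non-emptiness of $\sem{C}\angled{s,h_F,L}$ only guarantees that \emph{some} branch is executable on $h_F$ at \emph{some} intermediate state, not that the particular branch taken by $\tr$ is executable at the particular state your matched trace reaches. What actually makes this work in the paper's setting is that the small state is well-behaved (Definition~\ref{def:wbstate}) and well-behavedness is preserved along traces (Lemma~\ref{lem:well-behavedness}); together with balanced locking this ensures that \emph{no} simple command ever returns $\emptyset$, so every branch is replicable. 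In the one place the lemma is invoked (the proof of Lemma~\ref{lem:nocalls-preservation}), the small heap $\tilde h$ is constructed as the subheap reachable from the actuals, which guarantees well-behavedness of $\hat s,\tilde h$. So your induction goes through, but the justification for the branching cases should appeal to well-behavedness preservation rather than to the bare non-emptiness hypothesis.
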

\begin{proof}
  By the semantics of the language and standard separation-logic style reasoning.
\end{proof}

\begin{alemma}{\ref{lem:nocalls-preservation}}
  Suppose that $\mbody(\m)$ has no procedure calls, that $\pi$ is an access path which is
  disconnected and acyclic in the well-behaved state $s,h$,
  that $\tr\in\sem{\cmdskip;\m(\bar{\e})}\angled{s,h,L}$ is a non-empty trace,
  that $\llast{\tr}=\angled{\cmdpop{},s,h',\_}$, that $\alpha(\set{\tr})=\angled{W,\_,\_}$, and that
  $\forall\e\in W.\, \e\not\properprefix\pi$.

  Then, $\pi$ is preserved from $s,h$ to $s,h'$.
\end{alemma}
\begin{proof}
Let $\e=(\e_1,\ldots,\e_n)$ and let $\pi=z.\fs$.
Since $\tr$ is non-empty, for all $i\le n$, $\eval{\e_i}{s,h}=\ell_i$ (and is defined).
By the semantics for method calls and those for simple commands ($\cmdpop$) we have
$\tr = \tr_{\text{push}} \append \tr' \append \tr_{\text{pop}}$ where
\begin{itemize}
  \item $\tr_{\text{push}} = [\angled{\cmdpush{\e_1,\ldots,\e_n}, \hat s \cons s, h, L}]$,
  \item $\tr' \in \sem{\body{\m}}\angled{\hat s \cons s, h, L}$, where $\llast{\tr'}=\angled{\_,\hat s' \cons s,h',L'}$, and
  \item $\tr_{\text{pop}}=[\angled{\cmdpop{}, s, h', L'}]$.
\end{itemize}
where
$\hat s \defeq s_{\nil}[\argm_1 \mapsto \ell_1]\cdots[\argm_n\mapsto\ell_n]$.

By the assumption that $\forall\e\in W.\,\e\not\properprefix\pi$ (we call this the stability assumption)
and the definition of
how $\exec$ modifies the $W$ component for
the method call case, we can assume there is at most one $\e_i$ such that
$\e_i= z$, or $\e_i=\pi_i\properprefix\pi$.

\noindent \textbf{Case: there is no $i\le n$ such that $\e_i= z$, or $\e_i=\pi_i\properprefix\pi$}.
Fix the stacks $\tilde s \defeq \hat s[\argm_{n+1}\mapsto s(z)]$
and $\tilde s' \defeq \hat s'[\argm_{n+1}\mapsto s(z)]$.
Recall that, by convention, $\body{\m}$ does not use $\argm_{n+1}$;
thus, since $\tr' \in \sem{\body{\m}}\angled{\hat s \cons s, h, L}$ and where $\llast{\tr'}=\angled{\_,\hat s' \cons s,h',L'}$,
it will also be the case that there is another ``isomorphic'' trace $\tr''$ such that
$\tr'' \in \sem{\body{\m}}\angled{\tilde s \cons s, h, L}$, where $\llast{\tr''}=\angled{\_,\tilde s' \cons s,h',L'}$.

We now show that $\rho=\argm_{n+1}.\fs$ is disconnected and acyclic in
$\tilde s, h$. Note that by Lemma~\ref{lem:preserve-image-stack-update},
$\pathimg{h}{s}{\pi} = \pathimg{h}{\tilde s}{\rho}$ and $\lval{\pi}{s,h}=\lval{\rho}{\tilde s, h}$,
therefore also $\dom{\pathimg{h}{s}{\pi}} = \dom{\pathimg{h}{\tilde s}{\rho}}$.
\begin{enumerate}
\item First clause of disconnectedness: We need to show that for all variables
$x\neq \argm_{n+1}$, $\tilde s(x)\notin\locn{\pathimg{h}{\tilde s}{\rho}}$.
By the definition of $\tilde s$, this means showing that for all $i\le n$,
$\tilde s(\argm_i) = \hat s(\argm_i)\notin\locn{\pathimg{h}{\tilde s}{\rho}}$.
But $\hat s(\argm_i) = \eval{\e_i}{s,h}$.
We have two cases:
\begin{enumerate}
  \item If $\e_i= x$ for
some variable $x$, then $\eval{\e_i}{s,h} = s(x)$. By the stability assumption,
we have that $x\neq z$ and by the disconnectedness of $\pi$ in $s,h$
we have that $s(x)\notin\locn{\pathimg{h}{s}{\pi}}$. This completes the proof
through the fact that $\pathimg{h}{s}{\pi} = \pathimg{h}{\tilde s}{\rho}$.
  \item if $\e_i=\pi'$ then $\eval{\e_i}{s,h}=h(\lval{\pi'}{s,h})$.
By the stability assumption, we have that $\pi'\not\properprefix\pi$. If $\pi'\neq\pi$
then by Lemma~\ref{lem:indomain-prefix}, we have that
$\lval{\pi'}{s,h}\notin\dom{\pathimg{h}{s}{\pi}}$. This directly means
(through the fact that $\pathimg{h}{s}{\pi} = \pathimg{h}{\tilde s}{\rho}$) that
$\hat s(\argm_i)\notin\locn{\pathimg{h}{\tilde s}{\rho}}$, completing the proof.
If $\pi'=\pi$, then $\eval{\e_i}{s,h}=h(\lval{\pi}{s,h})$.
By acyclicity of $\pi$ in $s,h$ we have that
$\hat s(\argm_i)=h(\lval{\pi}{s,h})\notin\locn{\pathimg{h}{s}{\pi}}=\locn{\pathimg{h}{\tilde s}{\rho}}$.
\end{enumerate}

\item Second clause of disconnectedness: We need to show that for all $\alpha\in\dom{h}$, if
$h(\alpha)\in\locn{\pathimg{h}{\tilde s}{\rho}}$, then
$\alpha\in\dom{\pathimg{h}{\tilde s}{\rho}}$, or, that if
$h(\alpha)\in\locn{\pathimg{h}{s}{\pi}}$, then
$\alpha\in\dom{\pathimg{h}{s}{\pi}}$. This is immediate by the disconnectedness
of $\pi$ in $s,h$.
\end{enumerate}

Acyclicity of $\rho$ in $\tilde s, h$ follows from Lemma~\ref{lem:preserve-acyclicity},
noting that $\pathimg{h}{s}{\pi} = \pathimg{h}{\tilde s}{\rho}$ as shown above.

We now can apply Lemma~\ref{lem:stability-means-preservation-nocalls} to
obtain that $\rho$ is preserved from $\tilde s,h$ to $\tilde s',h'$ (note that since there are no
accesses to $\argm_{n+1}$ in $\body{\m}$, and no procedure calls, trivially
$\forall\e\in\alpha(\set{\tr''})_W.\, \e\not\properprefix\rho$).

Lemma~\ref{lem:preserve-image-stack-update} yields
$\lval{\rho}{\tilde s',h'} = \lval{\pi}{s,h'}$,
and $\pathimg{h'}{\tilde s'}{\rho} = \pathimg{h'}{s}{\pi}$,
therefore also $\dom{\pathimg{h'}{\tilde s'}{\rho}} = \dom{\pathimg{h'}{s}{\pi}}$.
Lemma~\ref{lem:preserve-acyclicity} yields the acyclicity of $\pi$ in $s,h'$.

Finally, we also need to finally show that $\pi$ is disconnected in $s, h'$.
This follows from the fact that $\dom{\pathimg{h'}{s}{\pi}}=\dom{\pathimg{h'}{\tilde s'}{\rho}}$
and the disconnectedness of $\rho$ in $\tilde s',h'$.

\noindent \textbf{Case: for exactly one $j\le n$, $\e_j\properprefix\pi$, and $\e_j= z$}.
This case proceeds almost identically, but for the following differences:
we don't use $\argm_{n+1}$, nor ``jump'' to another trace $\tr''$ as above, but operate directly on
$\tr'$; the definitions $\tilde s \eqdef \hat s$ and $\tilde s' \eqdef \hat s'$ are trivial;
and $\rho \eqdef \argm_j.\fs$.

Application of Lemma~\ref{lem:stability-means-preservation-nocalls}: Note that $\alpha(\set{\cdot})$
is monotonic: $\tr_{\text{push}}\append\tr'\prefix\tr$, thus $\alpha(\set{\tr_{\text{push}}\append\tr'})_W \subseteq \alpha(\set{\tr})_W$.
Thus, from the top-level assumption about the $W$ component of the abstraction of $\tr$ follows that, also,
$\forall\e\in  \alpha(\set{\tr_{\text{push}}\append\tr'})_W.\,\e\not\properprefix\pi$.
By the definition of $\exec$ for $\cmdpush{}$, this means
$\forall\e\in \subst{\alpha(\set{\tr'})_W}{[\argm_i\mapsto\e_i]_{i=1}^n}.\,\e\not\properprefix\pi$,
thus
$\forall\e\in \alpha(\set{\tr'})_W.\,\e\not\properprefix\rho$, by the properties of substitutions, the
definition of $\exec$ for load/stores, the fact that there are no procedure calls in $\m$, and
the fact that the root of $\rho$ is $\argm_j$ and thus not a local.

\newcommand{\hs}{\bar{\mathsf{h}}}
\noindent \textbf{Case: for exactly one $j\le n$, $\e_j\properprefix\pi$, and $\e_j=z.\gs$ where $\gs\properprefix\fs$ and $\gs\neq\epsilon$}.
Again, we don't use $\argm_{n+1}$; the definitions of $\tilde s \eqdef \hat s$
and $\tilde s' \eqdef \hat s'$ are again trivial.
Let $\hs$ be such that $\pi=z.\gs.\hs$; it is easy to see that $\hs\neq\epsilon$.
We set $\rho\defeq\argm_j.\hs$.

We define a heap $\tilde h$ as the subheap of $h$, reachable from any $\e_i$: that is, $\tilde h$ is
the smallest subheap of $h$ whose domain includes any address of the form $(\eval{\e_i}{s,h},\f)$
and which is closed under the image of $h$.  This is well-defined to the well-behavedness of $s,h$.
We fix as $h_0 \eqdef h\setminus\tilde h$, i.e., the remainder heap after removing $\tilde h$.
It should be easy to see that $\lval{\rho}{\tilde s, \tilde h}$ is defined.
By Lemma~\ref{lem:frame}, there exists a trace $\tr''\in\sem{\body{\m}}\angled{\tilde s\cons s, \tilde h, l}$
such that $\llast{\tr''}=\angled{\tilde s'\cons s, \tilde h', L'}$ and $h'=h_0\cup\tilde h'$.

We now show that $\rho=\argm_j.\hs$ is disconnected and acyclic in $\tilde s, \tilde h$.

It is easily seen by the definition of $\tilde h$ that
$\pathimg{h}{s}{\pi} \supset \pathimg{h}{\tilde s}{\rho}=\pathimg{\tilde h}{\tilde s}{\rho}$
and $\lval{\pi}{s,h}=\lval{\rho}{\tilde s, h}=\lval{\rho}{\tilde s, \tilde h}$,
therefore also $\dom{\pathimg{h}{s}{\pi}} \supset \dom{\pathimg{h}{\tilde s}{\rho}}= \dom{\pathimg{\tilde h}{\tilde s}{\rho}}$.

\begin{enumerate}
\item First clause of disconnectedness: We need to show that for all variables
$x\neq \argm_j$, $\tilde s(x)\notin\locn{\pathimg{\tilde h}{\tilde s}{\rho}}$,
or, that for all $i\le n$ and $i\neq j$,
$\tilde s(\argm_i) = \eval{\e_i}{s,h} \notin\locn{\pathimg{\tilde h}{\hat s}{\rho}}=\locn{\pathimg{h}{\hat s}{\rho}}\subset\locn{\pathimg{h}{s}{\pi}}$.

By the assumption of this case, for all $i\neq j$, $i\le n$ is it the case that $\e_i\not\properprefix\pi$.
If $\e_i$ is a variable, then by the disconnectedness of $\pi$ we have $\eval{\e_i}{s,h}\notin\locn{\pathimg{h}{s}{\pi}}$.
If $\e_i$ is a path, by Lemma~\ref{lem:indomain-prefix}, it follows that $\lval{\e_i}{s,h}\notin\dom{\pathimg{h}{s}{\pi}}$.
Thus, by the contrapositive of the 2nd clause of disconnectedness of $\pi$, $\eval{\e_i}{s,h}=h(\lval{\e_i}{s,h})\notin\locn{\pathimg{h}{s}{\pi}}$.
This suffices because $\dom{\pathimg{h}{s}{\pi}} \supset \dom{\pathimg{\tilde h}{\tilde s}{\rho}}$.

\item Second clause of disconnectedness: We need to show that for all $\alpha\in\dom{\tilde h}$, if
$\tilde h(\alpha)\in\locn{\pathimg{\tilde h}{\tilde s}{\rho}}$, then
$\alpha\in\dom{\pathimg{\tilde h}{\tilde s}{\rho}}$. This is true by the construction of $\tilde h$
and the disconnectedness of $\pi$ in $s,h$.
\end{enumerate}

Acyclicity of $\rho$ in $\tilde s, \tilde h$ follows the acyclicity of $\pi$ in $s,h$ and the fact
that $\tilde h \subseteq h$.

Application of Lemma~\ref{lem:stability-means-preservation-nocalls}: Note that $\alpha(\set{\cdot})$
is monotonic: $\tr_{\text{push}}\append\tr'\prefix\tr$, thus $\alpha(\set{\tr_{\text{push}}\append\tr'})_W \subseteq \alpha(\set{\tr})_W$.
Thus, from the top-level assumption about the $W$ component of the abstraction of $\tr$ follows that, also,
$\forall\e\in  \alpha(\set{\tr_{\text{push}}\append\tr'})_W.\,\e\not\properprefix\pi$.
By the definition of $\exec$ for $\cmdpush{}$, this means
$\forall\e\in \subst{\alpha(\set{\tr'})_W}{[\argm_i\mapsto\e_i]_{i=1}^n}.\,\e\not\properprefix\pi$,
but
$\alpha(\set{\tr'})_W=\alpha(\set{\tr''})_W$ (because $\ab{\tr'}=\ab{\tr''}$),
thus
$\forall\e\in \subst{\alpha(\set{\tr''})_W}{[\argm_i\mapsto\e_i]_{i=1}^n}.\,\e\not\properprefix\pi$.
This means
$\forall\e\in \alpha(\set{\tr''})_W.\,\e\not\properprefix\rho$, by the properties of substitutions, the
definition of $\exec$ for load/stores, the fact that there are no procedure calls in $\m$, and
the fact that the root of $\rho$ is $\argm_j$ and thus not a local.

We now can apply Lemma~\ref{lem:stability-means-preservation-nocalls} to
obtain that $\rho$ is preserved from $\tilde s,\tilde h$ to $\tilde s',\tilde h'$.
Also, due to the stability assumption, it is clear that $\tilde{s}(\argm_j)=\tilde{s'}(\argm_j)$,
thus $\lval{\rho}{\tilde s',\tilde h'} = \lval{\rho}{\tilde s, \tilde h'}$.
Because $h'=h_0\cup \tilde h'$, it is easy to check that
$\lval{\rho}{\tilde s,\tilde h'} = \lval{\rho}{\tilde s, h'} = \lval{\pi}{s,h'}$.
By the same token, it can be shown that $\dom{\pathimg{h'}{s}{\pi}}=\dom{\pathimg{h}{s}{\pi}}$
and that $\pi$ is acyclic and disconnected in $s,h'$.
\end{proof}

\begin{alemma}{\ref{lem:call-stack}}
  Let $\tr$ be a (prefix) trace produced by Lemma~\ref{lem:path-access-existence}.
  If additionally it is the case that $\forall\e\in\alpha(\set{\tr})_W.\,\e\not\properprefix\pi$
  then $\lval{\pi}{s,h}=\lval{\pi'}{s',h'}$.
\end{alemma}
\begin{proof}
  We prove by backwards induction on the prefixes $\tr'\prefix\tr$ we show the following:
  Let $\tr'=\tr_1\append\tr_{\text{push}}\append\tr_0$ such that $\tr_{\text{push}}$ is the last
  $\cmdpush{}$ command in $\tr'$ that is unmatched by a $\cmdpop{}$.  Let
  $\tr_{\text{push}} = [\angled{\cmdpush{\e_1,\ldots,\e_n}, s_0\cons s_1\cons S, h_0}]$ and
  $\llast{\tr'}=\angled{\_,s'\cons s_1\cons S, h'}$.
  Let $\pi'$ be some path such that $\pi'[\bar{\theta}]\prefix\pi$, where $\bar{\theta}=\theta_0\cons\_$ is the
  sequence of substitutions induced by the active calls at the end of $\tr'$.
  Then, $\lval{\pi'}{s',h'}=\lval{\e_j.\fs}{s_0,h_0}$, where $\fs$ are such that $\pi'[\theta_0]=\e_j$.

  By the definition of $\exec$, the assumption $\forall\e\in\alpha(\set{\tr})_W.\,\e\not\properprefix\pi$
  and the fact that $\pi'[\bar{\theta}]\prefix\pi$, we can conclude that
  $\forall\e\in\alpha(\set{\tr_0})_W.\,\e\not\properprefix\pi'$.
  Thus, the trace $\tr_0$ satisfies the conditions of Lemma~\ref{lem:stability-on-traces}, thus
  $\pi'$ is preserved from $s_0,h_0$ to $s_0,h'$.  It is easy to see that
  $s_0(\rootvar{\pi'})=s'(\rootvar{\pi'})$, thus by Lemma~\ref{lem:preserve-image-distinct-heap-update}
  $\pi'$ is preserved from $s_0,h_0$ to $s',h'$. By the observation that $s_0(\argm_j)=\eval{\e_j}{s,h_0}$,
  and setting $\fs$ to be such that $\pi'[\theta_0]=\e_j$, it can be seen that
  $\lval{\pi'}{s',h'}=\lval{\e_j.\fs}{s_0,h_0}$.

  At the last step, $\pi'=\pi$ and we are done.
\end{proof}

\begin{lemma}
  \label{lem:starting-state} Given paths $\pi_1=x.\fs$ and $\pi_2=y.\fs$, there exists a well-behaved
  state $(s_1,s_2,h)$ such that $\pathimg{h}{s_1}{\pi_1}=\pathimg{h}{s_2}{\pi_2}$,
  $\lval{\pi_1}{s_1,h}=\lval{\pi_2}{s_2,h}$ and $\pi_i$ is acyclic and
  disconnected in $s_i,h$, for $i\in\{1,2\}$.
\end{lemma}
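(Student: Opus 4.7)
The plan is to give an explicit construction, exploiting the freedom we have over the initial state and the finiteness of $\FieldName$ (Assumption~\ref{asm:class}). Write $\fs = f_1.f_2.\cdots.f_n$ with $n=|\fs|\ge 1$. Pick $n+2$ pairwise distinct locations $\ell_0,\ell_1,\ldots,\ell_n,\ell_\star$ from $\loc$. Define a heap $h$ that is totally specified on $\{\ell_0,\ldots,\ell_n,\ell_\star\}\times\FieldName$ as follows: for $0\le i<n$ set $h(\ell_i,f_{i+1})=\ell_{i+1}$ and $h(\ell_i,f)=\ell_\star$ for every other field $f$; set $h(\ell_n,f)=\ell_\star$ for every $f$; and set $h(\ell_\star,f)=\ell_\star$ for every $f$. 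Define $s_1(x)=\ell_0$ and $s_1(z)=\ell_\star$ for every other variable $z$, and symmetrically $s_2(y)=\ell_0$ with $s_2(z)=\ell_\star$ for every other $z$.

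Next, verify the four required properties by routine unfolding of the definitions. Well-behavedness is immediate, since every value recorded in $h$ is one of $\ell_0,\ldots,\ell_n,\ell_\star\in\locn{h}$ and both stacks take values in the same set. By induction on $0\le k\le n$ one shows that $\lval{x.f_1.\cdots.f_k}{s_1,h}=(\ell_{k-1},f_k)$ (for $k\ge 1$) using the chosen edges of $h$; in particular $\lval{\pi_1}{s_1,h}=(\ell_{n-1},f_n)$, and an identical computation with $y$ in place of $x$ gives $\lval{\pi_2}{s_2,h}=(\ell_{n-1},f_n)$. The induction also shows $\pathimg{h}{s_1}{\pi_1}=\pathimg{h}{s_2}{\pi_2}=\{(\ell_{k-1},f_k)\mapsto\ell_k \mid 1\le k\le n\}$.

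Acyclicity follows from the pairwise distinctness of $\ell_0,\ldots,\ell_n$: for any $\pi''\prefix\pi'\prefix\pi_i$ of lengths $j\le k$, $h(\lval{\pi'}{s_i,h})=\ell_k$ while $\locn{\lval{\pi''}{s_i,h}}=\ell_{j-1}$, and $j-1<k$ forces $\ell_{j-1}\ne\ell_k$. For clause~(1) of disconnectedness, every variable distinct from the root maps to $\ell_\star\notin\{\ell_0,\ldots,\ell_{n-1}\}=\locn{\pathimg{h}{s_i}{\pi_i}}$. For clause~(2), the only cells of $h$ whose image falls in $\{\ell_1,\ldots,\ell_{n-1}\}\subseteq\locn{\pathimg{h}{s_i}{\pi_i}}$ are exactly the spine cells $(\ell_{k-1},f_k)$, which already lie in $\dom{\pathimg{h}{s_i}{\pi_i}}$; no address points to $\ell_0$ at all.

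The only subtle point is clause~(2) of disconnectedness: we must make sure that no ``junk'' cell of $h$ accidentally points back into the spine. This is precisely why every non-spine field is routed through the single extra, pairwise-distinct location $\ell_\star$ that is itself not in $\{\ell_0,\ldots,\ell_{n-1}\}$. Everything else is bookkeeping on the chain of lvalues.
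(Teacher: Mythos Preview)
Your proof is correct and follows essentially the same approach as the paper: both construct an explicit ``spine'' of distinct locations $\ell_0,\ldots$ realising the path $\fs$, route all other variables and off-spine fields to a single junk location outside the spine, and then verify well-behavedness, acyclicity, and disconnectedness by direct inspection. The only cosmetic differences are that the paper reuses its $\ell_0$ as the junk location (starting the spine at $\ell_1$) whereas you introduce a separate $\ell_\star$, and that you spell out the verification of the disconnectedness clauses in slightly more detail than the paper's ``easy to check''.
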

\begin{proof}
  Let $\fs = \f_1.\cdots.\f_n$ for some $n\ge 1$. Let $\ell_0,\ell_1,\ldots,\ell_n$ be distinct locations,
  and set $s_1(x)=s_2(y)=\ell_1$.  Also, set $s_1(x')=s_2(y')=\ell_0$ for all $x'\in\Var\setminus\{x\}$
  and $y'\in\Var\setminus\{y\}$. This defines $s_1,s_2$.

  Next, set $h(\ell_0,\f)=\ell_0$ for all $\f\in\FieldName$.
  For each $1\le j < n$, set $h(\ell_j, f_j)  = \ell_{j+1}$ and
  $h(\ell_j, \f) = \ell_0$ for all $\f\in\FieldName\setminus\{\f_j\}$. Set $h(\ell_n,\f) = \ell_0$
  for all $\f\in\FieldName$. This determines $h$.

  It is easy to check that all required properties are true of $(s_1,s_2,h)$.
\end{proof}

\begin{lemma}
  \label{lem:heap-monotonicity}
  For any trace $\tr\in\sem{C} \astate$ and any two positions $i\le j \le |\tr|$, such that
  $\tr_i=\angled{\_,\_,h,\_}$ and $\tr_j=\angled{\_,\_,h',\_}$, it is the case that
  $\dom{h}\subseteq\dom{h'}$.
\end{lemma}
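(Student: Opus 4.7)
The plan is to reduce the statement to a single-step heap-domain preservation property and then iterate. First I would establish the following one-step claim: if $\tr\in\sem{C}\astate$ and $k+1\le|\tr|$ with $\tr_k=\angled{c_k,S_k,h_k,L_k}$ and $\tr_{k+1}=\angled{c_{k+1},S_{k+1},h_{k+1},L_{k+1}}$, then $\dom{h_k}\subseteq\dom{h_{k+1}}$. This is where the real work happens, and I would prove it by case analysis on the simple command $c_{k+1}$, appealing to well-formedness (traces produced by $\sem{\cdot}$ are well-formed, as noted just after Definition~\ref{def:wft}), which tells us that $h_{k+1}$ is obtained by executing $c_{k+1}$ on $h_k$ according to Figure~\ref{fig:traces}.

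Inspecting each clause of Figure~\ref{fig:traces}: the commands $\cmdskip$, $x:=y$, $x:=\pi$, $\cmdlock$, $\cmdunlock$, $\cmdpop$, and $\cmdpush{\e_1,\ldots,\e_n}$ leave the heap unchanged, so the inclusion is trivial. The store $\pi:=x$ produces $h_{k+1}=h_k[\lval{\pi}{s_k,h_k}\mapsto s_k(x)]$, which clearly preserves $\dom{h_k}$ (the update falls in the existing domain, since the empty-set case is discarded by the clause). The interesting case is $x:=\cmdalloc$, whose semantics yields $h_{k+1}=h_k\cup\bigcup_{\f\in\FieldName}\{(\ell,\f)\mapsto \ell\}$ for some fresh $\ell\notin\locn{h_k}$; since this is a disjoint union with $h_k$, we have $\dom{h_k}\subsetneq\dom{h_{k+1}}$.

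Once the one-step claim is in hand, I would conclude by straightforward induction on $j-i$: the base case $i=j$ is immediate from reflexivity of $\subseteq$, and the inductive step chains the one-step inclusion with the induction hypothesis by transitivity. No case requires non-trivial reasoning about stacks, locks, or the command sequence $c_k$ at the first component, so I expect the whole argument to be a short case-analysis proof.

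The only mild obstacle is bookkeeping around the compound-statement clauses (sequencing, conditionals, loops, method calls), which build traces by concatenation and fixed points. However, since each element of the concatenated trace is itself produced by one of the simple-command clauses (the compound clauses do not themselves introduce new states beyond those inherited from the sub-traces), the single-step inclusion established above transfers unchanged. Thus no separate reasoning is needed for the fixed-point construction in $\sem{C;\cmdwhile{*}{C'}}$ or for the $\interl{\cdot}{\cdot}{\cdot}$ operator when lifting later, because each step still comes from one of the atomic semantic clauses analysed above.
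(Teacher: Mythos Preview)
Your proposal is correct and follows essentially the same approach as the paper, which simply invokes the concrete semantics together with the absence of deallocation. You have just spelled out in detail the case analysis that the paper leaves implicit in a one-line remark.
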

\begin{proof}
  By the concrete semantics and the fact that there is no deallocation.
\end{proof}

\begin{lemma}
  \label{lem:other-thread-preserved}
  Let $\pi_1=x.\fs$ and $\pi_2=y.\fs$ (ie., paths with the same field sequence).
  Let $s_1,s_2$ be two stacks such that $s_1(x)=s_2(y)$.
  Let $h,h'$ be two heaps such that $\dom{h}\subseteq\dom{h'}$ and, in addition,
  $\forall\alpha\in\dom{\pathimg{h}{s_1}{\pi_1}}\setminus\{\lval{\pi_1}{s_1,h}\}.\, h(\alpha)=h'(\alpha)$.
  Assume $\pi_1$ is acyclic in $s_1,h$.
  Then, $\lval{\pi_2}{s_2,h} = \lval{\pi_2}{s_2,h'}$.
\end{lemma}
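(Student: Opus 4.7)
The plan is to reduce the claim to one about $\pi_1$ in the stack $s_1$ (where our hypotheses live) via Lemma~\ref{lem:preserve-image-stack-update}, and then exploit acyclicity of $\pi_1$ to show that every intermediate address traversed in resolving $\pi_2$ lies \emph{outside} the single address on which $h$ and $h'$ may disagree, namely $\lval{\pi_1}{s_1,h}$.

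First I would invoke Lemma~\ref{lem:preserve-image-stack-update} with the trivial inclusion $h\subseteq h$, using $s_1(x)=s_2(y)$ and the shared field sequence $\fs$. This yields, for every $\gs\prefix\fs$, the equality $\lval{x.\gs}{s_1,h}=\lval{y.\gs}{s_2,h}$, and in particular $\dom{\pathimg{h}{s_1}{\pi_1}}=\dom{\pathimg{h}{s_2}{\pi_2}}$. So reasoning about the intermediate addresses encountered when resolving $\pi_2$ in $(s_2,h)$ can be carried out entirely in terms of $\pi_1$ in $(s_1,h)$.

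Next I would establish the essential acyclicity consequence: for every proper prefix $\pi_1'\properprefix\pi_1$, $\lval{\pi_1'}{s_1,h}\neq\lval{\pi_1}{s_1,h}$. Indeed, if these addresses coincided then their first components would too, giving $\locn{\lval{\pi_1'}{s_1,h}}=\locn{\lval{\pi_1}{s_1,h}}=h(\lval{\trunc{\pi_1}}{s_1,h})$, which contradicts acyclicity instantiated on the prefix chain $\pi_1'\prefix\trunc{\pi_1}\prefix\pi_1$. Combined with the previous step, this shows that every intermediate address $\lval{y.\gs}{s_2,h}$ (for $\gs\properprefix\fs$) equals some $\lval{\pi_1'}{s_1,h}$ with $\pi_1'\properprefix\pi_1$, and hence lies in $\dom{\pathimg{h}{s_1}{\pi_1}}\setminus\{\lval{\pi_1}{s_1,h}\}$ --- precisely the set on which our hypothesis forces $h$ and $h'$ to agree.

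Finally I would close by induction on $|\gs|$ for $\gs\prefix\fs$, proving $\lval{y.\gs}{s_2,h}=\lval{y.\gs}{s_2,h'}$; taking $\gs=\fs$ gives the conclusion. The base case $|\gs|=1$ depends only on $s_2$. In the inductive step $\gs=\gs'.\f$, the induction hypothesis yields $\lval{y.\gs'}{s_2,h}=\lval{y.\gs'}{s_2,h'}=:\alpha$; by the preceding paragraph $\alpha\in\dom{\pathimg{h}{s_1}{\pi_1}}\setminus\{\lval{\pi_1}{s_1,h}\}$, so $h(\alpha)=h'(\alpha)$, and applying the $\f$-field gives the same successor address under both heaps. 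The only real obstacle is bookkeeping: matching prefixes of $\pi_2$ to prefixes of $\pi_1$ cleanly via Lemma~\ref{lem:preserve-image-stack-update}, and being careful that acyclicity excludes exactly the terminal address $\lval{\pi_1}{s_1,h}$ --- no more, no less.
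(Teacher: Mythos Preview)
Your proof is correct and follows essentially the same line as the paper's: both begin by invoking Lemma~\ref{lem:preserve-image-stack-update} (with the trivial inclusion $h\subseteq h$) to identify the footprints and addresses of $\pi_1$ and $\pi_2$, and both then use acyclicity to show that the single excluded address $\lval{\pi_1}{s_1,h}$ is never hit while resolving a proper prefix. The only difference is packaging: the paper transfers the hypotheses to $\pi_2,s_2$ (including acyclicity) and then invokes Lemma~\ref{lem:preserve-image-same-heap-update} as a black box, whereas you inline that lemma's content via a direct induction on the prefix length. Your version is self-contained; the paper's is slightly more modular by reusing an existing lemma---but the underlying argument is identical.
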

\begin{proof}
  We apply Lemma~\ref{lem:preserve-image-stack-update} letting $s=s_1$, $s'=s_2$ and $h=h'$,
  obtaining $\pathimg{h}{s_1}{\pi_1}=\pathimg{h}{s_2}{\pi_2}$.

  As $\pi_1$ is acyclic in $s_1,h$ and $s_1(x)=s_2(y)$, we can easily
  conclude that $\pi_2$ is acyclic in $s_2,h$.

  From
  $\pathimg{h}{s_1}{\pi_1}=\pathimg{h}{s_2}{\pi_2}$
  we can conclude
  $\dom{\pathimg{h}{s_2}{\pi_2}} \subseteq \dom{h'}$,
  since $\pathimg{h}{s_1}{\pi_1}\subseteq\dom{h}\subseteq\dom{h'}$.

  From the facts
  $\forall\alpha\in\dom{\pathimg{h}{s_1}{\pi_1}}\setminus\{\lval{\pi_1}{s_1,h}\}.\, h(\alpha)=h'(\alpha)$
  and
  $\pathimg{h}{s_1}{\pi_1}=\pathimg{h}{s_2}{\pi_2}$
  and
  $\lval{\pi_1}{s_1,h}=\lval{\pi_2}{s_2,h}$
  we obtain by simple substitution
  $\forall\alpha\in\dom{\pathimg{h}{s_2}{\pi_2}}\setminus\{\lval{\pi_2}{s_2,h}\}.\, h(\alpha)=h'(\alpha)$.

  We apply Lemma~\ref{lem:preserve-image-same-heap-update} letting $\pi=\pi_2$, $s_2=s=s'$
  and $h,h'$ as here to obtain $\lval{\pi_2}{s_2,h} = \lval{\pi_2}{s_2,h'}$.
\end{proof}

\begin{lemma}
  \label{lem:parallel-paths}
  Let $\pi_1=x.\fs$, $\pi_2=y.\fs$ and $s_1(x)=s_2(y)$ such that $\lval{\pi_1}{s_1,h}$ is defined.
  Then for any $\gs\prefix\fs$, $\lval{x.\gs}{s_1,h}=\lval{y.\gs}{s_2,h}$, and
  $\pathimg{h}{s_1}{x.\gs}=\pathimg{h}{s_2}{y.\gs}$.
\end{lemma}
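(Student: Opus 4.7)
The plan is to observe that Lemma~\ref{lem:parallel-paths} is the special case of Lemma~\ref{lem:preserve-image-stack-update} obtained by taking the two heaps to coincide. Concretely, I would instantiate the earlier lemma with $s := s_1$, $s' := s_2$, $h := h$, and $h' := h$. The hypothesis $s_1(x) = s_2(y)$ matches $s(x) = s'(y)$, and $h \subseteq h'$ holds trivially by reflexivity of subset inclusion on heaps. The conclusion of Lemma~\ref{lem:preserve-image-stack-update}, for every $\gs \prefix \fs$, then gives exactly the two equalities claimed: $\lval{x.\gs}{s_1,h} = \lval{y.\gs}{s_2,h}$ and $\pathimg{h}{s_1}{x.\gs} = \pathimg{h}{s_2}{y.\gs}$.

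The only point to check is well-definedness at each prefix, which is needed before one can apply the earlier lemma's conclusion meaningfully. Since by hypothesis $\lval{\pi_1}{s_1,h}$ is defined, a routine induction on the length of $\gs$ shows that $\lval{x.\gs}{s_1,h}$ is defined for every $\gs \prefix \fs$ (each intermediate lookup must land in $\dom{h}$, otherwise the final lookup would be undefined). By the stack-matching hypothesis and the resulting identification of addresses, the corresponding lookups on the $y$-side through $s_2$ are also defined.

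If one preferred a self-contained proof rather than a one-line appeal, the direct induction on $\gs$ mirrors the proof of Lemma~\ref{lem:preserve-image-stack-update}: the base case $\gs = \g$ is immediate from $s_1(x) = s_2(y)$, giving $\lval{x.\g}{s_1,h} = (s_1(x), \g) = (s_2(y), \g) = \lval{y.\g}{s_2,h}$ and footprints that are singletons agreeing on both sides. The inductive step $\gs = \gs'.\g$ uses the inductive hypothesis $\lval{x.\gs'}{s_1,h} = \lval{y.\gs'}{s_2,h}$ to compute $\lval{x.\gs}{s_1,h} = (h(\lval{x.\gs'}{s_1,h}), \g) = (h(\lval{y.\gs'}{s_2,h}), \g) = \lval{y.\gs}{s_2,h}$, and the footprint equality follows by combining the inductive equality $\pathimg{h}{s_1}{x.\gs'} = \pathimg{h}{s_2}{y.\gs'}$ with the new entry, which coincides on both sides.

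There is no real obstacle here; the lemma is essentially a restatement of Lemma~\ref{lem:preserve-image-stack-update} when the ``before'' and ``after'' heaps are identified, and it is included only because later proofs (in particular the construction of the initial well-behaved state for the True Positives Theorem, and Lemma~\ref{lem:other-thread-preserved}) need the two-stack, single-heap framing in a form that can be cited directly.
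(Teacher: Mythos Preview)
Your proposal is correct and essentially matches the paper's approach: the paper's proof is the one-liner ``By induction on the length of $\gs$, the fact that $s_1(x)=s_2(y)$ and that $h$ is a function,'' which is precisely your self-contained alternative. Your primary suggestion to invoke Lemma~\ref{lem:preserve-image-stack-update} with $h=h'$ is a valid shortcut the paper does not take explicitly, but since that lemma's proof is itself the same induction, the two routes are equivalent in content.
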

\begin{proof}
  By induction on the length of $\gs$, the fact that $s_1(x)=s_2(y)$ and that $h$ is a function.
\end{proof}

\begin{lemma}
  \label{lem:disconnected-across-threads}
  Let $\pi_1=x.\fs$ and $\pi_2=y.\fs$ such that $s_1(x)=s_2(y)$. Let also $h,h'$ be such that
  $\dom{\pathimg{h}{s_2}{\pi_2}}=\dom{\pathimg{h'}{s_2}{\pi_2}}$.
  If $\pi_1$ is disconnected in $s_1,h'$ then so is $\pi_2$ in $s_2,h'$.
\end{lemma}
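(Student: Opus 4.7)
The plan is to use Lemma~\ref{lem:parallel-paths} to translate the footprint of $\pi_1$ under $s_1$ into the footprint of $\pi_2$ under $s_2$ in the common heap $h'$, and then transfer the two clauses of disconnectedness across this equality. First, since $\pi_1$ is disconnected in $s_1,h'$, the address $\lval{\pi_1}{s_1,h'}$ is defined; invoking Lemma~\ref{lem:parallel-paths} with $h := h'$ and the shared field sequence $\fs$ (using $s_1(x) = s_2(y)$) yields
\[
\pathimg{h'}{s_1}{\pi_1} = \pathimg{h'}{s_2}{\pi_2},
\qquad
\lval{\pi_1}{s_1,h'} = \lval{\pi_2}{s_2,h'},
\]
so in particular $\lval{\pi_2}{s_2,h'}$ is defined, giving the first requirement of disconnectedness for $\pi_2$.

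For the second clause (no external address points into the footprint), I would pick any $\alpha \in \dom{h'}$ with $h'(\alpha) \in \locn{\pathimg{h'}{s_2}{\pi_2}}$, rewrite via the equality above to $h'(\alpha) \in \locn{\pathimg{h'}{s_1}{\pi_1}}$, apply clause~(2) of the disconnectedness of $\pi_1$ in $s_1,h'$ to obtain $\alpha \in \dom{\pathimg{h'}{s_1}{\pi_1}}$, and conclude $\alpha \in \dom{\pathimg{h'}{s_2}{\pi_2}}$ by the footprint equality. This part is a direct ``substitution of equal footprints'' argument.

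The main obstacle is the first clause: we must show that for every variable $z \neq y$, $s_2(z) \notin \locn{\pathimg{h'}{s_2}{\pi_2}}$, whereas the disconnectedness hypothesis only bounds values of $s_1$ into $\locn{\pathimg{h'}{s_1}{\pi_1}}$. This is where the otherwise unused hypothesis $\dom{\pathimg{h}{s_2}{\pi_2}} = \dom{\pathimg{h'}{s_2}{\pi_2}}$ enters: it lets me replace reasoning in $h'$ by reasoning in $h$ for $s_2$, which ties back to the disconnectedness of $\pi_2$ in $s_2,h$ established earlier in the racy-trace reconstruction (via Lemma~\ref{lem:starting-state}, where $s_1$ and $s_2$ are constructed to agree on all variables except by mapping $x$ and $y$ to the common ``spine'' location, while every other variable points to a distinguished ``junk'' cell $\ell_0 \notin \locn{\pathimg{h}{s_1}{\pi_1}}$). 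Under this context, for $z \neq y$ the value $s_2(z)$ coincides with a value $s_1(z')$ for $z' \neq x$, so clause~(1) of $\pi_1$'s disconnectedness transports to $s_2$.

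Thus the proof would be structured as: (i) derive the footprint equality via Lemma~\ref{lem:parallel-paths}; (ii) directly transport clause~(2); (iii) combine clause~(1) of $\pi_1$ with the stack correspondence fixed in the enclosing construction (and the invariance of the $\pi_2$-footprint's domain between $h$ and $h'$) to discharge clause~(1) for $\pi_2$. The delicate step is (iii), since it is the only one that genuinely consumes information beyond a footprint rewrite.
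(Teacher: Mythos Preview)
Your approach is essentially the paper's. For clause~(2) you do exactly what the paper does: apply Lemma~\ref{lem:parallel-paths} in the heap $h'$ to get $\pathimg{h'}{s_1}{\pi_1}=\pathimg{h'}{s_2}{\pi_2}$ and transport the second disconnectedness clause verbatim.

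For clause~(1) you correctly spot that the lemma, as literally stated, lacks a hypothesis: nothing in the statement bounds $s_2(v)$ for $v\neq y$. The paper's own proof silently assumes that $\pi_2$ is disconnected in $s_2,h$ (which does hold at the only call site, via Lemma~\ref{lem:starting-state}), and then uses the domain hypothesis $\dom{\pathimg{h}{s_2}{\pi_2}}=\dom{\pathimg{h'}{s_2}{\pi_2}}$ to conclude $\locn{\pathimg{h}{s_2}{\pi_2}}=\locn{\pathimg{h'}{s_2}{\pi_2}}$, whence $s_2(v)\notin\locn{\pathimg{h'}{s_2}{\pi_2}}$. You identify this same route. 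Your final sentence, however, drifts: the transport needed is of clause~(1) of \emph{$\pi_2$'s} disconnectedness from $h$ to $h'$, not of $\pi_1$'s disconnectedness via a stack correspondence between $s_1$ and $s_2$. The latter would also work given the very specific stacks built in Lemma~\ref{lem:starting-state}, but it is an unnecessary detour once you have disconnectedness of $\pi_2$ in $s_2,h$, and the paper does not take it.
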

\begin{proof}
  We need to show that (a) for any $v\neq y$, $s_2(v)\notin\locn{\pathimg{h'}{s_2}{\pi_2}}$, and
  (b) that for all $\alpha\in\dom{h'}$, if $h'(\alpha)\in\locn{\pathimg{h'}{s_2}{\pi_2}}$ then
  $\alpha\in\dom{\pathimg{h'}{s_2}{\pi_2}}$.

  For (a), note that $\pi_2$ is disconnected in $s_2,h$ so for any
  $v\neq y$, $s_2(v)\notin\locn{\pathimg{h}{s_2}{\pi_2}}$.
  But $\locn{\pathimg{h}{s_2}{\pi_2}}=\locn{\pathimg{h'}{s_2}{\pi_2}}$,
  because $\dom{\pathimg{h}{s_2}{\pi_2}}=\dom{\pathimg{h'}{s_2}{\pi_2}}$, completing the proof.

  For (b) observe that by Lemma~\ref{lem:parallel-paths},
  $\pathimg{h'}{s_1}{\pi_1}=\pathimg{h'}{s_2}{\pi_2}$, thus we only need to show that
  for all $\alpha\in\dom{h'}$, if $h'(\alpha)\in\locn{\pathimg{h'}{s_1}{\pi_1}}$ then
  $\alpha\in\dom{\pathimg{h'}{s_1}{\pi_1}}$.  This is immediate by the disconnectedness of $\pi_1$
  in $s_1,h'$.
\end{proof}

\begin{lemma}
  \label{lem:acyclic-across-threads}
  Let $\pi_1=x.\fs$ and $\pi_2=y.\fs$ such that $s_1(x)=s_2(y)$.
  If $\pi_1$ is acyclic in $s_1,h$ then so is $\pi_2$ in $s_2,h$.
\end{lemma}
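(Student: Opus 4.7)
The plan is to reduce the claim directly to Lemma~\ref{lem:parallel-paths}, which already establishes that prefixes of $\pi_1$ and $\pi_2$ resolve to the same addresses in $s_1,h$ and $s_2,h$ respectively, given $s_1(x)=s_2(y)$. Acyclicity is a property phrased purely in terms of the values $\lval{\cdot}{\_,h}$ of prefixes of the path together with their image under $h$, so once those values coincide across the two threads, acyclicity transfers verbatim.

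Concretely, unfolding Definition of acyclicity, I need to show that $\lval{\pi_2}{s_2,h}$ is defined and that for every pair of prefixes $\pi''_2 \prefix \pi'_2 \prefix \pi_2$, $h(\lval{\pi'_2}{s_2,h}) \neq \locn{\lval{\pi''_2}{s_2,h}}$. Since $\pi_2=y.\fs$, any such prefix has the form $y.\gs$ with $\gs \prefix \fs$; let $\pi'_2 = y.\gs'$ and $\pi''_2 = y.\gs''$ where $\gs'' \prefix \gs' \prefix \fs$. The corresponding prefixes of $\pi_1=x.\fs$ are $\pi'_1 = x.\gs'$ and $\pi''_1 = x.\gs''$. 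From $s_1(x)=s_2(y)$ and the fact that $\pi_1$ is acyclic (hence $\lval{\pi_1}{s_1,h}$ is defined, and so is $\lval{\pi'_1}{s_1,h}$ for every prefix), Lemma~\ref{lem:parallel-paths} gives both $\lval{\pi'_1}{s_1,h} = \lval{\pi'_2}{s_2,h}$ and $\lval{\pi''_1}{s_1,h} = \lval{\pi''_2}{s_2,h}$, and in particular guarantees that these are defined on the $s_2,h$ side as well.

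Substituting these equalities into the acyclicity hypothesis for $\pi_1$ in $s_1,h$, which reads $h(\lval{\pi'_1}{s_1,h}) \neq \locn{\lval{\pi''_1}{s_1,h}}$, yields exactly $h(\lval{\pi'_2}{s_2,h}) \neq \locn{\lval{\pi''_2}{s_2,h}}$, as required. Since the pair $\pi''_2 \prefix \pi'_2 \prefix \pi_2$ was arbitrary, $\pi_2$ is acyclic in $s_2,h$.

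There is no real obstacle here: the whole proof is a direct transport of the acyclicity condition from one thread to the other via the prefix-by-prefix address equality supplied by Lemma~\ref{lem:parallel-paths}. The only mild subtlety is making sure that $\lval{\pi_2}{s_2,h}$ (and all its prefixes) are defined, which is also immediately handed over by Lemma~\ref{lem:parallel-paths} once we observe that $\pi_1$ being acyclic forces $\lval{\pi_1}{s_1,h}$ and every prefix thereof to be defined.
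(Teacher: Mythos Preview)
Your proposal is correct and follows essentially the same approach as the paper's own proof: both invoke Lemma~\ref{lem:parallel-paths} to identify $\lval{x.\gs}{s_1,h}$ with $\lval{y.\gs}{s_2,h}$ for every $\gs\prefix\fs$, and then substitute these equalities into the acyclicity inequality for $\pi_1$ to obtain it for $\pi_2$. Your version is slightly more explicit about the definedness of $\lval{\pi_2}{s_2,h}$, but otherwise the arguments coincide.
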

\begin{proof}
  Since $\pi_1$ is acyclic in $s_1,h$ we know that for any $\fs''\prefix\fs\prefix\fs$,
  $h(\lval{x.\fs'}{s_1,h})\neq\locn{\lval{x.\fs''}{s_1,h}}$.
  By Lemma~\ref{lem:parallel-paths}, we know that
  $\lval{x.\fs'}{s_1,h}=\lval{y.\fs'}{s_2,h}$ and $\lval{x.\fs''}{s_1,h}=\lval{y.\fs''}{s_2,h}$,
  thus $h(\lval{y.\fs'}{s_2,h})\neq\locn{\lval{y.\fs''}{s_2,h}}$, as required.
\end{proof}

\begin{atheorem}{\ref{thm:no-fp}}
  Let $C_1,C_2$ be two programs such that $\asem{C_i} \bota= \angled{W_i,\_,A_i}$.
   Let $\pi_i$ be two paths and $c_i$ two commands such that $\angled{c_i,L_i}\in A_i$, and,
  \begin{itemize}
    \item $\pi_1=v_1.\fs$ and $\pi_2=v_2.\fs$ (ie., the field sequences are the same);
    \item $\forall \e_i\in W_i.\, \e_i\not\properprefix \pi_i$ (for $i\in\set{1,2}$);
    \item $c_1=(\pi_1:=x)$ and $c_2=(\pi_2:=y)$, or, $c_1=(\pi_1:=x)$ and $c_2=(y:=\pi_2)$;
    \item $L_1+L_2 \le 1$.
  \end{itemize}
  Then there exists a well-behaved state $\astate=\angled{(s_1,s_2),h,(0,0)}$  and a
  concurrent trace $\tr^{\parallel} \in \sem{C_1\parallel C_2}~\astate$ that races.
\end{atheorem}
\begin{proof}
\Wlog~we take $L_1=0$, as at least one of $L_1,L_2$ must be zero.

Given the two paths share the same field accesses ($\fs$), Lemma~\ref{lem:starting-state} provides a
a well-behaved state $\astate=\angled{(s_1,s_2),h,(0,0)}$ such that
$\lval{\pi_1}{s_1,h}=\lval{\pi_2}{s_2,h}$, and where $\pi_i$ is disconnected and acyclic in
$s^i,h$.

By assumption, $\asem{C_1} \bota= \angled{W_1,\_,A_1}$, thus by Theorem~\ref{thm:completeness},
$\angled{W_1,\_,A_1} = \alpha(\sem{C_1} \angled{s_1, h, 0})$.
Thus Lemma~\ref{lem:path-access-existence} implies there is a trace in $\sem{C_1} \angled{s_1, h, 0}$
and a shortest prefix $\tr_1$ of that trace, whose last state is $\angled{c'_1,\_,\_,L_1}$, and:
\begin{itemize}
  \item $c'_1$, $c_1$ are both stores or loads;
  \item $\exec~\ab{\tr_1}~(\bota,\epsilon)=(\angled{\_,\_,A'_1},\bar{\theta_1})$ where $\pi_1\in A'_1$;
  \item a path $\pi'_1$ such that $c' = (x:=\pi'_1)$ or $c'=(\pi'_1:=x)$,
  and $\set{\pi_1}=\subst{\set{\pi'_1}}{\bar{\theta_1}}$.
\end{itemize}

Define the following functions. Each maps a single-thread state to a two-thread trace,
with the given parameter as the stack of the "other" thread, and zero as its lock context.
\[
\begin{array}{r@{\;}l}
\mathsf{left}(\angled{c, S, h, L}, S')  & = \angled{c\parallel\epsilon, (S, S'), (L,0)} \\
\mathsf{right}(\angled{c, S, h, L}, S')  & = \angled{\epsilon\parallel c, (S', S), (0,L)}
\end{array}
\]

We construct a concurrent trace $\tr^{\parallel}_1$ and its last state as follows:
\[\begin{array}{r@{\;}l}
\tr^{\parallel}_1 & \eqdef \left[ \mathsf{left}(\astate_i, s_2) \mid \astate_i \text{ is the $i$-th state in $\tr_1$}, 1\le i< |\tr_1| \right] \\[3pt]
\llast{\tr^{\parallel}_1} & = \angled{\_,(s'_1 \cons S'_1,s_2),h',(L'_1,0))}
\end{array}\]
that is, $C_1$ progresses along $\tr_1$, but \emph{without executing the instruction $c'_1$},
while $C_2$ never takes a step.
Observe that by construction, $L'_1=0$, as $c'_1$ is not an $\cmdunlock$ instruction (thus $L'_1=L_1$),
and by assumption $L_1=0$.

We show that the state $s_2,h'$ is well-behaved. We know that $s_1,h'$ is well-behaved, by
Lemma~\ref{lem:well-behavedness}. Thus, for any address
$\alpha\in\dom{h'}$, $h'(\alpha)\in\locn{h'}$, taking care of the second clause of well-behavedness.
For the first clause, observe that $s_2,h$ is well-behaved thus for any variable $x$,
$s_2(x)\in\locn{h}$.  But as $\dom{h}\subseteq\dom{h'}$ by Lemma~\ref{lem:heap-monotonicity},
it follows that $\locn{h}\subseteq \locn{h'}$, thus $s_2(x)\in\locn{h'}$ too.

Thus, we have $\asem{C_2} \bota= \angled{W_2,\_,A_2} = \alpha(\sem{C_2} \angled{s_2, h', 0})$.
By identical reasoning as with $C_1$ we obtain a trace in $\sem{C_2} \angled{s_2, h', 0}$
and a shortest prefix trace $\tr_2$ of that,
whose last state is $\angled{c'_2,\_,\_,L_2}$ where
\begin{itemize}
  \item $c'_2$, $c_2$ are both stores or loads;
  \item $\exec~\ab{\tr_2}~(\bota,\epsilon)=(\angled{\_,\_,A'_2},\bar{\theta_2})$ where $\pi_2\in A'_2$;
  \item a path $\pi'_2$ such that $c' = (x:=\pi'_2)$ or $c'=(\pi'_2:=x)$,
  and $\set{\pi_2}=\subst{\set{\pi'_2}}{\bar{\theta_2}}$.
\end{itemize}

We define $\tr^{\parallel}_2$ and its last state as follows:
\[\begin{array}{r@{\;}l}
\tr^{\parallel}_2 & \eqdef \left[ \mathsf{right}(\astate_i,s'_1\cons S'_1) \mid \text{where $\astate_i$ is the $i$-th state in $\tr_2$}, 1\le i < |\tr_2| \right] \\
\llast{\tr^{\parallel}_2} & = \angled{\_,(s'_1 \cons S'_1,s'_2\cons S'_2),h'',\_)}
\end{array}\]

Finally, we define
\[
\tr^{\parallel}\eqdef\tr^{\parallel}_1\append\tr^{\parallel}_2
\]
We argue that
$\tr^{\parallel}$ is in $\sem{C_1\parallel C_2}~\angled{(s_1,s_2),h,(0,0)}$.
This is straightforward by our concurrent trace semantics (Figure~\ref{fig:ctraces})
and the observation that as $\tr^{\parallel}_1$ ends at a state with lock context $(0,0)$,
the side-condition for lock commands in $\tr^{\parallel}_2$ is satisfied by construction.
\footnote{Figure~\ref{fig:ctraces} defines a prefix-closed semantics so we need not concern ourselves
with what happens after the race.}

We now show that $\lval{\pi'_1}{s'_1,h''}=\lval{\pi'_2}{s'_2,h''}$, establishing that the next
instruction following $\tr^{\parallel}$ will necessarily access the same address.
  By construction we have
  \begin{equation}
    \begin{array}{r@{\,}l}
      \lval{\pi_1}{s_1,h} & =\lval{\pi_2}{s_2,h}\\
      \pathimg{h}{s_1}{\pi_1} & = \pathimg{h}{s_2}{\pi_2}\\
      \multicolumn{2}{c}{\text{$\pi_i$ is acyclic and disconnected in $s_i,h$, for $i\in\set{1,2}$}}
    \end{array}
    \label{eq:pi1-pi2}
  \end{equation}

  By assumption, $\forall \e_i\in W_i.\, \e_i\not\properprefix \pi_i$.
  We apply Lemma~\ref{lem:stability-on-traces} on $\tr_1$ to obtain
  that
  \begin{equation}
    \pi_1 \text{ is preserved from } s_1,h \text{ to } s_1,h' \label{eq:pi-h-hprime}
  \end{equation}
  Lemma~\ref{lem:call-stack} gives us
  \begin{equation}
    \lval{\pi_1}{s_1,h'}=\lval{\pi'_1}{s'_1,h'} \label{eq:pi1-endpoint}
  \end{equation}

  From (\ref{eq:pi-h-hprime}) we have
  $\forall\alpha\in\dom{\pathimg{h}{s_1}{\pi_1}}\setminus\{\lval{\pi_1}{s_1,h}\}.\, h(\alpha)=h'(\alpha)$.
  By (\ref{eq:pi1-pi2}) we have $\pathimg{h}{s_1}{\pi_1}=\pathimg{h}{s_2}{\pi_2}$ and $\lval{\pi_1}{s_1,h}=\lval{\pi_2}{s_2,h}$,
  thus $\forall\alpha\in\dom{\pathimg{h}{s_2}{\pi_2}}\setminus\{\lval{\pi_2}{s_2,h}\}.\, h(\alpha)=h'(\alpha)$.
  In addition, by (\ref{eq:pi1-pi2}), $\pi_2$ is acyclic in $s_2,h$.
  Clearly, $\dom{h}\subseteq\dom{h'}$ by Lemma~\ref{lem:heap-monotonicity}.
  Thus Lemma~\ref{lem:other-thread-preserved} gives
  \begin{equation}
    \lval{\pi_2}{s_2,h}=\lval{\pi_2}{s_2,h'} \label{eq:pi2-equal}
  \end{equation}

  The state $s_2,h'$ is well-behaved as shown above.
  Since $\pi_1$ is acyclic in $s_1,h'$ (\ref{eq:pi-h-hprime}),
  by Lemma~\ref{lem:acyclic-across-threads}, $\pi_2$ is also acyclic in $s_2,h'$.
  Observe that $\dom{\pathimg{h}{s_2}{\pi_2}}=\dom{\pathimg{h}{s_1}{\pi_1}}$ (\ref{eq:pi1-pi2}),
  $\dom{\pathimg{h}{s_1}{\pi_1}}=\dom{\pathimg{h'}{s_1}{\pi_1}}$ (\ref{eq:pi-h-hprime}), and
  by Lemma~\ref{lem:parallel-paths} and the fact that $s_2(\rootvar{\pi_2})=s_1(\rootvar{\pi_1})$,
  \begin{equation}
    \pathimg{h'}{s_1}{\pi_1}=\pathimg{h'}{s_2}{\pi_2} \label{eq:pi1-pi2-hprime}
  \end{equation}
  thus $\dom{\pathimg{h}{s_2}{\pi_2}}=\dom{\pathimg{h'}{s_2}{\pi_2}}$.
  Lemma~\ref{lem:disconnected-across-threads} applies, therefore $\pi_2$ is disconnected in $s_2,h'$.
  Thus,
  \begin{equation}
    \pi_2 \text{ is preserved from } s_2,h \text{ to } s_2,h' \label{eq:pi2-h-hprime}
  \end{equation}

  This means we can apply Lemma~\ref{lem:stability-on-traces} to $\tr_2$ and obtain that
  \begin{equation}
    \pi_2 \text{ is preserved from } s_2,h' \text{ to } s_2,h'' \label{eq:pi2-hprime-hprimeprime}
  \end{equation}
  and Lemma~\ref{lem:call-stack} yields
  \begin{equation}
    \lval{\pi_2}{s_2,h'}=\lval{\pi'_2}{s'_2,h''} \label{eq:pi2-endpoint}
  \end{equation}


Chaining (\ref{eq:pi1-endpoint}), (\ref{eq:pi-h-hprime}), (\ref{eq:pi1-pi2}), (\ref{eq:pi2-equal})
and (\ref{eq:pi2-endpoint}),
we conclude $\lval{\pi'_1}{s'_1,h''}=\lval{\pi'_2}{s'_2,h''}$, thus establishing the
existence of a race as per Definition~\ref{def:race}.
\end{proof}

\end{document}